\title{Quantitative Games under Failures}
\author[1]{Thomas Brihaye}
\author[2]{Gilles Geeraerts}
\author[1]{Axel Haddad}
\author[2]{Benjamin Monmege}
\author[2]{Guillermo A. P\'erez\thanks{Author supported by F.R.S.-FNRS
	fellowship.}}
\author[1]{Gabriel Renault}
\affil[1]{Universit\'e de Mons, Belgium,
\texttt{\{thomas.brihaye,axel.haddad,gabriel.renault\}@umons.ac.be}}
\affil[2]{Universit\'e libre de Bruxelles, Belgium,
\texttt{\{gigeerae,benjamin.monmege,gperezme\}@ulb.ac.be}}
\authorrunning{T. Brihaye, G. Geeraerts, A. Haddad, B.
Monmege, G. A. P\'erez, G. Renault}
\subjclass{F.1.1 Automata; D.2.4 Formal methods}
\keywords{Quantitative games, Verification, Synthesis, Game theory}
\begin{document}


\usetikzlibrary{positioning,fit,arrows,automata,calc,shapes} 
\tikzset{-latex,>=stealth',shorten >=1pt,shorten <=1pt,auto,node distance=1cm,
every loop/.style={looseness=6},
initial text={},
every fit/.style={draw,densely dotted,rectangle},
el/.style={font=\scriptsize},
squircle/.style={draw,rounded corners=2pt},
inner sep=1mm,
loopright/.style={loop,looseness=6,out=-45, in=45},
loopleft/.style={loop,looseness=6,out=135, in=225},
loopabove/.style={loop,looseness=6,out=45, in=135},
loopbelow/.style={loop,looseness=6,out=-135, in=-45},
}

\newtheorem{proposition}[theorem]{Proposition}
\newtheorem{claim}[theorem]{Claim}

\providecommand\scalemath{}
\renewcommand\scalemath[2]{\scalebox{#1}{\mbox{\ensuremath{\displaystyle #2}}}}

\renewclass{\AP}{APTIME}
\renewclass{\P}{PTIME}
\renewclass{\EXP}{EXPTIME}

\newcommand{\adam}{\textnormal{Maximiser}\xspace}
\newcommand{\eve}{\textnormal{Minimiser}\xspace}
\newcommand{\adamshort}{\textit{Max}\xspace}
\newcommand{\eveshort}{\textit{Min}\xspace}

\newcommand{\Min}{\ensuremath{\mathrm{Min}}\xspace}
\newcommand{\Max}{\ensuremath{\mathrm{Max}}\xspace}
\newcommand{\outcomes}{\textit{Play}\xspace}
\def\saboteur{\textnormal{Saboteur}\xspace}
\def\runner{\textnormal{Runner}\xspace}
\def\QSG{{\sf QSG}\xspace}
\def\QSGs{{\sf QSG}s\xspace}

\newcommand{\ThPr}{\textit{ThPr}\xspace}
\newcommand{\SPr}{\textit{SPr}\xspace}
\newcommand{\ESPr}{\textit{ESPr}\xspace}

\newcommand\uppervalue{\overline{\Val}}
\newcommand\lowervalue{\underline{\Val}}
\newcommand\Parity{\textit{Parity}}
\newcommand\Col{Col}

\newcommand{\Veve}{V_{\eveshort}}
\newcommand{\Vadam}{V_{\adamshort}}
\newcommand{\VMin}{V_{\Min}}
\newcommand{\VMax}{V_{\Max}}
\newcommand{\VSem}{V^{c}}
\newcommand{\wSem}{w}
\newcommand{\VMinSem}{V_{\Min}^{c}}
\newcommand{\VMaxSem}{V_{\Max}^{c}}
\newcommand{\edgesSem}{E^{c}}
\newcommand{\InitSem}{v_I^{c}}
\newcommand{\Val}{\ensuremath{\mathbf{Val}}}
\newcommand{\Valstatic}{\ensuremath{\mathbf{Val}_{\mathrm{stat}}}\xspace}
\newcommand{\supfun}{\ensuremath{\mathsf{Sup}}\xspace}
\newcommand{\inffun}{\ensuremath{\mathsf{Inf}}\xspace}
\newcommand{\lsupfun}{\ensuremath{\mathsf{LimSup}}\xspace}
\newcommand{\linffun}{\ensuremath{\mathsf{LimInf}}\xspace}
\newcommand{\mpfun}{\ensuremath{\mathsf{Avg}}\xspace}
\newcommand{\discfun}[1]{\ensuremath{\mathsf{DS}_{#1}}\xspace}
\newcommand{\dsfun}{\ensuremath{\mathsf{DS}}\xspace}
\newcommand{\StratAdam}{\Sigma_{\adamshort}}
\newcommand{\StratEve}{\Sigma_{\eveshort}}
\newcommand{\StratMax}{\Sigma_{\Max}}
\newcommand{\StratMin}{\Sigma_{\Min}}
\newcommand{\stratMax}{\sigma}
\newcommand{\stratMin}{\rho}
\newcommand{\StratSaboteur}{\Sigma_{\mathrm{Sab}}}
\newcommand{\StratRunner}{\Sigma_{\mathrm{Run}}}
\newcommand{\Plays}{\ensuremath{\mathrm{Plays}}\xspace}
\newcommand{\Prefs}{\ensuremath{\mathrm{Prefs}}\xspace}
\newcommand{\PrefsSaboteur}{\ensuremath{\mathrm{Prefs}_{\mathrm{Sab}}}\xspace}
\newcommand{\PrefsRunner}{\ensuremath{\mathrm{Prefs}_{\mathrm{Run}}}\xspace}
\newcommand{\distr}{\delta}

\newcommand{\valuation}{\mathrm{val}}
\newcommand{\nplayer}{\mathrm{Player}}

\newcommand\rouge[1]{{\color{red} #1}}
\newcommand\defin[1]{\textit{\textbf{#1}}}

\newcommand\N{\mathbb{N}}
\newcommand\Z{\mathbb{Z}}
\renewcommand\R{\mathbb{R}}

\newcommand\graph{G}
\newcommand\edges{E}
\newcommand\vertices{V}
\newcommand\weights{w}
\newcommand\game{\ensuremath{\mathcal{G}}\xspace}
\newcommand\egame{\ensuremath{\mathcal{E}}\xspace}
\def\sem#1{\ensuremath{\llbracket#1\rrbracket}}
\newcommand\gameSem{\sem{\game}\xspace}
\newcommand\gameStat[1]{\mathcal{G}_{#1}}
\newcommand\budget{B}
\newcommand\buDistA{BD_1}
\newcommand\buDistB{BD_2}
\newcommand\buRel{\triangleright}
\newcommand\odeg{\ensuremath{ \textsf{deg}^+ }}
\renewcommand\st{\:\mid\:}

\newcommand\clause{\textit{Cl}}
\newcommand\Ver{\textit{Ver}}

\renewcommand\ge{\geqslant}
\renewcommand\geq{\geqslant}
\renewcommand\le{\leqslant}
\renewcommand\leq{\leqslant}

\maketitle

\begin{abstract}
  We study a generalisation of sabotage games, a model of dynamic
  network games introduced by van Benthem~\cite{Van02}. The original
  definition of the game is inherently finite and therefore does not
  allow one to model infinite processes. We propose an extension of
  the sabotage games in which the first player (\runner) traverses an
  arena with dynamic weights determined by the second player
  (\saboteur). In our model of \emph{quantitative sabotage games},
  \saboteur is now given a budget that he can distribute amongst the
  edges of the graph, whilst \runner attempts to minimise the quantity
  of budget witnessed while completing his task. We show that, on the
  one hand, for most of the classical cost functions considered in the
  literature, the problem of determining if \runner has a strategy to
  ensure a cost below some threshold is \EXP-complete. On the other
  hand, if the budget of \saboteur is fixed a priori, then the problem
  is in $\P$ for most cost functions. Finally, we show that
  restricting the dynamics of the game also leads to better
  complexity.
\end{abstract}

\clearpage


\section{Introduction}

Two-player games played on graphs are nowadays a well-established
model for systems where two antagonistic agents interact. In
particular, they allow one to perform controller
synthesis~\cite{ag11}, when one of the players models the controller,
and the second plays the role of an evil environment. Quantitative
generalisations (played on weighted graphs) of these models have attracted
much attention in the last decades~\cite{em79,jurdzinski98,cdh10} as
they allow for a finer analysis of those systems.

In this setting, most results assume that the arena (i.e., the graph)
on which the game is played does not change during the game. There are
however many situations where this restriction is not natural, at
least from a modelling point of view. For instance, Gr{\"{u}}ner
\textit{et al.} \cite{grt13} model connectivity problems in
\emph{dynamic} networks (i.e., subject to failure and restoration)
using a variant of \emph{sabotage games} -- a model originally proposed
by van Benthem \cite{Van02} -- to model \emph{reachability problems} in a
network prone to errors. A sabotage game is played on a directed
graph, and starts with a token in an initial vertex. Then, \runner and
\saboteur (the two players of the game) play in alternation: \runner
moves the token along one edge and \saboteur is allowed to remove one
edge. \runner wins the game if he reaches a target set of
vertices. In~\cite{LodRoh03}, it is shown that deciding the existence
of a winning strategy for \runner is \PSPACE-complete.

In those sabotage games, errors are regarded as unrecoverable
failures. In practice, this hypothesis might be too strong. Instead,
one might want to model the fact that certain uncontrollable events
incur additional costs (modelling delays, resource usage\ldots), and
look for strategies that allow one to fulfil the game objective
\emph{at a minimal cost}, whatever the occurrence of uncontrollable
events. For instance, if the graph models a railway network, the
failure of a track will eventually be fixed, and, in the meantime,
trains might be slowed down on the faulty portion or diverted,
creating delays in the journeys. It is thus natural to consider
\emph{quantitative} extensions of sabotage games, where \saboteur
controls the price of the actions in the game. This is the aim of the
present paper.

More precisely, we extend sabotage games in two directions. First, we
consider games played on \emph{weighted} graphs. \saboteur is allotted
an integral budget $B$ that he can distribute (dividing it into
integral parts) on the edges of the graph, thereby setting their
weights. At each turn, \saboteur can change this distribution by
moving $k$ units of budget from an edge to another edge (for
simplicity, we restrict ourselves to $k=1$ but our results hold for
any $k$). Second, we relax the inherent finiteness of sabotage games
(all edges will eventually be deleted), and consider infinite horizon
games (i.e., plays are now infinite).  In this setting, the goal of
\runner is to minimise the cost defined by the sequence of weights of
edges visited, with respect to some fixed cost function ($\inffun$,
$\supfun$, $\linffun$, $\lsupfun$, average or discounted-sum), while
\saboteur attempts to maximise the same cost. We call these games
\emph{quantitative sabotage games} (\QSG, for short).

Let us briefly sketch one potential application of our model, showing
that they are useful to perform synthesis in a dynamic environment. Our
application is borrowed from Suzuki and Yamashita~\cite{ys10} who have
considered the problem of \emph{motion planning} of multiple mobile
robots that interact in a finite space. In essence, each robot
executes a ``Look-Compute-Move'' cycle and should realise some
specification (that we could specify using LTL, for instance). For
simplicity, assume that at every observation (Look) phase, at most one
other robot has moved.  Clearly every motion phase (Move) will require
different amounts of time and energy depending on the location of the
other robots. We can model the interaction of each individual robot
against all others using a \QSG where \runner is one robot, \saboteur
is the coalition of all other robots, and the budget is equal to the
number of robots minus $1$. This model allows one to answer meaningful
questions such as `\emph{what is, in the worst case, the average delay
  the robot incurs because of the dynamics of the system?}', or 
`\emph{what is the average amount of additional energy
required because of the movements of the other robots?}' using
appropriate cost functions.

As a second motivational example, let us recall the motivation of the original
Sabotage Game: consider a situation in which you need to find your way between
two cities within a railway network where a malevolent demon starts cancelling
connections? This is called the \emph{real Travelling Salesman Problem} by
Benthem~\cite{Van02}. However, in real life, railway companies have contracts
with infrastructure companies which ensure that failures in the railway network
are repaired withing a given amount of time (e.g. a service-level agreement). In
this case, it is better to consider delays instead of absolute failures in the
network. Further, salesmen do not usually have one single trip in their whole
carriers. For modelling purposes, one can in fact assume they never stop
travelling. In this setting, \QSGs can be used to answer the question:
`\emph{what is, in the worst case, the average delay time incurred by the
salesman\emph}'? Our model can be used to treat the same questions for other
networks and not just railway networks.

\subparagraph{Related Works \& Contributions.}  Variations of the
original sabotage games have been considered by students of van
Benthem. In~\cite{kurzen11}, the authors have considered changing the
\emph{reachability objective} of \runner to a \emph{safety objective},
and proved it is \PSPACE-complete as well. They also consider a
co-operative variation of the game which, not surprisingly, leads to a
lower complexity: \NL-complete.  In~\cite{sevenster06}, an asymmetric
imperfect information version of the game is studied---albeit, under
the guise of the well-known parlor game \emph{Scotland Yard}---and
shown to be \PSPACE-complete.  We remark that although the latter
version of sabotage games already includes some sort of dynamicity in
the form of the \emph{Scotland Yard} team moving their pawns on the
board, both of these studies still focus on inherently finite versions
of the game.

We establish that \QSGs are \EXP-complete in general. Our approach is to prove
the result for a very weak problem on \QSGs, called the \emph{safety problem},
that asks whether \runner can avoid \emph{ad vitam \ae{}ternam} edges with
non-zero budget on it. We remark that although the safety problem is related to
cops and robbers games~\cite{ag11,gr95}, we were not able to find \EXP-hard
variants that reduce easily into our formalism.\footnote{We compare to related
	works on cops and robbers games in Appendix~\ref{CopsAndRobbers}.}
The general problem being \EXP-complete, we consider the case
where the budget is fixed instead of left as an input of the
problem (see Corollary~\ref{cor:fixed}). We also consider restricting
the behaviour of \saboteur and define a variation of our \QSGs in
which \saboteur is only allowed to choose an initial distribution of
weights but has to commit to it once he has fixed it. We call this the
\emph{static} version of the game. For both restrictions, we show that
tractable algorithms exist for some of the cost functions we
consider.
A summary of the complexity results we establish in this work is shown
in Table~\ref{tbl:summary}. In Section~\ref{sec:conclusions}, we
comment on several implications of the complexity bounds proved in
this work.

\begin{table}[tbp]
  \caption{Complexity results for quantitative sabotage games}
  \label{tab:results}
  \centering
  \begin{tabular}{|l|c|c|c|}
    \hline
    & \QSG & static \QSG & fixed budget \QSG\\\hline\hline
    \inffun,~\linffun & $ \in \EXP$ & $\in\P$ & $\in\P$ \\\hline
    \supfun,~\lsupfun,~\mpfun & \EXP-c & \coNP-c & $\in\P$  \\\hline
    \dsfun & \EXP-c & \coNP-c & $\in \NP\cap\coNP$ \\\hline
  \end{tabular}
  \label{tbl:summary}
\end{table}


\section{Quantitative sabotage games}

Let us now formally define quantitative sabotage games (\QSG). We start with
the definition of the cost functions we will consider, then give the
syntax and semantics of \QSG.

\subparagraph{Cost functions.} A \emph{cost function}
$f\colon \mathbb{Q}^\omega \to \mathbb{R}$ associates a real number to
a sequence of rationals $u=(u_i)_{i\ge 0}\in\mathbb{Q}^\omega$.
The six classical cost functions that we
consider are
\begin{itemize}
	\item $\inffun(u) = \inf\{ u_i \st i \ge 0\}$;
	\item $\supfun(u) = \sup\{ u_i \st i \ge 0\}$;
	\item $\linffun(u) = \liminf_{n \rightarrow \infty}\{ u_i \st i \ge n\}$;
	\item $\lsupfun(u) = \limsup_{n \rightarrow \infty}\{ u_i \st i \ge n\}$;
	\item $\mpfun(u) = \liminf_{n \to \infty} \frac{1}{n} \sum_{i = 0}^n u_i$,
		which stands for the average cost (also called \emph{mean-payoff}
		in the literature); and
	\item $\discfun{\lambda}(u) = \sum_{i=0}^\infty \lambda^i \cdot u_i$,
		(with $0<\lambda<1$), stands for discounted-sum.
\end{itemize}
In the following, we let $\dsfun=\{ \discfun{\lambda} \mid 0<\lambda<1\}$.

\subparagraph{Syntax.} As sketched in the introduction,
\emph{quantitative sabotage games} are played by \runner and \saboteur
on a directed weighted graph, called the \emph{arena}.  A~play
alternates between \runner moving the token along the edges and
\saboteur modifying the weights.  We consider that \saboteur has a
fixed integer budget~$B$ that he can distribute on edges, thereby
setting their weights (which must be integer values).  Formally, for a
finite set $E$
and a budget $B\in \mathbb N$, 
$\Delta(E,B)$ denotes the set of all \emph{distributions} of budget
$B$ on $E$, where a distribution is a function
$\distr\colon E \to \{0,1,\ldots,B\}$ such that
$\sum_{e \in E} \distr(e) \le B$ (the last constraint is an inequality
since the whole budget need not be distributed on $E$). Then, a
\emph{quantitative sabotage game} is a tuple
$\game = (V, E, B, v_I, \distr_I, f)$, where $(V,E)$ is a directed
graph, $B \in \mathbb{N}$ is the budget of the game, $v_I \in V$ is
the initial vertex, $\distr_I\in\Delta(E,B)$ is the initial
distribution of the budget, and $f$ is a cost function.  We assume,
without loss of generality, that there are no deadlocks in $(V,E)$,
i.e., for all $v\in V$, there is $v'\in V$ such that $(v,v')\in E$.
In the following, we may alternatively write $\Delta(\game)$ for
$\Delta(E,B)$ when $\game$ is a \QSG with set of edges $E$ and budget
$B$.

\subparagraph{Semantics.} To define the semantics of a \QSG $\game$, we
first formalise the possible redistributions of the budget by
\saboteur. We choose to restrict them, reflecting some physical
constraints: \saboteur can move at most one unit of weight in-between
two edges.  For $\distr,\distr' \in \Delta(\game)$, we say that
$\distr'$ is a \emph{valid redistribution} from $\distr$, noted
$\distr \buRel \distr'$, if and only if there are $e_1,e_2\in E$ such
that $\distr'(e_1) \in \{\distr(e_1), \distr(e_1) - 1\}$,
$\distr'(e_2) \in \{\distr(e_2), \distr(e_2) + 1\}$, and for all other
edges $e\not\in \{e_1,e_2\}$, $\distr'(e)=\distr(e)$.  Then, a
\emph{play} in a \QSG $\game = (V, E, B, v_I, \distr_I, f)$ is an
infinite sequence $\pi=v_0 \distr_0 v_1 \distr_1 \cdots$ alternating
vertices $v_i\in V$ and budget distributions
$\distr_i\in \Delta(\game)$ such that
\begin{inparaenum}[$(i)$]
\item $v_0 = v_I$;
\item  $\distr_0 = \distr_I$; and
\item for all $i\ge 0$: $(v_i,v_{i+1}) \in E$, and
  $\distr_i \buRel \distr_{i+1}$.
\end{inparaenum}
Let 
$\Prefs_\Delta(\game)$ denote the set of prefixes of plays ending in a
budget distribution, and $\Prefs_V(\game)$ the set of prefixes of
length at least 2 ending in a vertex. We abuse notations and lift cost
functions $f$ to plays letting
$f(v_0 \distr_0 v_1
\distr_1\cdots)=f(\distr_0(v_0,v_1)\distr_1(v_1,v_2)\cdots)$.
A \emph{strategy} of \runner is a mapping
$\rho\colon \Prefs_\Delta(\game)\to V$ such that
$(v_n,\rho(\pi))\in E$ for all
$\pi=v_0 \distr_0 \cdots v_{n} \distr_n\in \Prefs_\Delta(\game)$.  A
strategy of \saboteur is a mapping
$\sigma\colon\Prefs_V(\game)\to \Delta(\game)$ such that
$\distr_{n-1}\buRel\sigma(\pi)$ for all
$\pi= v_0 \distr_0 \cdots v_{n-1}\distr_{n-1}v_n\in\Prefs_V(\game)$.
We denote by $\StratRunner(\game)$ (respectively,
$\StratSaboteur(\game)$) the set of all strategies of \runner
(respectively, \saboteur).  A pair of strategies $(\rho,\sigma)$ of
\runner and \saboteur defines a unique play
$\pi_{\rho,\sigma} = v_0 \distr_0 v_1 \distr_1 \cdots$ such that for
all $i \ge 0$:
\begin{inparaenum}[$(i)$]
\item $v_{i+1}=\rho(v_0 \distr_0\cdots v_i \distr_i)$; and
\item $\distr_{i+1}=\sigma(v_0 \distr_0\cdots v_i \distr_iv_{i+1})$.
\end{inparaenum}

\subparagraph{Values and determinacy.} We are interested in computing the
best value that each player can guarantee no matter how the other
player plays. To reflect this, we define two values of a \QSG $\game$:
the superior value (modelling the best value for \runner)as
\(\overline\Val(\game) := 
\sup_{\stratMax \in \StratSaboteur(\game)}\inf_{\stratMin \in \StratRunner(\game)}
f(\pi_{\stratMin,\stratMax})\),
and the inferior value (modelling the best value for Sabo-teur)
as \(\underline\Val(\game) := \inf_{\stratMin \in \StratRunner(\game)}
\sup_{\stratMax \in \StratSaboteur(\game)}
f(\pi_{\stratMin,\stratMax})\).
It is folklore to prove that
$\underline\Val(\game)\leq \overline\Val(\game)$. Indeed, for the
previously mentioned cost functions, we can prove that \QSGs are
determined, i.e., that $\underline\Val(\game)=\overline\Val(\game)$
for all \QSGs $\game$. This can be formally proved by encoding a \QSG
$\game$ into a quantitative two-player game $\sem\game$ (whose
vertices contain both vertices of $\game$ and budget distributions),
and then using classical Martin's determinacy theorem~\cite{Mar75}, as
formally done in Appendix~\ref{sec:qgames-sem}.
$\underline\Val(\game)=\overline\Val(\game)$ is henceforth called the
\emph{value of $\game$}, and denoted by $\Val(\game)$.

\def\nd#1{\tikz[baseline=-.6ex]{\node[draw,rounded corners]{$#1$};}}
\begin{wrapfigure}[7]{r}{0.2\textwidth}
  \centering
  \begin{tikzpicture}
    \node[draw,rounded corners] (1) at (1,1) {$1$};
    \node[draw,rounded corners] (2) at (1,0) {$2$};
    \node[draw,rounded corners] (3) at (0,0.5) {$3$};
    
    \draw[-latex] (1) edge (2);
    \draw[-latex] (2) edge[bend right=40] (1);
    \draw[-latex] (2) edge (3);
    \draw[-latex] (3) edge (1);
  \end{tikzpicture}
  \caption{A \QSG}
  \label{fig:example}
\end{wrapfigure} 
\subparagraph{Example.} Consider the simple \QSG $\game$ in
\figurename~\ref{fig:example}, where the budget of \saboteur is $B=4$,
and the cost function is $\mpfun$. We claim that whatever the initial
configuration, $\Val(\game)=2$. Indeed, consider the strategy of
\saboteur that consists in eventually putting all the budget on the
edge $(\nd{1},\nd{2})$ (i.e., letting $\distr(\nd{1},\nd{2})=4$ and
$\distr(e)=0$ for all other edges $e$), and then playing as follows:
whenever \runner reaches $\nd{2}$, move one unit of budget from
$(\nd{1},\nd{2})$ to $(\nd{2},\nd{3})$; if \runner moves to $\nd{3}$,
move the unit of budget from $(\nd{2},\nd{3})$ to $(\nd{3},\nd{1})$;
and when \runner moves back to $\nd{1}$, move all the budget back on
$(\nd{1},\nd{2})$, by consuming one unit either from $(\nd{2},\nd{3})$
or from $(\nd{3},\nd{1})$.  Let us call this strategy
$\overline{\sigma}$. Since we consider the average cost, only the
long-term behaviour of \runner is relevant to compute the cost of a
play. So, as soon as \saboteur has managed to reach a distribution
$\distr$ such that $\distr(\nd{1},\nd{2})=4$, the only choices for
\runner each time he visits \nd{1} are either to visit the
\nd{1}--\nd{2}--\nd{3}--\nd{1} cycle, or the \nd{1}--\nd{2}--\nd{1}
cycle. In the former case, \runner traverses $3$ edges and pays
$4+1+1=6$, hence an average cost of $\frac{6}{3}=2$ for this cycle. In
the latter, he pays an average of $\frac{4+0}{2}=2$ for the
cycle. Hence, whatever the strategy $\rho$ of \runner, we have
$\mpfun(\pi_{\overline{\sigma},\rho})=2$, which proves that
$\underline\Val(\game)\geq 2$. One can check that the strategy
$\overline\rho$ of \runner consisting in always playing the
\nd{1}--\nd{2}--\nd{3}--\nd{1} cycle indeed guarantees cost $2$,
proving that $\overline\Val(\game)\leq 2$. This proves that the value
$\Val(\game)$ of the game is $2$.


\section{Solving quantitative sabotage games}\label{sec:solving}

Given a \QSG, our main objective is to determine whether \runner can
play in such a way that he will ensure a cost at most $T$, no matter
how \saboteur plays, and where $T$ is a given threshold. This amounts
to determining whether $\Val(\game) \leq T$.  Thus, for a cost function $f$,
the \textsc{Threshold problem with cost function $f$} consists in
determining whether $\Val(\game) \leq T$, given a \QSG $\game$ with
cost function $f$ and a non-negative threshold $T$.
When $f=\dsfun$, we assume that the discount factor $\lambda$ is part
of the input. If we want it to be a parameter of the problem (and not
a part of the input), we consider $f=\discfun{\lambda}$. Our main
contribution is to characterise the complexity of the threshold
problem for all the cost functions introduced before, as summarised in
the following theorem:
\begin{theorem}\label{thm:exptime}
  For cost functions $\supfun$, $\lsupfun$, $\mpfun$, $\dsfun$ and
  $\discfun{\lambda}$, the threshold problem over \QSG{s} is
  \EXP-complete; for $\inffun$ and $\linffun$, it is in \EXP.
\end{theorem}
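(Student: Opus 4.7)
The plan is to establish the upper bound by reducing a \QSG $\game$ to a classical quantitative two-player game $\sem{\game}$ played on an exponentially-sized graph, then invoking standard algorithms on $\sem{\game}$. The vertex set of $\sem{\game}$ is essentially $V\times \Delta(\game)$, partitioned between \runner-owned and \saboteur-owned vertices so that one round of the \QSG unfolds as: from a \runner-owned vertex $(v,\distr)$, \runner picks an outgoing edge $(v,v')$ and moves to a \saboteur-owned intermediate vertex that remembers both $v'$ and the weight $\distr(v,v')$ just committed; from there \saboteur picks any $\distr'$ with $\distr \buRel \distr'$ and we arrive at the new \runner-owned vertex $(v',\distr')$. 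Assigning the remembered weight $\distr(v,v')$ to the \saboteur-owned step ensures that the cost of an infinite play in $\sem{\game}$ coincides with $f$ applied to the corresponding play of $\game$.

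The size of $\sem{\game}$ is bounded by $|V|\cdot (B+1)^{|E|}$ and maximum weights by $B$, both single-exponential in the input size (with $B$ encoded in binary). For $\inffun$, $\linffun$, $\supfun$ and $\lsupfun$ the threshold problem on $\sem{\game}$ reduces to a polynomial-time reachability or safety computation, yielding \EXP{} overall. For $\mpfun$ I would invoke a pseudo-polynomial mean-payoff algorithm (e.g.\ Zwick--Paterson), whose running time is polynomial in $|V|\cdot|E|\cdot W$; since $W\leq B$ and both $|V|$ and $|E|$ of $\sem{\game}$ are single-exponential, the overall bound is still \EXP. For $\discfun{\lambda}$ and $\dsfun$ I would appeal to pseudo-polynomial value iteration for discounted-sum games, which again gives \EXP{} on $\sem{\game}$; the discount factor $\lambda$ being part of the input is harmless since the precision needed depends only polynomially on the binary encoding of $\lambda$ and $B$.

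For the matching \EXP-hardness the plan is to reduce from a canonical \EXP-complete problem---for instance the acceptance problem of alternating polynomial-space Turing machines---into the weakest possible variant of the threshold problem, namely a \emph{safety} one with $\supfun$ and threshold $T=0$ in which \runner must guarantee that only $0$-weighted edges are ever traversed. The key idea is to exploit the exponentially many distributions $\distr\in\Delta(\game)$ as a compact encoding of polynomial-size Turing-machine configurations, so that the local rule $\distr \buRel \distr'$ forces \saboteur to update the encoded configuration one cell at a time, while \runner is restricted to gadgets that let him probe whether a forbidden (e.g.\ accepting) configuration has been reached. Hardness for the remaining cost functions listed in the theorem then follows by simple padding or direct reductions from the $\supfun$-safety case.

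The main obstacle is this lower bound: the ``one unit at a time'' restriction on $\buRel$ makes \saboteur's moves very local, so the gadgets must prevent \runner from exploiting transient distributions in which the encoded configuration is only half-updated, and conversely must prevent \saboteur from stalling or skipping steps of the simulation. As the authors themselves note, the known \EXP-hard variants of cops-and-robbers games do not translate directly to \QSG{s}, precisely because the \QSG dynamics forbid arbitrary jumps in the state space; the reduction must therefore be tailored carefully to the restricted redistribution relation $\buRel$, which I expect to be the most delicate part of the whole argument.
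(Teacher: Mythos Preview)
Your upper-bound plan is essentially the paper's: build $\sem{\game}$ on configurations $V\times\Delta(\game)$ (plus intermediate $E\times\Delta(\game)$ nodes), observe $|\Delta(\game)|\leq (B{+}1)^{|E|}$, and run a pseudo-polynomial algorithm on the exponential arena. One technical point you gloss over: for $\discfun{\lambda}$ the two-step unfolding in $\sem{\game}$ doubles the number of weight-carrying transitions per round, so the paper sets the discount of $\sem{\game}$ to $\sqrt{\lambda}$ and puts weight $0$ on the \saboteur-to-\runner edges; without this (or an equivalent fix) the discounted cost in $\sem{\game}$ does not match that in $\game$.

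Your lower-bound sketch departs from the paper in a way that matters. The paper does \emph{not} encode an APSPACE Turing machine; it reduces from Stockmeyer--Chandra's \emph{alternating Boolean formula} game (ABF), in which two players alternately flip at most one Boolean variable and one player wants the CNF to eventually become true. This source problem is tailor-made for the ``one unit at a time'' constraint you correctly flag as the obstacle: a valuation is encoded by a budget distribution that places two units on $\Ver(x)$ for each variable $x$, and a single variable flip is exactly two local $\buRel$-moves. The paper also introduces an intermediate \emph{extended} \QSG model (with safe edges and final vertices) so the gadgetry can be described cleanly, and then shows how to compile away those extensions in polynomial size. Your APSPACE-TM plan would have to rebuild all of this structure from scratch---in particular, simulating a tape update cell-by-cell under $\buRel$ while policing ``half-updated'' configurations is precisely the work the ABF route sidesteps, because ABF already has single-bit local moves as its native dynamics.

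Finally, the transfer to the other cost functions is not uniformly ``simple padding''. From $\supfun$ with threshold $0$ to $\discfun{\lambda}$ is indeed immediate (the two are equivalent at threshold $0$ since weights are non-negative). But $\supfun\to\lsupfun$ is \emph{not} trivial: a $\supfun$-value of $0$ says nothing about the tail, so the paper instead reduces the safety problem directly to $\lsupfun$ threshold $0$ via a non-trivial reset construction (an initial gadget plus an exit gadget that let \runner restart the safety game infinitely often, with a larger budget $B'=|E|$ to make the bookkeeping work). Only then does $\lsupfun\to\mpfun$ at threshold $0$ become the easy step. You should expect the $\lsupfun$ reduction to require real work, not padding.
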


For all cost functions, the $\EXP$ membership is established by using
the encoding (explained in Appendix~\ref{sec:qgames-sem}) of a \QSG
$\game$ into a classical quantitative two-player game $\sem\game$
which is played on a weighted graph, whose vertices are the
configurations of the sabotage game, i.e., a tuple containing the
current vertex, the last crossed edge and the current weight
distribution, and whose weights are in $\{0,\ldots,B\}$ (describing
how much runner pays by moving from one configuration to another).
Notice that $\Delta(\game)$ has size at most $(B+1)^{|E|}$, since
every distribution is a mapping of $E\to\{0,1,\ldots,B\}$.  Hence, we
see that the game $\sem\game$ has a number of vertices at most
exponential with respect to $|V|$, and polynomial with respect to $B$
(which, being given in binary, can be exponential in the size of the
input of the problem).  Using results from~\cite{ZwiPat96,cdh10,ag11},
we know that we can compute in pseudo-polynomial time the value of the
quantitative game $\sem\game$ for all the cost functions cited in the
theorem: here, pseudo-polynomial means polynomial with respect to the
number of vertices and edges of $\sem\game$ (which is exponential with
respect to $|V|$), and polynomial with respect to the greatest weight
in absolute value, here $B$ (which is also exponential with respect to
$|V|$).  Thus we obtain the exponential time upper bound announced in
the theorem.  Note that for $\discfun{\lambda}$, pseudo-polynomial
also means polynomial in the value of the denominator of
$\lambda$.\footnote{In case of discounted-sum, we design $\sem\game$
  with a discount factor $\sqrt\lambda$ (not necessarily rational),
  but we ensure that only one turn over two has a non-zero weight, so
  that we may indeed apply the reasoning of \cite{ZwiPat96} and their
  pseudo-polynomial algorithm.}

When the budget $B$ is fixed, i.e., when it is a parameter of the
problem and not one of the inputs, the explanation above can be
adapted to prove that the problem is solvable in polynomial time for
all but the $\discfun{\lambda}$ cost functions. 
Indeed, we can refine
our analysis of the size of $\Delta(\game)$. A budget distribution can
also be encoded as a mapping $\gamma\colon \{1,\ldots,B\} \to E$ where
we consider the budget as a set of indexed pebbles: such a mapping
represents the distribution $\distr$ defined by
$\distr(e)= |\gamma^{-1}(e)|$. This encoding shows that
$\Delta(\game)$ has size at most $|E|^B$, which is polynomial in
$|E|$. 
For the discounted sum, the role of $\lambda$ in the complexity
stays the same, causing an $\NP\cap\coNP$ and pseudo-polynomial
complexity: this blow-up disappears if $\lambda$ is a parameter of the
problem. In the overall, we obtain:

\begin{corollary}\label{cor:fixed}
  For cost functions $\inffun$, $\supfun$, $\linffun$, $\lsupfun$,
  $\mpfun$, $\discfun{\lambda}$, and for fixed budget $B$, the
  threshold problem for \QSGs is in $\P$; for $\dsfun$ (where
  $\lambda$ is an input), it is in $\NP\cap \coNP$ and can be solved
  in pseudo-polynomial time.
\end{corollary}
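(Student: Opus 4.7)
The plan is to adapt the general construction $\sem{\game}$ from Theorem~\ref{thm:exptime}, leveraging the fact that a fixed budget $B$ collapses the number of distributions from exponential to polynomial. Concretely, I will first re-analyse the size of $\Delta(\game)$ using the pebble encoding sketched in the paragraph preceding the corollary: a distribution $\distr$ can be described by any surjection of the form $\gamma\colon\{1,\ldots,B\}\to E$ with $\distr(e)=|\gamma^{-1}(e)|$, which shows $|\Delta(\game)|\le |E|^B$. Since $B$ is now a parameter of the problem (not part of the input), this is polynomial in $|E|$, and consequently $\sem{\game}$ has polynomially many vertices and edges. Its edge weights remain in $\{0,\ldots,B\}$, hence bounded by a constant.

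Second, I will plug this polynomial-size two-player weighted game into the standard polynomial-time algorithms for the cost functions at hand. For $\inffun$, $\supfun$, $\linffun$, $\lsupfun$, one reduces to (co-)reachability / (co-)B\"uchi-like objectives over finitely many weight classes, all solvable in polynomial time. For $\mpfun$, algorithms such as Zwick--Paterson~\cite{ZwiPat96} run in pseudo-polynomial time, which becomes genuinely polynomial here since the largest weight $B$ is constant. For the parametrised discount $\discfun{\lambda}$, the discount factor $\lambda$ is itself fixed, so the corresponding pseudo-polynomial algorithm is again polynomial. This settles the $\P$ upper bound for all five cases.

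For the remaining case $\dsfun$, where $\lambda$ is an input, the polynomial-size game $\sem{\game}$ still has constant weights, but the complexity now depends on the denominator of~$\lambda$, which is given in binary. Following the footnote in the proof of Theorem~\ref{thm:exptime}, I will build $\sem{\game}$ using a discount factor $\sqrt{\lambda}$ with alternating zero and non-zero weights on successive transitions, so that standard discounted-sum analysis applies. The resulting game is a classical discounted-sum game on a polynomially-sized graph with polynomially-bounded weights; by the known results on such games (see, e.g.,~\cite{ZwiPat96,ag11}), its value problem lies in $\NP\cap\coNP$ and admits a pseudo-polynomial algorithm. Transporting these bounds back to $\game$ gives the claimed membership.

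The main obstacle I expect is bookkeeping rather than conceptual: one has to verify carefully that the polynomial blow-up of $\sem{\game}$ is indeed $|E|^B$ (and not, say, an unintended $(B{+}1)^{|E|}$ from the naive encoding), that the number of $\buRel$-successors of a distribution is at most $|E|^2$ so that the edge set of $\sem{\game}$ stays polynomial, and that the $\sqrt{\lambda}$ trick for discounted-sum correctly preserves the value while keeping weights of $\sem{\game}$ bounded by $B$. Once these size estimates are written down, the $\P$ and $\NP\cap\coNP$ statements follow directly from the classical algorithms cited in the proof of Theorem~\ref{thm:exptime}.
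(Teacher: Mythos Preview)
Your proposal is correct and follows essentially the same route as the paper: bound $|\Delta(\game)|$ by $|E|^B$ via the pebble encoding so that $\sem{\game}$ is polynomial-sized with constant maximal weight~$B$, then invoke the pseudo-polynomial algorithms of~\cite{ZwiPat96,cdh10,ag11}, which become genuinely polynomial except when $\lambda$ is part of the input. One small slip: in the pebble encoding $\gamma\colon\{1,\ldots,B\}\to E$ should be an arbitrary \emph{mapping}, not a surjection (a surjection would force $\distr(e)\geq 1$ for every edge); also, to cover distributions with $\sum_e\distr(e)<B$ one tacitly adds a dummy ``unused'' target, giving $(|E|{+}1)^B$, still polynomial.
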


The rest of this section is devoted to the proof of $\EXP$-hardness in
Theorem~\ref{thm:exptime} for cost functions $\supfun$, $\lsupfun$,
$\mpfun$ and $\discfun{\lambda}$ (this implies $\EXP$-hardness for
$\dsfun$ too). Our gold-standard problem for \EXP-hardness is the
\emph{alternating Boolean formula} (ABF) problem, introduced by
Stockmeyer and Chandra in~\cite{SC79}. Our proof consists of a
sequence of reductions from this problem, as depicted in
\figurename~\ref{fig:reductions-dir}. First, we show a reduction to
the threshold problem for $\supfun$ cost function when the threshold
is $0$ and \textbf{the initial distribution is empty} (i.e., no budget
on any edge), on \QSGs extended with \emph{safe edges} and \emph{final
  vertices} (in order to make the reduction more readable). Notice
that this problem amounts to determining whether \runner has a
strategy to avoid crossing an edge with non-zero budget, therefore we
refer to this problem as the \emph{extended safety problem} ($\ESPr$).
Our next step is to encode safe edges and final vertices into
(non-extended) \QSG{s} with gadgets of polynomial size, therefore
proving that the \emph{safety problem} ($\SPr$) is itself \EXP-hard:
$\SPr$ is a special case of the threshold problem $\ThPr_{\supfun}(0)$
with $\supfun$ cost function and threshold~0, for empty initial
distributions. Reductions to threshold problems with other cost
functions close our discussion to prove their \EXP-hardness.

\begin{figure}[tbp]
 \centering
 \resizebox{0.7\textwidth}{!}{%
 \begin{tikzpicture}[xscale=3]
 \node (ABF) at (-1,0) {ABF};
  \node (ESSG) at (0,0) {$\ESPr$};
  \node (SSG) at (1,0) {$\SPr$};
 \node (Sup0) at (0,1) {$\ThPr_{\supfun}(0)$};
 \node (Sup) at (0,2) {$\ThPr_{\supfun}$};
 \node (Disc0) at (-1,1) {$\ThPr_{\discfun{\lambda}}(0)$};
 \node (Disc) at (-1,2) {$\ThPr_{\discfun{\lambda}}$};
 \node (Limsup0) at (1,1) {$\ThPr_{\lsupfun}(0)$};
 \node (Limsup) at (1,2) {$\ThPr_{\lsupfun}$};
 \node (MP0) at (2,1) {$\ThPr_{\MP}(0)$};
 \node (MP) at (2,2) {$\ThPr_{\MP}$};
  
 \path
 (ABF) edge node[el,swap]{Lem.~\ref{lem:abf-to-extended}} (ESSG)
 (ESSG) edge node[el,swap]{Lem.~\ref{lem:safe-extended-wlog}} (SSG)
  (SSG) edge (Sup0)
  (Sup0) edge (Sup)
  (Disc0) edge (Disc)
  (Limsup0) edge (Limsup)
  (MP0) edge (MP)
  (Sup0) edge node[el,swap]{Lem.~\ref{lem:sup-to-ds}} (Disc0)
  (SSG) edge node[el,swap,pos=0.5]{Lem.~\ref{lem:safe-to-limsup}} (Limsup0)
  (Limsup0) edge node[el]{Lem.~\ref{lem:limsup-to-mp}} (MP0)
  ;
  \end{tikzpicture}
  }
  \caption{Reductions used in this section. We denote by $\ThPr_{f}$
    (respectively, $\ThPr_{f}(0)$) the threshold problem
    (respectively, the sub-problem of the threshold problem where
    threshold is $0$) for \QSGs with cost function
    $f$. Non-trivial reductions are labelled with the corresponding
    lemma stated in this section.}
  \label{fig:reductions-dir}
\end{figure}
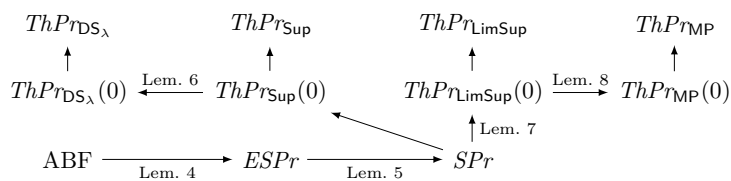

\subparagraph{Alternating Boolean Formula.}  We first recall the
alternating Boolean formula problem (ABF) introduced as game $G_6$
in~\cite{SC79}, which is the \EXP-hard problem from which we perform
our reductions. Intuitively, an ABF is an (infinite) game played on a
Boolean formula whose variables are partitioned into two sets.  Each
player controls the values of one of the sets of variables.  Players
take turns changing the value of one of the variables they control.
The objective of the first player (Prover) is to eventually make the
formula true, while the second player (Disprover) tries to avoid
this. We note that this game closely resembles an infinite horizon
version of the more classical \textsc{QBF Problem}.

More formally, an ABF instance is given by two finite disjoint sets of
Boolean variables, $X$ and $Y$, and a CNF formula over $X\cup Y$.
The game is played by two players called Prover and Disprover. They
take turns changing the value of at most one of the variables they own
($X$ are the variables of Prover, and $Y$ those of Disprover). Prover
wins if and only if the formula is eventually true. A configuration of
this game is thus a pair $(\valuation, \nplayer)$ where $\valuation$
is the current valuation of the variables and $\nplayer$ indicates
which player should play next. The \textsc{ABF problem} consists in,
given an ABF game and an initial configuration, determining whether
Disprover has a winning strategy from the initial configuration. It is
shown \EXP-complete in~\cite{SC79}.

\begin{example}\label{exa:abf}
Consider the formula
$\Phi = \clause_1 \wedge \clause_2 \wedge \clause_3 \wedge
\clause_4$
where $\clause_1 = A \vee \neg C$, $\clause_2 = C\vee D$,
$\clause_3 = C\vee \neg D$ and $\clause_4=B\vee \neg B$. Let
us further consider the partition of the variables into the sets
$X=\{A,B\}$ of Prover, and $Y=\{C,D\}$ of Disprover; and the initial
configuration $(\valuation, \mathrm{Prover})$, where
$\valuation=\{B,C,D\}$ (we denote a valuation by the set of all
variables it sets to true). Clearly, in this initial configuration,
$\Phi$ is false since $\clause_1$ is false. From that
configuration, Prover can either set $A$ to true, or $B$ to false. In
the former case, one obtains the configuration
$(\{A,B,C,D\},\mathrm{Disprover})$, where Prover wins, as $\Phi$ now
evaluates to true. In the latter case, one obtains the configuration
$(\{C,D\},\mathrm{Disprover})$. We claim that, from this
configuration, Prover cannot win the game anymore, i.e., Disprover has
a winning strategy that consists in first setting $C$ to false, and
in, all subsequent rounds, always flipping the value of $D$, whatever
Prover does. Playing according to this strategy ensures Disprover to
force visiting only configurations where either $\clause_2$ or
$\clause_3$ is false.
\end{example}

\subparagraph{Extended \QSG.} To make the encoding of ABF instances
into \QSG easier, we introduce \emph{extended quantitative sabotage
  games} (with $\supfun$ cost function). Those games are \QSG with
$\supfun$ cost function, a designated subset $F \subseteq V$ of
\emph{final vertices} and a designated subset $S\subseteq E$ of
\emph{safe edges} (those special vertices and edges are henceforth
depicted with double lines). $F$ and $S$ influence the semantics of
the game: \saboteur can place some budget on final vertices (which is
accounted for in the cost when \runner visits those vertices), but
cannot put budget on safe edges; and the game stops as soon as \runner
visits a final vertex.  We consider the \emph{extended safety problem}
($\ESPr$), which is to determine whether an extended \QSG $\game$
\emph{with empty initial distribution} has value $\Val(\game)\leq 0$.

Since the cost function is $\supfun$, this amounts to checking that
\runner has a strategy to reach a final vertex, with no budget assigned to it,
without crossing any edge with non-null budget. From now on, we assume
$B < |E|$, as the problem is trivial otherwise. Then:

\begin{lemma}\label{lem:abf-to-extended}
  The ABF problem is polynomial-time reducible to $\ESPr$.
\end{lemma}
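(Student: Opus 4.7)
The plan is to construct, in polynomial time from an ABF instance $(X, Y, \Phi, \valuation_0, \nplayer_0)$, an extended \QSG $\game$ such that $\Val(\game) \leq 0$ (i.e.\ \runner wins $\ESPr$) precisely when Prover has a winning strategy in the ABF game. I will let \runner play the role of Prover and \saboteur play the role of Disprover, and use the budget distribution to encode the current valuation: each variable $v \in X \cup Y$ will be associated with two edges $e_v^\top$ and $e_v^\bot$, and the unit of budget on one of them will indicate the value of $v$.

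I plan to shape the arena as three regions interconnected by safe edges. First, an \emph{initialisation region} will, starting from the empty distribution, force \saboteur during a short safe-edge traversal to deposit budget matching $\valuation_0$; misplacing any unit opens a safe route to a zero-budget final vertex, so \saboteur is compelled to set up the valuation faithfully. Second, a \emph{simulation loop} will cycle through ``Prover's turn'' and ``Disprover's turn'' gadgets: in Prover's turn \runner selects a variable in $X$ to flip (or pass) by picking an outgoing edge, and the gadget's structure forces \saboteur to apply exactly the corresponding budget update; any deviation creates a local inconsistency that \runner detects via a short safe path to a final vertex. The Disprover gadget is symmetric, with \saboteur freely choosing a $Y$-variable. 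Third, a \emph{verification region}, which \runner may enter after any round, consists of $m = |\mathit{clauses}(\Phi)|$ clause gadgets in sequence: at each clause $C_j$, \runner picks an outgoing edge corresponding to one of $C_j$'s literals, the chosen edge being $e_v^\bot$ (resp.\ $e_v^\top$) for a positive (resp.\ negative) literal, and hence non-budgeted exactly when the literal is true. Successful traversal of all $m$ clauses leads, via a safe edge, to a zero-budget final vertex. The forward correctness direction will be direct---if Prover wins ABF, \runner mimics that strategy through the loop and enters the verification region as soon as $\Phi$ is true---while the converse will follow from the forcing structure of the turn gadgets, which makes any non-faithful play by \runner either immediately losing or equivalent to a legitimate Prover strategy.

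The hardest part will be preventing \saboteur from exploiting his interleaved moves inside the verification region to spoil \runner's proof: with the naive encoding, after each of \runner's clause moves \saboteur could flip the relevant variable to invalidate later literals. My plan to circumvent this is an \emph{amplification} of the encoding: each variable will be replicated in $k = m+1$ synchronised copies, with consistency-check gadgets in the simulation loop that let \runner win whenever the copies disagree. A full variable flip then costs \saboteur $k$ moves, yet during the $m$-step verification he performs at most $m$ relevant budget updates; consequently, for every variable at least one copy remains untampered throughout the check, and \runner can select, at each clause, a satisfying literal together with an untampered copy of its variable, guaranteeing a safe traversal whenever $\Phi$ is true at the moment of entering the verification region. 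The overall construction is of polynomial size, completing the reduction.
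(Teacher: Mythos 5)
There is a genuine gap, and it lies in your choice of roles. In $\ESPr$ the cost function is $\supfun$ with threshold $0$, so \runner wins not only by reaching a budget-free final vertex but also by playing \emph{forever} without ever crossing a non-zero-budget edge; his objective is essentially a safety condition. Prover's objective in ABF is a reachability condition (the formula must \emph{eventually} become true). By casting \runner as Prover you create a mismatch that breaks the backward direction of your equivalence: if Disprover wins the ABF game you need $\Val(\game)>0$, i.e., \saboteur must be able to force \runner onto a budgeted edge; but a \runner who simply cycles in your simulation loop forever, never attempting verification, incurs cost $0$ and wins --- and the loop must be safely traversable for arbitrarily many rounds, since in the other direction a winning Prover strategy may need many rounds before $\Phi$ becomes true. \saboteur has no way to distinguish (and hence punish) eternal stalling from legitimate long simulation. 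The paper resolves this by the opposite assignment: \saboteur plays Prover (his goal of eventually hitting \runner matches ``eventually true''), \runner plays Disprover, and a single spare budget unit on a designated vertex $\alpha$ lets \saboteur \emph{force} \runner into the verification gadget precisely when the formula holds, while \runner only has to exhibit one false clause there. Note also that even if your construction were repaired, it maps ``Prover wins'' to yes-instances of $\ESPr$, whereas the ABF problem as defined asks whether \emph{Disprover} wins; this is only salvageable via determinacy of ABF and closure of \EXP under complement, and it is a further sign that the roles are inverted.

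A secondary problem concerns your amplification with $k=m+1$ synchronised copies per variable. A legitimate flip requested during the simulation loop takes \saboteur $k$ consecutive redistributions, during which the copies necessarily disagree; if disagreement ``lets \runner win,'' \runner can cheat by invoking the consistency check mid-flip, and if the check is gated to avoid this, the gate reopens exactly the tampering window the copies were meant to close. The paper sidesteps all of this by using two budget units per variable (so that any single redistribution step is detectably invalid via the gadgets of \figurename~\ref{GadgetGeq2} and~\ref{Check(x)}) and by having \runner name a single allegedly-false clause rather than certify all $m$ clauses in sequence, so no stability of the valuation across a long verification phase is required.
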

\begin{proof}[Sketch]
  We consider an instance of the ABF problem given by Boolean variable
  sets $X$ and $Y$ (owned by Prover and Disprover, respectively) and a
  CNF formula $\Phi$ over $X \cup Y$. We construct an extended \QSG
  $\egame$ such that \saboteur wins in $\egame$ if and only if Prover
  wins in the ABF problem.
  Valuations of the variables in $X \cup Y$ are encoded by budget
  distributions in $\egame$. For each variable $x \in X \cup Y$,
  $\egame$ has $4$ \emph{final} vertices associated with $x$,
  $\Ver(x) = \{\neg x^{(1)}, \neg x^{(2)}, x^{(1)}, x^{(2)} \}$.  A
  budget distribution $\distr$ encodes a valuation in which variable
  $x\in X\cup Y$ is \textbf{true} if and only if
  $\distr(x^{(1)})=\distr( x^{(2)})=1$ and
  $\distr(\neg x^{(1)})=\distr( \neg x^{(2)})=0$.

  Then, $\egame$ simulates the ABF game as follows. The duty of
  \saboteur is to move the budget distribution in such a way that he
  respects the encoding of the valuations explained above. To enforce
  this, we rely on the two gadgets, depicted in
  \figurename~\ref{GadgetGeq2} and~\ref{Check(x)}. They allow \runner
  to check that \saboteur respects the encoding and let him lose if he
  does not. More precisely, the gadget in \figurename~\ref{GadgetGeq2}
  allows one to check that $(i)$ there is a non-zero budget on at
  least two vertices from $\Ver(x)$; and the one in
  \figurename~\ref{Check(x)} that $(ii)$ there is a non-zero budget on
  exactly $\{\neg x^{(1)}, \neg x^{(2)}\}$ or $\{x^{(1)}, x^{(2)}\}$.
  To allow \runner to check one of these conditions, we allow him to
  move to one of the four corner vertices of the corresponding gadget,
  from where one can easily check \runner can win if and only if the
  condition is not respected. In our reduction, \runner will be
  allowed to check condition $(i)$, for all variables, from all
  vertices but will be able to check $(ii)$ only on some of them, as
  we will see later.

  \begin{figure}[tbp]
    \begin{minipage}[b]{0.45\linewidth}
      \centering
      \resizebox{0.9\textwidth}{!}{%
        \begin{tikzpicture}[xscale=2]

          \node[squircle, double] (nX1) at (0,0) {$\neg x^{(1)}$};
          \node[squircle, double] (nX2) at (1,0) {$\neg x^{(2)}$};
          \node[squircle, double] (X1) at (2,0) {$x^{(1)}$};
          \node[squircle, double] (X2) at (3,0) {$x^{(2)}$};

          \node[squircle] (123-1) at (0.5,2) {$\{\neg x^{(1)},\neg x^{(2)},x^{(1)}\}^{(1)}$};
          \node[squircle] (234-1) at (2.5,2) {$\{\neg x^{(2)},x^{(1)},x^{(2)}\}^{(1)}$};
          \node[squircle] (123-2) at (0.5,1) {$\{\neg x^{(1)},\neg x^{(2)},x^{(1)}\}^{(2)}$};
          \node[squircle] (234-2) at (2.5,1) {$\{\neg x^{(2)},x^{(1)},x^{(2)}\}^{(2)}$};

          \draw[-latex,double] (123-1) -- (123-2);
          \draw[-latex,double] (123-2) -- (nX1);
          \draw[-latex,double] (123-2) -- (nX2);
          \draw[-latex,double] (123-2) -- (X1);

          \draw[-latex,double] (234-1) -- (234-2);
          \draw[-latex,double] (234-2) -- (nX2);
          \draw[-latex,double] (234-2) -- (X1);
          \draw[-latex,double] (234-2) -- (X2);

          \node[squircle] (124-1) at (0.5,-2) {$\{\neg x^{(1)},\neg x^{(2)},x^{(2)}\}^{(1)}$};
          \node[squircle] (134-1) at (2.5,-2) {$\{\neg x^{(1)},x^{(1)},x^{(2)}\}^{(1)}$};
          \node[squircle] (124-2) at (0.5,-1) {$\{\neg x^{(1)},\neg x^{(2)},x^{(2)}\}^{(2)}$};
          \node[squircle] (134-2) at (2.5,-1) {$\{\neg x^{(1)},x^{(1)},x^{(2)}\}^{(2)}$};

          \draw[-latex,double] (124-1) -- (124-2);
          \draw[-latex,double] (124-2) -- (nX1);
          \draw[-latex,double] (124-2) -- (nX2);
          \draw[-latex,double] (124-2) -- (X2);

          \draw[-latex,double] (134-1) -- (134-2);
          \draw[-latex,double] (134-2) -- (nX1);
          \draw[-latex,double] (134-2) -- (X1);
          \draw[-latex,double] (134-2) -- (X2);

        \end{tikzpicture}
      }
      \caption{Verifying condition $(i)$}
      \label{GadgetGeq2}
    \end{minipage}
    \hfill
    \begin{minipage}[b]{0.45\linewidth}
      \centering
      \resizebox{0.75\textwidth}{!}{%
        \begin{tikzpicture}[xscale=2]
          \node[squircle, double] (1) at (0,0) {$\neg x^{(1)}$};
          \node[squircle, double] (2) at (1,0) {$\neg x^{(2)}$};
          \node[squircle, double] (3) at (2,0) {$x^{(1)}$};
          \node[squircle, double] (4) at (3,0) {$x^{(2)}$};

          \node[squircle] (13) at (0.5,1) {$\{\neg x^{(1)},x^{(1)}\}$};
          \node[squircle] (14) at (2.5,1) {$\{\neg x^{(1)},x^{(2)}\}$};

          \node[squircle] (23) at (0.5,-1) {$\{\neg x^{(2)},x^{(1)}\}$};
          \node[squircle] (24) at (2.5,-1) {$\{\neg x^{(2)},x^{(2)}\}$};

          \draw[-latex,double] (13) -- (1);
          \draw[-latex,double] (13) -- (3);
          \draw[-latex,double] (14) -- (1);
          \draw[-latex,double] (14) -- (4);

          \draw[-latex,double] (23) -- (2);
          \draw[-latex,double] (23) -- (3);
          \draw[-latex,double] (24) -- (2);
          \draw[-latex,double] (24) -- (4);
        \end{tikzpicture}
      }
      \caption{Verifying condition $(ii)$}
      \label{Check(x)}
    \end{minipage}
  \end{figure}

  The remaining of the construction is done in a way to allow
  \saboteur and \runner to choose valid re-configurations of $\Ver(x)$
  for all variables $x$, and make sure that if a player cheats, it
  allows the other player to win the safety game. If at some point,
  the formula $\Phi$ becomes true, then we allow \saboteur to enter a
  final gadget 
  which verifies that the current budget distribution to
  $\Ver(X) = \bigcup_{x \in X \cup Y} \Ver(x)$ satisfies $\Phi$. This
  last gadget lets \runner choose a clause and then allows \saboteur
  to choose a literal, within this clause, which should be true. It is
  easy to see that the choice of clause $\clause$ can be done by way
  of safe edges. The choice of literal, done by \saboteur, consists in
  choosing a suffix of $\clause$ for which the left-most literal
  holds.
  Figure~\ref{fig:exa-abfMain} shows the $\ESPr$ which results from
  applying our construction to the ABF formula from
  Example~\ref{exa:abf}.  We refer the reader to
  Appendix~\ref{sec:ABF2ESPr} for the full reduction, in particular
  how we can force, before the beginning of the actual game, to start
  in the initial valuation of the ABF game.
\end{proof}

\begin{figure}[tbp]
\begin{center}
\begin{tikzpicture}[scale = 0.8, transform shape]
\tikzset{every vertex/.style={inner sep=2pt}}

\node[squircle, double] (nA1) at (0.59,0) {$\neg A^{(1)}$};
\node[squircle, double] (nA2)	at (1.76,0) {$\neg A^{(2)}$};
\node[squircle, double] (A1) at (2.81,0)  {$A^{(1)}$};
\node[squircle, double] (A2) at (3.74,0)  {$A^{(2)}$};

\node[squircle, double] (nB1) at (4.79,0) {$\neg B^{(1)}$};
\node[squircle, double] (nB2) at (5.97,0) {$\neg B^{(2)}$};
\node[squircle, double] (B1) at (7.03,0)  {$B^{(1)}$};
\node[squircle, double] (B2) at (7.97,0)  {$B^{(2)}$};

\node[squircle, double] (nC1) at (9.03,0) {$\neg C^{(1)}$};
\node[squircle, double] (nC2) at (10.19,0) {$\neg C^{(2)}$};
\node[squircle, double] (C1) at (11.24,0)  {$C^{(1)}$};
\node[squircle, double] (C2) at (12.18,0)  {$C^{(2)}$};

\node[squircle, double] (nD1) at (13.24,0) {$\neg D^{(1)}$};
\node[squircle, double] (nD2) at (14.43,0) {$\neg D^{(2)}$};
\node[squircle, double] (D1) at (15.51,0)  {$D^{(1)}$};
\node[squircle, double] (D2) at (16.47,0)  {$D^{(2)}$};

\node[squircle] (verif) at (15,5) {$\textit{Verif}$}; 
\node[right,scale=0.7] at (verif.south east) {$ABCD\alpha$};
\node[squircle] (play) at (5.5,5) {$Play$}; 
\node[right,scale=0.7] at (play.south east) {$ABCD\alpha$};
\node[squircle] (choose) at (11,5) {$Choose$}; 
\node[right,scale=0.7] at (choose.south east) {$ABCD$};
\node[squircle] (set2) at (11,4) {$set^{(2)}$}; 
\node[right,scale=0.7] at (set2.south east) {$ABCD\alpha$};
\node[squircle] (set1) at (10,3) {$set^{(1)}$}; 
\node[right,scale=0.7] at (set1.south east) {$AB\alpha$};

\node[squircle] (setnA1) at (1.5,3) {$set^{(1)}_{\neg A}$}; 
\node[right,scale=0.7] at (setnA1.south east) {$BCD\alpha$};
\node[squircle] (setnA2) at (1.5,2) {$set^{(2)}_{\neg A}$};
\node[right,scale=0.7] at (setnA2.south east) {$ABCD\alpha$};
\node[squircle] (setA1) at (3.5,3) {$set^{(1)}_{A}$}; 
\node[right,scale=0.7] at (setA1.south east) {$BCD\alpha$};
\node[squircle] (setA2) at (3.5,2) {$set^{(2)}_{A}$};
\node[right,scale=0.7] at (setA2.south east) {$ABCD\alpha$};

\node[squircle] (setnB1) at (5.5,3) {$set^{(1)}_{\neg B}$}; 
\node[right,scale=0.7] at (setnB1.south east) {$ACD\alpha$};
\node[squircle] (setnB2) at (5.5,2) {$set^{(2)}_{\neg B}$};
\node[right,scale=0.7] at (setnB2.south east) {$ABCD\alpha$};
\node[squircle] (setB1) at (7.5,3) {$set^{(1)}_{B}$}; 
\node[right,scale=0.7] at (setB1.south east) {$ACD\alpha$};
\node[squircle] (setB2) at (7.5,2) {$set^{(2)}_{B}$};
\node[right,scale=0.7] at (setB2.south east) {$ABCD\alpha$};

\node[squircle] (Cl1) at (3,-2) {$\clause_1$};
\node[right,scale=0.7] at (Cl1.south east) {$ABCD$};
\node[squircle] (Cl2) at (7,-2) {$\clause_2$};
\node[right,scale=0.7] at (Cl2.south east) {$ABCD$};
\node[squircle] (Cl3) at (11,-2) {$\clause_3$};
\node[right,scale=0.7] at (Cl3.south east) {$ABCD$};
\node[squircle] (Cl4) at (15,-2) {$\clause_4$};
\node[right,scale=0.7] at (Cl4.south east) {$ABCD$};

\node[squircle, double] (alpha) at (3.5,5) {$\ \alpha\ $}; 

\draw[-latex] (choose) -- (play);
\draw[-latex] (choose) -- (verif);

\draw[-latex, double] (set2) -- (choose);
\draw[-latex, double] (set1) -- (set2);

\draw[-latex, double] (play) -- (setnA1);
\draw[-latex, double] (play) -- (setA1);
\draw[-latex, double] (play) -- (setnB1);
\draw[-latex, double] (play) -- (setB1);
\draw[-latex, double] (setnA1) -- (setnA2);
\draw[-latex, double] (setA1) -- (setA2);
\draw[-latex, double] (setnB1) -- (setnB2);
\draw[-latex, double] (setB1) -- (setB2);
\draw[-latex, double] (setnA2) -- (nA1);
\draw[-latex, double] (setnA2) -- (nA2);
\draw[-latex, double] (setA2) -- (A1);
\draw[-latex, double] (setA2) -- (A2);
\draw[-latex, double] (setnB2) -- (nB1);
\draw[-latex, double] (setnB2) -- (nB2);
\draw[-latex, double] (setB2) -- (B1);
\draw[-latex, double] (setB2) -- (B2);

\draw[-latex,rounded corners = 2pt, double] (setnA2) -- (1.5,1) -- (10,1) -- (set1);
\draw[-latex,rounded corners = 2pt, double] (setA2) -- (3.5,1) -- (10,1) -- (set1);
\draw[-latex,rounded corners = 2pt, double] (setnB2) -- (5.5,1) -- (10,1) -- (set1);
\draw[-latex,rounded corners = 2pt, double] (setB2) -- (7.5,1) -- (10,1) -- (set1);

\draw[-latex,rounded corners = 2pt, double] (verif) -- (15,3) -- (17,3) -- (17,-2.7) -- (3,-2.7) -- (Cl1);
\draw[-latex,rounded corners = 2pt, double] (verif) -- (15,3) -- (17,3) -- (17,-2.7) -- (7,-2.7) -- (Cl2);
\draw[-latex,rounded corners = 2pt, double] (verif) -- (15,3) -- (17,3) -- (17,-2.7) -- (11,-2.7) -- (Cl3);
\draw[-latex,rounded corners = 2pt, double] (verif) -- (15,3) -- (17,3) -- (17,-2.7) -- (15,-2.7) -- (Cl4);

\draw[-latex, double] (Cl1) -- (A1.south);
\draw[-latex, double] (Cl1) -- (nC1.south);

\draw[-latex, double] (Cl2) -- (C1.south);
\draw[-latex, double] (Cl2) -- (D1.south);

\draw[-latex, double] (Cl3) -- (C1.south);
\draw[-latex, double] (Cl3) -- (nD1.south);

\draw[-latex, double] (Cl4) -- (B1.south);
\draw[-latex, double] (Cl4) -- (nB1.south);

\end{tikzpicture}
\end{center}
\caption{Excerpt of the $\ESPr$ constructed from the ABF of
  Example~\ref{exa:abf}. In addition to these nodes and edges, the
  full $\ESPr$ contains: an initialisation gadget; a \emph{safe} edge
  from a node $n$ to all four corner nodes of gadget $(i)$ in
  \figurename~\ref{GadgetGeq2} iff $n$ is labeled by $\alpha$; and a
  \emph{safe} edge from a node $n$ to all four corner nodes of gadget
  $(ii)$ in \figurename~\ref{Check(x)} testing variable
  $x\in\{A,B,C,D\}$ iff $n$ is labeled by~$x$. These parts have been
  omitted for the sake of clarity.}
\label{fig:exa-abfMain}
\end{figure}
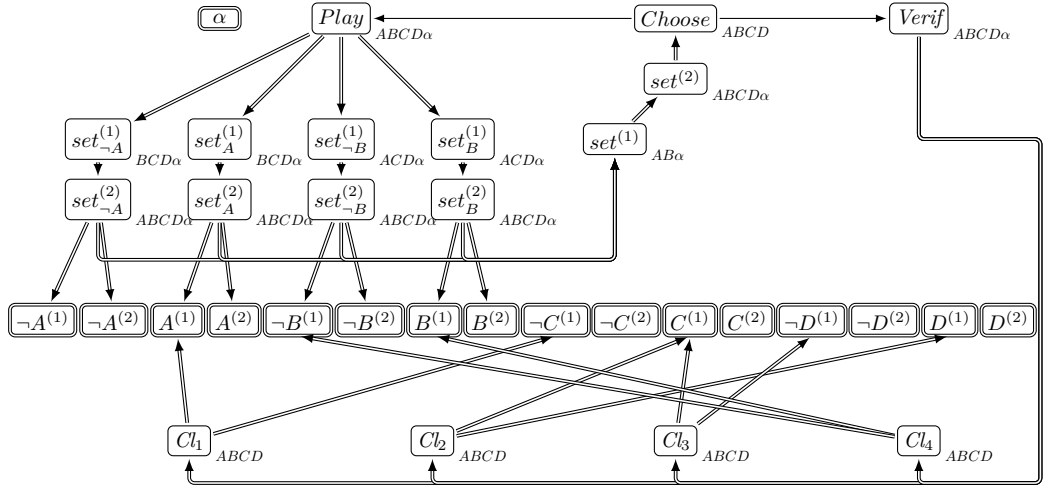

We now explain how to encode safe edges and final vertices into usual
\QSG{s}, therefore showing the \EXP-hardness of the safety problem for
\QSG{s}.

\begin{lemma}\label{lem:safe-extended-wlog}
  The extended safety problem $\ESPr$ is polynomial-time reducible to a safety
  problem $\SPr$ with budget $2$.
\end{lemma}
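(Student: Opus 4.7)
The plan is to exhibit a polynomial-time transformation of the extended game $\egame$ (with vertex set $V$, edge set $E$, budget $B$, initial vertex $v_I$, empty initial distribution, safe edges $S \subseteq E$, and final vertices $F \subseteq V$) into a plain $\SPr$ game $\game'$ whose budget is exactly $2$. All extended features are absorbed into the graph of $\game'$ via local gadgets, and saboteur's $2$-unit budget in $\game'$ is calibrated so as to enforce the same global safety constraint that the original $B$-unit budget enforces on $\egame$.

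I would encode each safe edge $(u,v) \in S$ via a \emph{triple-parallel path} gadget: three internally disjoint length-$2$ $u$-to-$v$ paths through fresh intermediate vertices. Since saboteur has at most $2$ units placed at any time, at most $2$ of the $6$ gadget edges can carry weight, so at least one of the three paths is entirely weight-free and can be traversed by runner. This realises the uncorruptible semantics of a safe edge without the need for a ``safe'' primitive in $\game'$.

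For each final vertex $v \in F$, I would attach a \emph{sink gadget}: a fresh vertex $t_v$ carrying a self-loop, reachable from $v$ by a similar triple-parallel entry. This replaces ``the game ends here with the cost of $v$'' from $\ESPr$ by ``runner loops safely at $t_v$ forever''; the possibility of saboteur placing budget on $v$ itself in the extended game is emulated by letting saboteur's $2$ units threaten the edges entering the sink gadget, forcing runner to commit to the loop only once the entry is clean. Non-safe, non-final edges of $\egame$ are kept as they are, and it is straightforward to check that the resulting graph has size polynomial in $|V|+|E|$.

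The central obstacle is proving correctness: runner wins $\SPr(\game')$ if and only if runner wins $\ESPr(\egame)$. The subtle direction is ``$\SPr \Rightarrow \ESPr$'', because the output budget is strictly smaller than the input budget, which a priori weakens saboteur. The crux is that every decision point in $\game'$ is routed through a narrow gadget entry, so saboteur can concentrate their $2$ units exactly where runner is about to make an irrevocable move, and can thereby reproduce the local coercive effect of the original $B$-unit distribution. Verifying this translation of strategies---in particular coping with the turn-by-turn $\buRel$ constraint on saboteur's re-distributions while runner navigates the triple-parallel gadgets---will be the main technical step of the proof.
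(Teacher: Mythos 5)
Your construction diverges from the paper's in one crucial respect: you try to genuinely collapse \saboteur's budget to $2$, whereas the paper's reduction keeps the original budget $B$ and merely sizes its gadgets accordingly (a clique of $B+1$ vertices $\alpha_1,\dots,\alpha_{B+1}$ behind each final vertex, and $B+1$ disjoint $A$--$E_i$--$C$ paths, each with an attached final vertex $F_i$, in place of each safe edge; these are polynomial because $B<|E|$ is assumed). The budget reduction you attempt cannot work: by Corollary~\ref{cor:fixed} the safety problem with any \emph{fixed} budget is in \P, while $\ESPr$ is \EXP-hard by Lemma~\ref{lem:abf-to-extended}, so a polynomial-time reduction from $\ESPr$ to budget-$2$ $\SPr$ would collapse \EXP to \P. (The phrase ``with budget $2$'' in the statement is best read as an artefact; the proof the paper actually gives preserves~$B$.)

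Concretely, the step that fails is your claim that \saboteur can ``concentrate their $2$ units exactly where runner is about to make an irrevocable move''. In the $\ESPr$ instances that matter (those produced by the ABF reduction), \saboteur's winning strategy maintains a \emph{persistent, global} distribution: two units on each $\Ver(x)$ plus one on $\alpha$, i.e.\ about $2N+1$ final vertices marked simultaneously, and \runner may at any moment branch off to audit any variable of his choosing. Two units cannot represent this state, and the relation $\buRel$ moves only one unit per turn, so \saboteur cannot chase \runner's choice of which part of the distribution to check. Your gadgets also have local defects: a single self-loop at the sink $t_v$ is one edge, so one unit of budget blocks \runner's only continuation from $t_v$ (the paper instead gives every $\alpha_i$ a clique of $B+1$ outgoing edges, guaranteeing a free exit forever); and your bare triple-parallel safe-edge gadget ignores the extra redistribution move that the added intermediate step hands to \saboteur---the paper neutralises it by hanging a final vertex $F_i$ off each intermediate vertex $E_i$, forcing \saboteur to spend that extra move inside the gadget rather than elsewhere in the graph.
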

\begin{proof}[Sketch]
  Each final vertex $v$ in an extended \QSG $\egame$ is replaced by
  the gadget in \figurename~\ref{finalNodes}, where
  $\{\alpha_i \st 1 \le i \le B + 1\}$ is a clique of size $B+1$,
  hence bigger than the budget of \saboteur. To encode $\distr(v)=1$
  in $\egame$, \saboteur now puts one unit of budget on
  $(\nd{A},\nd{C_1})$. If \runner reaches the gadget (through
  $\nd{A}$), \saboteur puts one unit of budget on
  $(\nd{A},\nd{C_2})$. Clearly, \runner loses if and only if there was
  already one unit on $(\nd{A},\nd{C_1})$ (i.e., $v$ was marked in
  \egame).
  Each safe edge $(\nd{A},\nd{C})$ is replaced by the gadget in
  \figurename~\ref{safeEdges}. Here, we make use of final vertices and
  disjoint paths so that \saboteur cannot block all paths from
  $\nd{A}$ to $\nd{C}$ without letting \runner win by visiting a final
  vertex with zero budget. Both gadgets have polynomial size since we
  assume that $B<|E|$. 
\end{proof}

   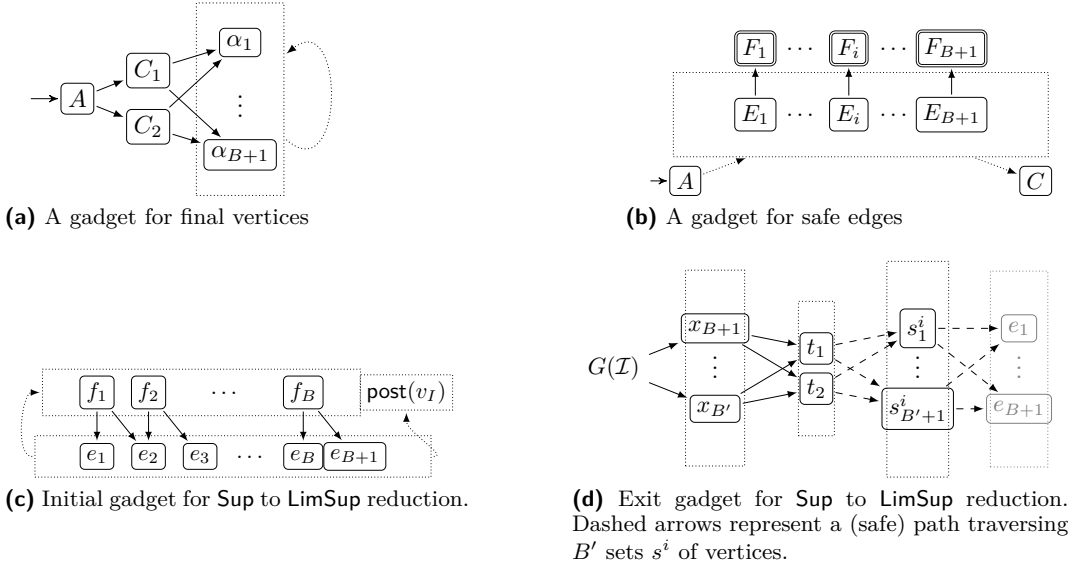
\begin{figure}
    \subfloat[A gadget for final vertices]{
    \resizebox{0.32\textwidth}{!}{%
	    \begin{tikzpicture}[xscale=1,yscale=0.8]

	      \node[squircle,initial] (A) at (0,1) {$A$};
	      \node[squircle] (C1) at (1,0.5) {$C_2$};
	      \node[squircle] (C2) at (1,1.5) {$C_1$};

	      \node[squircle] (alph1) at (2.3,2) {$\alpha_1$};
	      \node (alphdots) at (2.3,1) {$\vdots$};
	      \node[squircle] (alph2) at (2.3,0) {$\alpha_{B+1}$};

	      \node[fit=(alph1) (alph2)] (alphgroup) {};

	      \path
	      (A) edge (C1)
	      (A) edge (C2)
	      (C1) edge (alph1)
	      (C1) edge (alph2)
	      (C2) edge (alph1)
	      (C2) edge (alph2)
	      (alphgroup) edge[loop,looseness=2,out=-45, in=45,densely dotted]
	      	(alphgroup)
	      ;

      \end{tikzpicture}
    }
    \label{finalNodes}
    }
    \hfill
    \subfloat[A gadget for safe edges]{
    \resizebox{0.4\textwidth}{!}{%
	    \begin{tikzpicture}[xscale=0.7,yscale=0.5]
	    \node[squircle,initial] (A) at (-0.5,-2) {$A$};
	    
	    \node[squircle] (E1) at (1,0) {$E_1$};
	    \node (edots1) at (2,0) {\dots};
	    \node[squircle] (Ei) at (3,0) {$E_i$};
	    \node (edots2) at (4,0) {\dots};
	    \node[squircle] (EB1) at (5.2,0) {$E_{B+1}$};

	    \node[squircle] (C) at (7,-2) {$C$};

	    \node[squircle,double] (F1) at (1,2) {$F_1$};
	    \node (fdots1) at (2,2) {\dots};
	    \node[squircle,double] (Fi) at (3,2) {$F_i$};
	    \node (fdots2) at (4,2) {\dots};
	    \node[squircle,double] (FB1) at (5.2,2) {$F_{B+1}$};

	    \node[fit=(E1) (EB1)] (egroup) {};

	    \path
	    (A) edge[densely dotted] (egroup)
	    (egroup) edge[densely dotted] (C)
	    (E1) edge (F1)
	    (Ei) edge (Fi)
	    (EB1) edge (FB1)
	    ;
      \end{tikzpicture}
    }
    \label{safeEdges}
    }
    
    \subfloat[Initial gadget for $\supfun$ to $\lsupfun$ reduction.]{
      \resizebox{0.42\textwidth}{!}{%
        \begin{tikzpicture}[xscale=0.8,yscale=1]
		\node[rectangle,draw,densely dotted] (vi) at (6,1)
		{$\mathsf{post}(v_I)$};

          \node[squircle] (e1) at (0,0) {$e_1$};
	  \node[squircle] (e2) at (1,0) {$e_2$};
	  \node[squircle] (e3) at (2,0) {$e_3$};
	  \node (edots) at (3,0) {\dots};
	  \node[squircle] (eB) at (4,0) {$e_B$};
	  \node[squircle] (eB1) at (5,0) {$e_{B+1}$};

          \node[squircle] (f1) at (0,1) {$f_1$};
          \node[squircle] (f2) at (1,1) {$f_2$};
	  \node (fdots) at (2.5,1) {\dots};
	  \node[squircle] (fB) at (4,1) {$f_B$};

	  \node[fit=(e1) (eB1),draw,densely dotted,rectangle] (egroup) {};
	  \node[fit=(f1) (fB),draw,densely dotted,rectangle] (fgroup) {};

	  \path
	  (f1) edge (e1)
	  (f1) edge (e2)
	  (f2) edge (e2)
	  (f2) edge (e3)
	  (fB) edge (eB)
	  (fB) edge (eB1)
	  (egroup) edge[in=180,out=180,densely dotted] (fgroup)
	  (egroup) edge[out=0,in=-90,densely dotted] (vi)
	  ;
        \end{tikzpicture}
      }
      \label{fig:initial-gadget}
      }
    \hfill
    \subfloat[Exit gadget for $\supfun$ to $\lsupfun$ reduction. Dashed arrows
    represent a (safe) path traversing $B'$ sets $s^i$ of vertices.]{
      \resizebox{0.45\textwidth}{!}{%
        \begin{tikzpicture}[xscale=1.5,yscale=0.6]
	  \node (GI) at (0,1) {$G(\mathcal{I})$};

	  \node[squircle] (xB1) at (1,2) {$x_{B+1}$};
	  \node (xdots) at (1,1.2) {$\vdots$};
	  \node[squircle] (xBE) at (1,0) {$x_{B'}$};

	  \node[squircle] (t1) at (2,1.5) {$t_{1}$};
	  \node[squircle] (t2) at (2,0.5) {$t_{2}$};

	  \node[squircle] (s1i) at (3,2) {$s_1^i$};
	  \node (xdots) at (3,1.2) {$\vdots$};
	  \node[squircle] (sBE1i) at (3,0) {$s_{B'+1}^i$};

	  \node[squircle,gray] (e1) at (4,2) {$e_{1}$};
	  \node[gray] (edots) at (4,1.2) {$\vdots$};
	  \node[squircle,gray] (eB1) at (4,0) {$e_{B+1}$};

	  \node[fit=(xB1) (xBE),draw,densely dotted,rectangle] (xgroup) {};
	  \node[fit=(t1) (t2),draw,densely dotted,rectangle] (tgroup) {};
	  \node[fit=(e1) (eB1),draw=gray,densely dotted,rectangle] (egroup) {};
	  \node[fit=(s1i) (sBE1i),draw,densely dotted,rectangle] (safegroup) {};

	  \path
	  (GI) edge (xB1)
	  (GI) edge (xBE)
	  (xB1) edge (t1)
	  (xB1) edge (t2)
	  (xBE) edge (t1)
	  (xBE) edge (t2)
	  (t1) edge[dashed] (s1i)
	  (t1) edge[dashed] (sBE1i)
	  (t2) edge[dashed] (s1i)
	  (t2) edge[dashed] (sBE1i)
	  (s1i) edge[dashed] (e1)
	  (s1i) edge[dashed] (eB1)
	  (sBE1i) edge[dashed] (e1)
	  (sBE1i) edge[dashed] (eB1)
	  ;
        \end{tikzpicture}
      }
      \label{fig:exit-gadget}
      }
    \caption{Dotted arrows represent edges from all sources to all targets.}
  \end{figure}

As the safety problem is a specific case of the threshold problem for
$\supfun$ \QSGs (where the initial distribution is empty, and
threshold is fixed to $0$), it follows that $\ThPr_{\supfun}(0)$ and
$\ThPr_{\supfun}$ are \EXP-hard too. 

We note that given a \QSG $\game$, for all plays $\pi$ in $\game$, for
all $0 < \lambda < 1$, and for all $\delta \in \Delta(\game)$,
$\supfun(\overline{\pi}) = 0$ if and only if
$\discfun{\lambda}(\overline{\pi}) = 0$. This implies the following
result, showing that $\ThPr_{\discfun{\lambda}}(0)$ and
$\ThPr_{\discfun{\lambda}}$ are also \EXP-hard.
\begin{lemma}\label{lem:sup-to-ds}
  For any $\lambda \in (0,1)$, the threshold problem for
  $\discfun{\lambda}$ and threshold $0$ is equivalent to the threshold
  problem for $\supfun$ and threshold $0$.
\end{lemma}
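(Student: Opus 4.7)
The plan is to establish the equivalence via the identity reduction: given a \QSG $\game$, one considers exactly the same arena with the two different cost functions, and shows that the two threshold-$0$ questions have the same answer.

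First, I would exploit the fact that weights in a \QSG are non-negative integers: for any play $\pi = v_0 \distr_0 v_1 \distr_1 \cdots$, the induced sequence $u_i = \distr_i(v_i,v_{i+1}) \in \{0,1,\ldots,B\}$ is non-negative. Since $\lambda^i > 0$ for all $i$, the series $\discfun{\lambda}(u) = \sum_{i \geq 0} \lambda^i u_i$ vanishes if and only if every term vanishes, which is exactly $\supfun(u) = 0$. This yields the per-play equivalence already announced in the paper.

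Next, I would lift this observation from plays to strategies: a \runner strategy $\rho$ guarantees $\supfun(\pi_{\rho,\sigma}) = 0$ against every \saboteur strategy $\sigma$ if and only if the same $\rho$ guarantees $\discfun{\lambda}(\pi_{\rho,\sigma}) = 0$ against every $\sigma$, since the two conditions coincide play-by-play. So the existence of a \runner strategy realising threshold $0$ transfers verbatim from one cost function to the other.

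Finally, I would connect this strategy-level equivalence back to the value. Because weights are non-negative, $\Val_f(\game) \geq 0$ for both $f \in \{\supfun, \discfun{\lambda}\}$, hence $\Val_f(\game) \leq 0$ iff $\Val_f(\game) = 0$. To conclude, I need to know that value $0$ is actually witnessed by some \runner strategy (not merely approached). This follows from the encoding into the finite two-player arena $\sem\game$ of Appendix~\ref{sec:qgames-sem}: both $\supfun$ games and $\discfun{\lambda}$ games on finite weighted arenas admit optimal positional strategies for \runner, so $\Val_f(\game) = 0$ is equivalent to the existence of a \runner strategy forcing cost $0$ on every resulting play. Chaining the equivalences yields $\Val_{\supfun}(\game) \leq 0 \iff \Val_{\discfun{\lambda}}(\game) \leq 0$, and the reduction is the identity, hence polynomial-time.

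The main subtle point is the last one: passing from the infimum in $\underline\Val_{\discfun{\lambda}}$ being $\leq 0$ to an honest \runner strategy attaining $0$. For $\supfun$ this is automatic, since $\sup_\sigma \supfun(\pi_{\rho,\sigma})$ ranges over the finite discrete set $\{0,\ldots,B\}$, so the infimum over $\rho$ is attained. For $\discfun{\lambda}$ I would appeal to the classical existence of optimal positional strategies on finite discounted arenas applied to $\sem\game$; the rest of the argument is essentially immediate from the elementary observation that a non-negative series with positive discount factors vanishes exactly when every term does.
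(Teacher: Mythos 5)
Your proof is correct and follows essentially the same route as the paper, whose entire argument is the per-play observation that, weights being non-negative and discount factors positive, $\discfun{\lambda}(\pi)=0$ if and only if $\supfun(\pi)=0$. The extra care you take in passing from the value being $0$ to an actual \runner strategy attaining it (via optimal positional strategies in the finite discounted game $\sem\game$) fills in a step the paper leaves implicit, but it is the same reduction, namely the identity.
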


Let us now focus on $\lsupfun$. To show that $\ThPr_{\lsupfun}$ is
\EXP-hard, we describe a reduction from $\SPr$ to
$\ThPr_{\lsupfun}(0)$ as stated in the following lemma.

\begin{lemma}\label{lem:safe-to-limsup}
  The safety problem $\SPr$ is polynomial-time reducible to the threshold
  problem for $\lsupfun$ and threshold $0$.
\end{lemma}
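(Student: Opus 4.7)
The plan is to reduce a safety instance $\game$ (with cost function $\supfun$ and empty initial distribution) to a $\lsupfun$ instance $\game'$ of polynomial size by embedding $\game$ into a structure that forces it to be replayed infinitely often. Intuitively, in $\game'$, \runner will traverse a copy of $\game$ infinitely many times; if he has a $\SPr$-winning strategy in $\game$, he can replay it at each pass and observe only zero weights cofinitely often, whereas otherwise \saboteur forces a non-zero weight on each pass, so non-zero weights occur infinitely often.

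Concretely, $\game'$ includes a copy $G(\mathcal{I})$ of $\game$, preceded by the initial gadget of Figure~\ref{fig:initial-gadget} (with vertices $e_1,\ldots,e_{B+1}$ providing $B+1$ entry points into $\mathsf{post}(v_I)$) and followed by the exit gadget of Figure~\ref{fig:exit-gadget}, whose safe edges route \runner back to the initial gadget through $B'$ layers of vertices $\{s^i_j\}$, where $B'$ is polynomial in $|V|$, $|E|$ and $B$. Since the exit-gadget edges are safe, traversing them is free for \runner, but \saboteur must commit his budget somewhere (on $G(\mathcal{I})$ or on the non-safe initial-gadget edges). The construction is clearly polynomial time.

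The correctness argument then proceeds in two directions. Assuming $\Val(\game) \leq 0$, fix a $\SPr$-winning strategy $\stratMin$ for \runner in $\game$. In $\game'$, \runner uses $\stratMin$ during each traversal of $G(\mathcal{I})$. Before each traversal, since \saboteur has only $B$ units to place on the $B+1$ candidate entry edges from $e_j$ to the successors of $v_I$, \runner can pick a clean entry and then apply $\stratMin$ to traverse $G(\mathcal{I})$ with zero cost, before looping back through the safe exit gadget. All non-safe edges traversed thus carry weight $0$, so $\lsupfun = 0$. Conversely, assuming $\Val(\game) > 0$, every strategy for \runner in $\game$ is eventually forced to cross a non-zero edge. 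In $\game'$, since the initial gadget is \emph{not} safe, \saboteur can punish \runner for staying in the gadget forever, so \runner must enter $G(\mathcal{I})$ infinitely often; on each such traversal, \saboteur forces a non-zero visit, and hence $\lsupfun > 0$.

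The main technical obstacle is proving correctness of the initial-gadget mechanism simultaneously for both directions: one needs to show that \runner can always enter $G(\mathcal{I})$ in an ``effectively clean'' state (forward direction) while \saboteur can force \runner into $G(\mathcal{I})$ (converse). I expect both to follow from counting arguments exploiting the gap between \saboteur's budget $B$ and the number ($B+1$) of branching options in the initial gadget, combined with the fact that \saboteur can move only one unit per round and must therefore commit to a distribution while \runner traverses the long safe walk of the exit gadget. A careful choice of $B'$ should then ensure that neither player can exploit an asymmetric advantage across infinitely many replays.
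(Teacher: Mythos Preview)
Your high-level plan (replay the safety game infinitely often, sandwiched between an entry gadget and a return path) matches the paper's, but you are missing the key mechanism that makes the converse direction work, and without it your reduction is incorrect.

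The paper does \emph{not} keep the budget at $B$; it increases it to $B'=|E|$. The extra $B'-B$ units are essential: \saboteur parks one unit on each of the $B'-B$ exit vertices $x_{B+1},\dots,x_{B'}$ (each $x_j$ has only two outgoing edges, to $t_1$ and $t_2$), so that the exit from $G(\mathcal I)$ is \emph{blocked} as long as \saboteur behaves honestly inside $G(\mathcal I)$ (i.e., uses at most $B$ units there). The exit only opens when \saboteur ``cheats'' by pulling one of those extra units into $G(\mathcal I)$. In your construction the exit gadget is declared safe and always open, so in the converse direction \runner can enter $G(\mathcal I)$, take one step, and immediately bail out through the exit before \saboteur lands a hit; repeating this forever gives $\lsupfun=0$ even when \saboteur wins the safety game. (Note also that you cannot literally use ``safe edges'' here: the target of the reduction is a plain \QSG, not an extended one. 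The paper's return path is only \emph{effectively} safe because every layer has $B'+1$ outgoing choices, one more than the total budget.)

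Symmetrically, your forward argument glosses over why \runner can always enter $G(\mathcal I)$ with at most $B$ units of budget inside it. In the paper's construction this is enforced by the interplay between the enlarged budget and the initial gadget: the $B$ edges $(f_k,e_k)$ must each carry a unit before \runner is forced out (otherwise \runner cycles for free on an uncovered $f_k$), and the remaining $B'-B$ units are committed to the $x_j$'s; hence $G(\mathcal I)$ is empty when \runner enters. Your ``counting argument'' sketch does not supply this, because with budget only $B$ there is nothing preventing \saboteur from keeping some units inside $G(\mathcal I)$ while still trapping \runner in the initial gadget. The fix is exactly the paper's: raise the budget to $B'=|E|$, add the $x_{B+1},\dots,x_{B'}$ exit vertices that \saboteur must guard, and size the return path with $B'+1$ branching at each of $B'$ layers so that it is long enough for \saboteur to reset and wide enough to be uncrossable with weight.
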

\begin{proof}[Sketch]
  Let $\mathcal{I} = (V, E, B, v_I, \distr_I, \supfun)$ be an instance of $\SPr$
  (with $G(\mathcal{I})$ its underlying graph $(V,E)$).  We build a \QSG $\game$
  with cost function $\lsupfun$ such that $\Val(\game)=0$ if and only if \runner
  wins in $\mathcal{I}$. The idea of the construction is that a play of $\game$
  consists in simulating a potentially infinite sequence of plays of
  $\mathcal{I}$, using appropriate gadgets to `reset' the safety game between
  two successive simulations. Then, repeatedly playing a winning strategy for
  $\mathcal{I}$ allows \runner to ensure a $\lsupfun$ of $0$ in $\game$; and one
  can extract a winning strategy for the safety game $\mathcal{I}$ from any
  strategy ensuring a $\lsupfun$ of $0$ in $\game$.  The \QSG $\game$ has budget
  $B' = |E|$ and is obtained by extending $G(\mathcal{I})$ with two gadgets.
  Note that we are giving \saboteur more budget than he had in $\mathcal{I}$.
  However, as we will see in the sequel, at the beginning of every faithful
  simulation of $\mathcal{I}$ (i.e. when \runner moves to $G(\mathcal{I})$)
  there will be $B' - B$ of it in the second gadget and $B$ in the first and
  during any faithful simulation of $\mathcal{I}$ only budget from the
  initial gadget is redistribtued into $G(\mathcal{I})$.

  The first gadget is an initial gadget which is visited every time the safety game is
  `reset'. It allows \runner to stay safe from any weighted edges (and
  avoid reaching $G(\mathcal{I})$) until \saboteur has placed $B$ units of
  budget on it (and thus removed them from the $G(\mathcal{I}$). It is depicted
  in \figurename~\ref{fig:initial-gadget}, where all $e_i$ are intuitively
  copies of $v_I$, and $\mathsf{post}(v_I)$ corresponds to the set of all successors of $v_I$ in
  $G(\mathcal{I})$.

  The second gadget allows \runner to leave $G(\mathcal{I})$ if
  \saboteur ever places more than $B$ units of budget on
  $G(\mathcal{I})$ (and thus removes this budget from the gadgets),
  thereby triggering a `reset' of the simulation. This gadget,
  depicted in \figurename~\ref{fig:exit-gadget}, also allows
  \runner to come back to the initial gadget visiting only edges with
  zero budget. The figure shows a sequence of safe transitions (i.e. several
  vertices with high out-degree) which leads back to the copies $e_i$ of the
  initial vertex. Further, this `safe path' takes long enough for
  \saboteur to redistribute the budget from $G(\mathcal{I})$ to both
  gadgets. In order for \saboteur to stop \runner from always taking this
  `safe exit' from $G(\mathcal{I})$ he can place $B' - B$ budget in specific
  edges of this second gadget. More specifically, he can place a unit of budget
  on one outgoing edge from each $x_j$, for $B+1  \le j \le B'$, before forcing
  \runner to enter $G(\mathcal{I})$.

  \item \subparagraph{Intuition behind the global construction.}
  Assume that \saboteur has a winning strategy in $\mathcal{I}$. Then, when
  \runner is in the initial gadget, \saboteur will play as expected and remove
  all weights from $G(\mathcal{I})$. Critically, the weights he removes from 
  $G(\mathcal{I})$ will go to specific edges in both gadgets described above.
  \runner is now forced to play into $G(\mathcal{I})$, and \saboteur can follow
  his winning strategy to hit \runner at some point without using more than $B$
  weights. If \runner attempts to bail out of $G$ through the alternative exit,
  and to head back to the initial gadget, then we make sure he is also hit by
  \saboteur. Clearly, this ensures that the $\lsupfun$ value of the game is
  strictly greater than $0$.  Now assume that \runner has a winning strategy in
  $\mathcal{I}$. In this case, if \saboteur does not remove all weights from
  $G(\mathcal{I})$, then \runner is allowed to stay in the initial gadget
  forever or jump to $G(\mathcal{I})$ and immediately bail out using the exit
  gadget. In both cases he avoids getting hit by \saboteur. Let us assume
  \saboteur plays as expected and thus \runner enters $G(\mathcal{I})$
  eventually.  In this case, \runner can play his winning strategy, hence
  avoiding edges with non-zero budget (with \saboteur using budget $B$). Either
  he dodges weighted edges forever, or \saboteur cheats and uses some of his
  additional budget.  However, in this case he creates an exit for \runner back
  to the initial gadget, and the same analysis as above applies. This implies
  that the value of the game is exactly $0$.
\end{proof}

Proving the \EXP-hardness result for cost function \mpfun is done by
noticing that, for threshold $0$, both problems are equivalent.

\begin{lemma}\label{lem:limsup-to-mp}
  The threshold problem for $\lsupfun$ and threshold $0$ is polynomial-time
  reducible to the threshold problem for $\mpfun$ and threshold $0$.
\end{lemma}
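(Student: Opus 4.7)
The reduction is essentially the identity: given a \QSG $\game$ with cost function $\lsupfun$ as an instance of $\ThPr_{\lsupfun}(0)$, output the same underlying game but with cost function $\mpfun$. The whole content of the lemma is thus to show that
\[
\Val_{\lsupfun}(\game) \leq 0 \quad\Longleftrightarrow\quad \Val_{\mpfun}(\game) \leq 0,
\]
where the subscript indicates the cost function. The critical observation is that in a \QSG, all weights $\distr_i(v_i,v_{i+1})$ are non-negative integers bounded by $B$.

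For the forward direction, I would argue at the level of plays: if $u=(u_i)_i$ is a non-negative integer sequence with $\lsupfun(u)\le 0$, then $u_i = 0$ for all sufficiently large $i$. Since the $u_i$ are bounded, the Cesàro average of such a sequence tends to $0$, hence $\mpfun(u) \le 0$. Any strategy of \runner witnessing $\Val_{\lsupfun}(\game)\le 0$ is therefore also a witness for $\Val_{\mpfun}(\game)\le 0$.

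For the converse direction, I would lift the argument to the semantic game $\sem{\game}$ defined earlier (a finite weighted graph with integer weights in $\{0,\ldots,B\}$). Assume $\Val_{\mpfun}(\game)\le 0$. Since mean-payoff games on finite graphs admit memoryless optimal strategies for both players (classical result of Ehrenfeucht and Mycielski), \runner has a memoryless strategy $\rho^\star$ such that, against every strategy of \saboteur, the resulting play has mean-payoff $\le 0$, hence $= 0$ by non-negativity of weights. Fixing $\rho^\star$ yields a one-player max game for \saboteur. In such a game with non-negative integer weights, the optimal value is $0$ only if every simple cycle reachable in the residual graph has total weight $0$, which forces every edge on such a cycle to carry weight $0$. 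Every infinite play consistent with $\rho^\star$ must eventually stay in a bottom strongly connected component of the residual graph, and hence only traverses $0$-weighted edges from some point on. This gives $\lsupfun = 0$ on all such plays, so $\rho^\star$ also witnesses $\Val_{\lsupfun}(\game)\le 0$.

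The reduction is clearly polynomial (indeed, constant) time, so the lemma follows. I expect the main subtlety to be the invocation of memoryless determinacy in the correct game: one must argue on the semantic game $\sem{\game}$ rather than on $\game$ itself, since \saboteur's moves depend on the current distribution and a memoryless strategy in $\sem{\game}$ corresponds to a distribution-dependent (but otherwise finite-memory) strategy in $\game$, which is sufficient to carry the argument through.
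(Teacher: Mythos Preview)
Your proposal is correct and takes the same approach as the paper: the reduction is the identity, and the content is the equivalence $\Val_{\lsupfun}(\game)\le 0 \iff \Val_{\mpfun}(\game)\le 0$ for non-negative integer weights. The paper in fact only asserts this equivalence without proof, so your argument supplies the missing details; the one small imprecision is the phrase ``must eventually stay in a bottom strongly connected component'' (Saboteur could choose to remain in a non-bottom component), but your conclusion still holds since any edge traversed infinitely often lies on a reachable simple cycle and hence has weight~$0$.
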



\section{Static quantitative sabotage games}\label{sec:static}

In light of the \EXP-completeness of \QSGs, we
study in this section a restriction of the problem, that might be
sufficient to model some interesting cases.
The restriction concerns the dynamics of the behaviour of \saboteur.
In a \emph{static} \QSG, \saboteur chooses at the beginning a budget
distribution (hence, changing the initial budget distribution), and
then commits to this distribution during the whole game. The situation
is no longer a reactive two-player game, but rather we ask whether for
every possible initial (and static) budget distribution, \runner has a
nicely behaved strategy.

Formally, for a \QSG $\game=(V,E,B,v_I,f)$ (we remove the initial
budget distribution from the tuple in this section, since it is
useless) and a budget distribution $\distr \in \Delta(\game)$, we
denote by $\gameStat{\distr}$ the \QSG obtained from $\game$ by taking
$\distr$ as initial budget distribution. Furthermore, we define the
\emph{identity strategy} $\iota$ of \saboteur in $\game$, as the
strategy mapping every prefix $\pi\in\PrefsSaboteur(\game)$ to the
last budget distribution appearing in prefix $\pi$. We let
$\Valstatic(\game)=\sup_{\distr\in\Delta(\game)}\inf_{\rho \in
  \StratRunner(\game)} f(\pi^\distr_{\rho,\iota})$,
where $\pi^\distr_{\rho,\iota}$ denotes the unique play defined by the
profile $(\rho,\iota)$ in \QSG $\gameStat{\distr}$. Notice that this
value is equal to
$\inf_{\rho \in \StratRunner(\game)}\sup_{\distr\in\Delta(\game)}
f(\pi^\distr_{\rho,\iota})$,
since in $\game$, when \saboteur follows strategy $\iota$, the
quantitative game $\sem\game$ (see Appendix~\ref{sec:qgames-sem}) is
split into independent games, one for each initial distribution
$\distr$, that \runner knows as soon as it starts playing. The
\textsc{Static Threshold problem with cost function $f$} consists in,
given as input a \QSG $\game$ with cost function $f$ and a
non-negative threshold $T$, determining whether the inequality
$\Valstatic(\game) \leq T$ holds.  We now state the complexity of this
new problem.

\begin{theorem}
  For cost functions $\inffun$ and $\linffun$, the static threshold
  problem over \QSG{s} is in \P; for $\supfun$,
  $\lsupfun$, $\mpfun$, and $\dsfun$, it is \coNP-complete.
\end{theorem}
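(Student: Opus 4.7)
Once \saboteur commits to a distribution $\delta$, the game becomes a purely one-player optimisation for \runner on the static weighted graph $(V,E,\delta)$; consequently
\[\Valstatic(\game)=\sup_{\delta\in\Delta(\game)}\inf_{\rho\in\StratRunner(\game)}f(\pi^\delta_{\rho,\iota}),\]
where the inner infimum is poly-time-computable for all six cost functions. I would handle the tractable cases ($\inffun$, $\linffun$) by extracting a structural characterisation of the outer supremum, and the \coNP-complete cases by a guess-and-check upper bound combined with a reduction from a \coNP-hard problem.

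\textbf{Tractable cases and \coNP upper bound.} For $\inffun$, against a fixed $\delta$, \runner attains cost $\le T$ iff he can reach some edge of weight $\le T$ from $v_I$; dually, \saboteur achieves $>T$ iff he places weight $\ge T+1$ on \emph{every} edge whose source is reachable from $v_I$. Writing $k$ for the number of such edges, \runner wins iff $(T+1)k>B$, which is decidable in polynomial time via reachability. For $\linffun$, since the graph is finite and deadlock-free, every infinite play eventually remains in a single reachable non-trivial SCC $C$, and inside $C$ \runner can realise $\liminf$ equal to the minimum edge-weight of $C$; so \saboteur wins for some $\delta$ iff he can put weight $\ge T+1$ on every edge of every reachable non-trivial SCC, i.e.\ iff $(T+1)\sum_C|E(C)|\le B$, again in polynomial time using Tarjan's SCC decomposition. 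For $\supfun,\lsupfun,\mpfun,\dsfun$, the complement of the static threshold problem asks whether some $\delta\in\Delta(\game)$ satisfies $\inf_\rho f(\pi^\delta_{\rho,\iota})>T$. A distribution $\delta\colon E\to\{0,\dots,B\}$ has description size $O(|E|\log B)$, polynomial in the input (as $B$ is in binary); once guessed, the inner infimum is computable in polynomial time by classical one-player algorithms (Karp's minimum cycle mean for $\mpfun$, reachability and bottleneck shortest-path arguments for $\supfun$ and $\lsupfun$, and the Zwick--Paterson pseudo-polynomial procedure for $\dsfun$). This puts the complement in \NP and hence the problem in \coNP.

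\textbf{coNP-hardness and main obstacle.} I would reduce from the complement of \textsc{Sat}. Given a CNF $\phi$ over $n$ variables and $m'$ clauses, the plan is to construct a \QSG with $B=n$, $T=0$ in which, for each variable $x_i$, a ``variable gadget'' offers \saboteur the choice of weighting a positive or a negative literal edge (with tautology-style side-gadgets, combined with the tight budget $B=n$, forcing exactly one weight per variable and thereby encoding a truth assignment), and, for each clause $C_j$, a ``clause cycle'' $\gamma_j$ that \runner may loop in and whose edges are all unweighted iff every literal of $C_j$ is false under \saboteur's assignment. Since weights are non-negative and every infinite play eventually stabilises in a cycle, the conditions ``$\supfun\le 0$'', ``$\lsupfun\le 0$'', ``$\mpfun\le 0$'' and ``$\dsfun\le 0$'' collapse to the same property---the eventually-visited cycle carries no weight---so a single construction yields \coNP-hardness uniformly for all four cost functions. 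The main obstacle is \emph{tightness}: one must verify (i) that \saboteur's optimal distribution really encodes a valid assignment (no budget wasted, no variable left unassigned), and, more delicately, (ii) that \runner cannot exploit shared literal or variable vertices to stitch together a clean cycle which is not a single $\gamma_j$, thereby spuriously winning even against a satisfying assignment---a ``parasitic cycle'' phenomenon that already appears in naive hitting-set-based attempts. Preventing it typically requires per-clause duplication of intermediate connector vertices together with consistency gadgets ensuring that the shared literal weights still encode a single global assignment. Once tightness is established, \runner wins iff $\phi$ is unsatisfiable, yielding \coNP-hardness.
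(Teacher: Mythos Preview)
Your treatment of the $\inffun$ and $\linffun$ cases, and of the $\coNP$ upper bound for the remaining four cost functions, is essentially the paper's argument: compute the set of reachable edges (respectively, reachable edges lying in a non-trivial SCC), compare against a simple arithmetic bound involving $B$ and $T$; for the upper bound, guess a distribution and solve the resulting one-player game in polynomial time.

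Where you diverge is in the $\coNP$-hardness proof. The paper does \emph{not} reduce from $\textsc{Sat}$; it reduces from \textsc{Feedback Arc Set}, and the reduction is almost immediate. Given a directed graph $G=(V,E)$ and threshold $k$, one adds a fresh source $v_I$ with edges to all other vertices, sets the budget to $k$, and asks whether $\Valstatic>0$ for $\lsupfun$ or $\mpfun$. Since every vertex is reachable from $v_I$, \runner can settle into any cycle of $G$; hence \saboteur achieves a strictly positive value iff every cycle carries at least one weighted edge, i.e.\ iff the support of the distribution is a feedback arc set of size at most $k$. For $\supfun$ and $\dsfun$ the same idea works after making the $v_I$-edges safe (via the safe-edge gadget already introduced for Lemma~\ref{lem:safe-extended-wlog}) and invoking Lemma~\ref{lem:sup-to-ds}. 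No variable gadgets, no consistency constraints, no parasitic-cycle analysis.

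Your $\textsc{Sat}$ plan is not wrong in spirit, but the obstacles you flag are exactly the places where it remains a sketch rather than a proof: you have not exhibited a construction that simultaneously (i) forces \saboteur's optimal distribution to encode a well-formed assignment with a tight budget of $n$, and (ii) rules out non-clause ``parasitic'' cycles created by shared literal vertices. These are solvable in principle, but the engineering is delicate and you have not carried it out. The \textsc{Feedback Arc Set} route sidesteps all of this, because the problem already \emph{is} ``can \saboteur hit every cycle with $k$ weights?'', so the correspondence is definitional rather than gadget-based. If you want a complete proof, either switch to \textsc{Feedback Arc Set} or fully specify and verify your gadgets.
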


First, we give the intuition behind our polynomial-time algorithm to decide the
static threshold problem for cost functions $\inffun$ and $\linffun$.

\begin{lemma}\label{lem:poly-inf-linf}
	For cost functions $\inffun$ and $\linffun$, the static threshold
	problem over \QSGs is in \P.
\end{lemma}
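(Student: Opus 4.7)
The plan is to observe that, since \saboteur commits to a single distribution $\delta$ upfront and then plays the identity strategy $\iota$, the static problem collapses into a one-shot optimization whose value admits a simple closed form in polynomially-computable graph data. No game-theoretic machinery is needed beyond reading off two graph-reachability quantities.

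First, I would compute, for each fixed $\delta$, the best value \runner can achieve. For $\inffun$, deadlock-freeness guarantees that every edge reachable from $v_I$ can be placed on some infinite play, so $\inf_\rho \inffun(\pi^\delta_{\rho,\iota}) = \min_{e \in R}\delta(e)$, where $R\subseteq E$ is the set of edges $(u,v)$ such that $u$ is reachable from $v_I$ in $(V,E)$. For $\linffun$, an edge can be visited infinitely often by some play from $v_I$ iff it belongs to the set $F$ of edges $(u,v)$ with $u$ reachable from $v_I$ and $u,v$ in the same strongly connected component of $(V,E)$; since $\linffun$ of an integer-valued sequence coincides with the smallest value appearing infinitely often, $\inf_\rho \linffun(\pi^\delta_{\rho,\iota}) = \min_{e \in F}\delta(e)$. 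In either case, writing $S \in \{R,F\}$ accordingly, one obtains $\Valstatic(\game) = \sup_\delta \min_{e \in S}\delta(e)$, which is a standard integer max-min allocation problem: any budget placed outside $S$ is wasted, and spreading $B$ as evenly as possible on $S$ is optimal, so $\Valstatic(\game) = \lfloor B/|S|\rfloor$. Consequently, $\Valstatic(\game) \leq T$ iff $B < (\lfloor T \rfloor + 1)\cdot |S|$.

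The set $R$ is computed by one BFS/DFS from $v_I$, and $F$ by computing the strongly connected components of the subgraph reachable from $v_I$ (e.g.\ via Tarjan's algorithm); both tasks are polynomial, and the final threshold check is one integer comparison. The main subtlety lies in the $\linffun$ case: one must justify that $\min_{e \in F}\delta(e)$ is actually attained by some \emph{single} infinite strategy of \runner. This holds because, after reaching the cycle containing the minimum-weight edge of $F$, \runner may loop on that cycle forever; conversely, the edges visited infinitely often by any infinite play from $v_I$ necessarily lie in $F$, so no strategy can do strictly better than the claimed minimum.
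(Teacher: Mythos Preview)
Your proposal is correct and follows essentially the same approach as the paper: compute the set of ``relevant'' edges (reachable edges for $\inffun$, reachable edges lying inside a strongly connected component for $\linffun$), observe that for a fixed distribution the inner $\inf_\rho$ equals the minimum weight over that set, and conclude that \saboteur's optimal move is to spread the budget evenly, yielding the closed form $\lfloor B/|S|\rfloor$ and a polynomial-time threshold check via BFS/DFS and Tarjan. Note that your formula $\lfloor B/|S|\rfloor$ is the correct one; the paper writes $\lfloor |S|/B\rfloor$, which is an evident typo (its own threshold check and verbal argument agree with your version).
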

\begin{proof}[Sketch]
For $\inffun$, we claim that
$\Valstatic(\game)=\lfloor |\overline E|/B \rfloor$, where
$\overline E$ is the set of edges reachable from $v_I$.
Indeed once a distribution $\distr$ is chosen, any optimal strategy of
\runner will make him reach an edge of $\overline E$ that has the minimum
weight, thus \saboteur must distribute evenly its budget over $\overline E$. 
A similar argument works for $\linffun$, showing that
$\Valstatic(\game)=\lfloor |\widetilde E|/B \rfloor$, where $\widetilde E$ is
the set of edges reachable from $v_I$ and contained in a strongly connected
component.
\end{proof}

Then, let us turn to the  $\coNP$-completeness of the problem  for cost
functions $\supfun$, $\lsupfun$, $\mpfun$, and $\dsfun$. Notice that,
because of the two possible definitions of $\Valstatic(\game)$
explained in the beginning of the section, the complement of the
static threshold problem asks whether there exists a budget
distribution $\distr$ such that
$f(\pi^\distr_{\rho,\iota})>T$ for every strategy
$\rho \in \StratRunner(\game)$ of \runner. Thus we  show the
$\NP$-completeness of the complement of the static threshold problems
for the four cost functions.

\begin{lemma}\label{lem:coNP-complete}
	For cost functions $\supfun$, $\lsupfun$, $\mpfun$, and $\dsfun$, the
	complement of the static threshold problem over \QSGs is \NP-complete.
\end{lemma}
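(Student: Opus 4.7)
The plan is to establish NP-completeness of the complement problem in two steps: NP membership via a one-player optimisation, and NP-hardness via a reduction from 3-SAT.

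For the upper bound, the complement of the static threshold problem asks for a $\distr \in \Delta(\game)$ with $\inf_{\rho \in \StratRunner(\game)} f(\pi^\distr_{\rho,\iota}) > T$. A distribution has encoding of size $|E|\log(B+1)$, hence polynomial in the input. Guessing $\distr$ nondeterministically, \saboteur is frozen by $\iota$, and verifying the inequality reduces to a one-player optimisation on $(V,E)$ with fixed weights $\distr$: $\supfun$ and $\lsupfun$ are handled by reachability and strongly-connected-component analysis, $\mpfun$ by Karp's minimum mean cycle algorithm, and $\dsfun$ by solving the Bellman fixed-point equations (e.g.\ by linear programming). Each of these subroutines runs in polynomial time.

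For the lower bound, I would reduce from 3-SAT. Given a formula $\phi = \bigwedge_{j=1}^m C_j$ over $x_1,\ldots,x_n$, build a \QSG with budget $B = n$, and, for each variable $x_i$, two dedicated edges $e_i^T, e_i^F$. A \emph{valid} distribution places exactly one unit on one of $\{e_i^T, e_i^F\}$, thereby encoding the truth value of $x_i$. Starting from the initial vertex, \runner enters a clause-selection phase (through zero-weight edges), picks some $C_j$, and then picks a literal $\ell \in C_j$, traversing the corresponding variable edge. With $T = 0$ and cost function $\supfun$, \runner pays zero cost iff in every clause he can exhibit a false literal; hence $\phi$ is satisfiable iff there exists a distribution forcing cost strictly greater than $0$. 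To preclude invalid encodings, the initial vertex also opens \emph{audit} gadgets per variable in which \runner can loop forever on zero-weight cycles whenever the encoding deviates from one-hot (e.g.\ both or neither of $e_i^T,e_i^F$ carry budget, or budget is hidden elsewhere). The whole construction has polynomial size.

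The main obstacle is lifting this reduction uniformly to $\lsupfun$, $\mpfun$, and $\dsfun$, since for those cost functions a single traversal of a non-zero edge need not yield positive cost. The idea is to amplify the cost signal: using gadgets in the spirit of Lemmas~\ref{lem:sup-to-ds} and~\ref{lem:safe-to-limsup}, I would replace each non-zero variable edge by a small cycle that traps \runner once he has committed to a literal, so that the long-run or accumulated cost is strictly positive exactly when the chosen literal is true under the encoded assignment. Keeping these cycle gadgets compatible with the audit gadgets (which must still provide a zero long-run cost for invalid encodings) is the delicate point, but once done the reduction remains of polynomial size and yields NP-hardness simultaneously for all four cost functions.
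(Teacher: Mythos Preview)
Your NP-membership argument coincides with the paper's: guess a distribution of polynomial size and solve the resulting one-player weighted game in polynomial time.

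For NP-hardness the paper takes a different and considerably cleaner route, reducing from \textsc{Feedback Arc Set} rather than 3-SAT. Given a directed graph $G=(V,E)$ and $k\le|E|$, one adds a fresh source $v_I$ with an edge to every other vertex, takes budget $k$ and threshold $0$. Then $\Valstatic(\game)>0$ iff some distribution $\distr$ makes every cycle reachable from $v_I$ contain a positive-weight edge, i.e., iff the support of $\distr$ (at most $k$ edges) is a feedback arc set. Since the long-run value under $\lsupfun$ and $\mpfun$ is governed by the cycle a play eventually settles in, this gives hardness for those two cost functions with essentially no gadgetry; for $\supfun$ one only makes the edges out of $v_I$ safe (via Lemma~\ref{lem:safe-extended-wlog}), and $\dsfun$ then follows from Lemma~\ref{lem:sup-to-ds}. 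The key advantage is that the edges of the instance \emph{are} the edges of the \QSG, so there are no auxiliary ``structural'' edges for \saboteur to abuse, and the cycle-hitting nature of feedback arc sets matches the semantics of $\lsupfun$/$\mpfun$ directly.

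Your 3-SAT route is not wrong in spirit, but the parts you flag as delicate are actually unresolved. Already for $\supfun$ there is a gap: you assume the clause-selection edges are traversed ``through zero-weight edges'', yet nothing prevents \saboteur from placing his $n$ units on those structural edges instead of on the variable edges; your audit gadgets are supposed to detect ``budget hidden elsewhere'', but the audits themselves consist of edges on which \saboteur can also place budget, so you need either safe-edge encodings or $n{+}1$-fold parallel redundancy at every structural step, none of which you spell out. Second, the lifting to $\lsupfun$, $\mpfun$, $\dsfun$ via ``trap cycles'' is only sketched, and you yourself identify the compatibility with the audit gadgets as the obstacle without resolving it. All of this is bypassed by the Feedback Arc Set reduction, which handles $\lsupfun$ and $\mpfun$ in one line and then reuses already-established lemmas for $\supfun$ and $\dsfun$.
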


\begin{proof}[Sketch]
For the membership in $\NP$, we can first guess a budget distribution
$\distr$ (that is of size polynomial), and then compute the value of
the one-player (since player \Max has no choices anymore) quantitative
game $\gameStat{\distr}$, to check if it is greater than~$T$:
computing the value of such a game 
can be done in polynomial time for the four cost functions we
consider (see~\cite{ag11}).

For the $\NP$-hardness with cost functions $\lsupfun$ and $\mpfun$, we
give a reduction from the following problem. The \textsc{Feedback arc
  set problem} asks, given a directed graph $G = (\vertices,\edges)$
and a threshold $k \leq |\edges|$, whether there is a set $E'$ of at
most $k$ edges of $G$ such that $(\vertices,\edges \setminus E')$ is
acyclic.
Karp showed \cite{Kar72} that the feedback arc set problem is
$\NP$-complete.
Let us consider an instance of the feedback arc set problem, given by
a directed graph $G = (\vertices,\edges)$ and a natural integer
$k\le |\edges|$. Wlog, we can add to the graph a vertex $v_I$, with
null in-degree, and, for all vertices $v\neq v_I$, an edge
$(v_I,v)$. Observe that this does not change the output of the
feedback arc set problem as $v_I$ is not included in any cycle.
We then construct a \QSG $\game = (\vertices, \edges,k,v_I,f)$ with
$f\in\{\lsupfun,\mpfun\}$. It is not difficult to show that
$\Valstatic(\game)>0$ if and only if there exists a set $E'$ of $k$
edges of $G$ such that $(\vertices,\edges\setminus E')$ is
acyclic.
The result for $\supfun$ and $\dsfun$ is then obtained by a slight
modification of the previous proof. In particular, we make use of
Lemma~\ref{lem:sup-to-ds}, once more. We refer the reader to
Appendix~\ref{app:coNP} for the details.
\end{proof}

\section{Reactive systems under failure}

One can see a sabotage game as a system in which a controller tries to
evolve while avoiding as much as possible the failures caused by the
environment. The vertices of the graph represent configurations of the
system, edges represent the actions, and the budget of the Saboteur
may represent a finite amount of failures that can simultaneously
occur during the execution. In a quantitative reasoning, a failure may
be better represented by a quantity describing how much some elements
of the system are overloaded, and then how much it would cost, in
terms of time or energy, to use them.

Following this main motivation, we propose to look at sabotage games
as a particular semantics of controllable systems. Indeed, while a
standard semantics would analyse the feasibility of a requirement in a
fully functional system, a \emph{sabotage semantics} allows one to
analyse systems subject to errors, and to decide, e.g., whether one
can satisfy a Boolean constraint while minimising the average number
of failures encountered during the execution. In particular, sabotage
games, as introduced in this work, would correspond to the sabotage semantics of
a system where the controller must walk in a graph with no particular objective,
other than minimising the failures.

From a modelling point of view, graphs---which can be viewed as
one-player games with trivial winning conditions---are quite
limited. In more realistic models, we may be interested in modelling
systems with uncontrollable actions (i.e., two-player games), and
where the controller has a specific Boolean goal to achieve, instead
of simply staying in the graph \emph{ad vitam \ae{}ternam}. A more
realistic goal is usually expressed via a parity condition or LTL
formulas. In Appendix~\ref{sec:more-expressive}, we show that when a
reactive system is modelled by a two-player parity game, deciding
whether one can ensure the parity condition, while maintaining a cost
associated with the sabotage semantics below a given threshold, is not
harder than solving sabotage games. That is, the problem is
\EXP-complete. This result is obtained by a reduction to quantitative
parity games~\cite{CHJ05}.  When the requirement is expressed with an
LTL formula instead of a parity condition, the problem becomes
2-\EXP-complete, due to an additional exponential blow-up in the size
of the input formula.  Note, however, that the LTL-reactive synthesis
problem itself (with the standard non-sabotage semantics) is already
2-\EXP-complete. In this case, the sabotage semantics does not add to
the complexity of the problem, which further shows that our present
contributions might have practical applications, albeit the high
complexity.

\section{Conclusion}\label{sec:conclusions}

We have conducted a study of systems subject to failure, using the
model of \emph{quantitative sabotage games}. We have shown that under
\emph{dynamic sabotage}, the threshold problem is \EXP-complete for
most objective functions, and \coNP-complete under \emph{static
  sabotage}, for the same functions (see table~\ref{tbl:summary} for a
summary of these results). We have also shown the applicability of our
framework to deal with the more general problem of reactive synthesis
in systems under failures.
The \QSGs we have introduced open many questions related to evolving
structures. Here we have studied the worst-case scenario, i.e.,
where the environment is modelled by an antagonistic adversary, but,
as considered in \cite{KleRad10} for reachability Boolean objectives,
one could also look at a probabilistic model, where failures, i.e.,
redistributions of weights, are random variables. Another natural
extension of this work would be to consider a more realistic setting
where the controller (\runner) has partial information regarding the
weights of \saboteur.

Although the synthesis problem has been widely studied in theory,
there are not many tools which implement the known theoretical
solutions to decide it. The is is particularly true for quantitative
objectives.  Recently, however, competitions have been organised to
encourage the development of such tools and the standardisation of an
input format (see, e.g., SYNTCOMP and SyGuS).\footnote{Links to both
  competitions' websites: \url{http://www.syntcomp.org} and
  \url{http://www.sygus.org/}.}  Motivated by the similarities between
the ABF problem (solving a safety game described by a logical formula)
and the synthesis problem as solved in those competition (solving a
safety game described by a logical circuit), one of our future
projects is to show that quantitative extensions of some of the
practical tools implemented for the reactive synthesis problem could
be used to solve sabotage games.

\clearpage

\bibliographystyle{abbrv}

\begin{thebibliography}{10}

\bibitem{ag11}
K.~R. Apt and E.~Gr\"{a}del.
\newblock {\em Lectures in game theory for computer scientists}.
\newblock Cambridge University Press, 2011.

\bibitem{cdh10}
K.~Chatterjee, L.~Doyen, and T.~A. Henzinger.
\newblock Quantitative languages.
\newblock {\em {ACM} Trans. Comput. Log.}, 11(4), 2010.

\bibitem{CHJ05}
K.~Chatterjee, T.~A. Henzinger, and M.~Jurdzinski.
\newblock Mean-payoff parity games.
\newblock In {\em LICS}, pages 178--187. IEEE, 2005.

\bibitem{CHP06}
K.~Chatterjee, T.~A. Henzinger, and N.~Piterman.
\newblock Generalized parity games.
\newblock In {\em {FoSSaCS}}, pages 153--167. Springer, 2007.

\bibitem{em79}
A.~Ehrenfeucht and J.~Mycielski.
\newblock Positional strategies for mean payoff games.
\newblock {\em International Journal of Game Theory}, 8:109--113, 1979.

\bibitem{gr95}
A.~S. Goldstein and E.~M. Reingold.
\newblock The complexity of pursuit on a graph.
\newblock {\em Theor. Comput. Sci.}, 143(1):93 -- 112, 1995.

\bibitem{grt13}
S.~Gr{\"{u}}ner, F.~G. Radmacher, and W.~Thomas.
\newblock Connectivity games over dynamic networks.
\newblock {\em Theor. Comput. Sci.}, 493:46--65, 2013.

\bibitem{SyntComp}
S.~Jacobs, R.~Bloem, R.~Brenguier, R.~Ehlers, T.~Hell, R.~K{\"o}nighofer, G.~A.
  P{\'e}rez, J.-F. Raskin, L.~Ryzhyk, O.~Sankur, M.~Seidl, L.~Tentrup, and
  A.~Walker.
\newblock The first reactive synthesis competition ({SYNTCOMP} 2014).
\newblock Technical Report 1506.08726, arXiv, 2014.

\bibitem{jurdzinski98}
M.~Jurdzi{\'n}ski.
\newblock Deciding the winner in parity games is in {$\UP \cap \coUP$}.
\newblock {\em Information Processing Letters}, 68(3):119--124, 1998.

\bibitem{Kar72}
R.~M. Karp.
\newblock Reducibility among combinatorial problems.
\newblock In {\em Proceedings of a Symposium on the Complexity of Computer
  Computations}, pages 85--103, 1972.

\bibitem{KleRad10}
D.~Klein, F.~G. Radmacher, and W.~Thomas.
\newblock Moving in a network under random failures: A complexity analysis.
\newblock {\em Science of Comp. Prog.}, 77(7-8):940--954, 2012.

\bibitem{kurzen11}
L.~M. Kurzen.
\newblock {\em Complexity in interaction}.
\newblock PhD thesis, Institute for Logic, Language and Computation, 2011.

\bibitem{LodRoh03}
C.~L{\"o}ding and P.~Rohde.
\newblock Solving the sabotage game is {PSPACE}-hard.
\newblock In {\em MFCS}, volume 2747 of {\em LNCS}, pages 531--540. Springer,
  2003.

\bibitem{Mar75}
D.~A. Martin.
\newblock Borel determinacy.
\newblock {\em Annals of Mathematics}, 102(2):363--371, 1975.

\bibitem{Mar98}
D.~A. Martin.
\newblock The determinacy of blackwell games.
\newblock {\em J. Symb. Log.}, 63(4):1565--1581, 1998.

\bibitem{PR89}
A.~Pnueli and R.~Rosner.
\newblock On the synthesis of a reactive module.
\newblock In {\em Proc. 16th Symp. Principles of Programming Languages}, pages
  179--190. ACM, 1989.

\bibitem{sevenster06}
M.~Sevenster.
\newblock {\em Branches of imperfect information: logic, games, and
  computation}.
\newblock PhD thesis, Institute for Logic, Language and Computation, 2006.

\bibitem{SeyTho93}
P.~D. Seymour and R.~Thomas.
\newblock Graph searching and a min-max theorem for tree-width.
\newblock {\em Journal of Combinatorial Theory, Series B}, 1:22--33, 1993.

\bibitem{SC79}
L.~J. Stockmeyer and A.~K. Chandra.
\newblock Provably difficult combinatorial games.
\newblock {\em {SIAM} J. Comput.}, 8(2):151--174, 1979.

\bibitem{Van02}
J.~van Benthem.
\newblock An essay on sabotage and obstruction.
\newblock In {\em Mechanizing Mathematical Reasoning}, volume 2605 of {\em
  LNAI}, pages 268--276. Springer, 2005.

\bibitem{ys10}
M.~Yamashita and I.~Suzuki.
\newblock Characterizing geometric patterns formable by oblivious anonymous
  mobile robots.
\newblock {\em Theor. Comput. Sci.}, 411(26-28):2433--2453, 2010.

\bibitem{Z98}
W.~Zielonka.
\newblock Infinite games on finitely coloured graphs with applications to
  automata on infinite trees.
\newblock {\em Theor. Comput. Sci.}, 200(1--2):135 -- 183, 1998.

\bibitem{ZwiPat96}
U.~Zwick and M.~S. Paterson.
\newblock The complexity of mean payoff games.
\newblock {\em Theor. Comput. Sci.}, 158:343--359, 1996.

\end{thebibliography}

\newpage


\appendix 

\changepage{3cm}{3cm}{-1.5cm}{-1.5cm}{}{-1cm}{}{}{}

\section{Relation with cops and robbers game}\label{CopsAndRobbers}

We observe that the result on safety games is related to the Cops and
Robbers games studied mostly by the graph theoretical community (see,
e.g.,~\cite{ag11} and references therein for a survey).  We remark
that Cops and Robbers games are usually defined as played on the
vertices of undirected graphs.  In~\cite{gr95} it was shown that
several variants of the Cops and Robbers game without helicopters and,
as usual, played on the vertices of an undirected graph, are
\EXP-complete. In contrast, our result implies that the Cops and
Robbers game played on the edges of a graph with $B$ cops, one
\emph{helicopter} and a \emph{slow robber}, i.e., which can traverse
at most one edge per turn, is \EXP-hard. A similar version is studied
in~\cite{SeyTho93}, where they consider helicopters and a \emph{fast
  robber}.  However, the game is played on the vertices of an
undirected graph and the complexity of solving the game is left open
in that paper.  It is easy to lift our results to games where weights
are placed on vertices and no longer on edges by considering line
graphs: in contrast, the other direction from vertices to edges would
have been more difficult, and is not currently known for the best of
our knowledge.

\section{Encoding of quantitative sabotage games in quantitative
  games}\label{sec:qgames-sem}\label{sab2games}

In this section, we give formal definitions of quantitative two-player
games, and show an exponential encoding of \QSGs into these games.
Thereafter, we choose to call $\Min$ and $\Max$ the two players of our
games, to distinguish them from \runner and \saboteur, used in the
main part of this article.

\subsection{Two-player games} 

A weighted arena is a tuple $G=(V_\Min,V_\Max,E,w,v_I)$ with
$V=V_\Min \uplus V_\Max$ a finite set of vertices partitioned into the
set $V_\Min$ of vertices of player $\Min$ and the set $V_\Max$ of
vertices of player $\Max$, $E\subseteq V^2$ is a set of edges,
$w\colon E \to \N$ is a weight function assigning an integer weight to
each edge of the arena, and $v_I\in V$ is an initial vertex. Given a
weight function $w\colon E \to \R$, we write $|w|$ for the greatest
weight in $w$, i.e., $|w|= \max_{e\in E} w(e)$.

Intuitively, the two players $\Min$ and $\Max$ move a token along the
edges of the graph $(V,E)$, starting on vertex $v_I$. When the token
is on a vertex of $V_\Min$, it is $\Min$ that chooses the next vertex,
and when on $V_\Max$, it is $\Max$. To allow them to play infinitely,
we make the assumption that every vertex $v$ has an outgoing edge,
i.e., that there exists $(v,v')\in E$. A strategy for a player is
simply a mapping telling him what to play depending on the
past. Formally, given an arena $G=(V_\Min,V_\Max,E)$, a play is an
infinite sequence of vertices $v_0v_1v_2\cdots\in V^\omega$ such that
$v_0=v_I$, and $(v_i,v_{i+1})\in E$ for all $i\ge 0$. We say that a
prefix $v_0\cdots v_k$ of a play belongs to \Min (respectively, \Max)
if $v_k\in V_\Min$ (respectively, $v_k\in V_\Max$). A strategy for
player $p$ is a mapping $\sigma$ from prefixes of plays belonging to
$p$ to vertices such that $(v_k,\sigma(v_0 \cdots v_k))\in E$ for all
prefix $v_0 \cdots v_k$ belonging to $p$. The outcomes of a strategy
$\sigma$ of player $p$ are all plays $v_0 v_2 \cdots$ such that for
all $v_0\cdots v_k$ with $v_k\in V_p$,
$v_{k+1}= \sigma (v_0\cdots v_k)$.  We write $\outcomes(G)$ the set of
plays in $G$ (we omit $G$ when it is clear from the context),
$\outcomes(\sigma)$ the set of outcomes of a strategy $\sigma$, and
$\outcomes(\sigma_\Min,\sigma_\Max)$ the only play contained in
$\outcomes(\sigma_\Min) \cap \outcomes (\sigma_\Max)$.

Since we are dealing with quantitative game, we use a \emph{value
  function} to map plays to values in
$\overline \R=\R\uplus\{+\infty\}$. A quantitative game is a pair
$(G,f)$ consisting of an arena $G$ and such a value function $f$. Most
standard value functions are defined by using the weights in the
weighted arena: equipped of one of the cost functions
$f\colon \R^\omega \to \overline\R$ described in the main part of the
article ($\inffun$, $\linffun$, $\supfun$, $\lsupfun$, $\mpfun$, or
$\discfun{\lambda}$ for instance), we may define a value function
$f_w$ by setting $f_w (v_0v_2\cdots) = f(w(v_0,v_1) w(v_1,v_2)\cdots)$
for all plays $v_0v_1\cdots$.

In a quantitative game $(G,f)$, the value of a strategy $\sigma_\Min$
(respectively, $\sigma_\Max$) of $\Min$ (respectively, $\Max$) is:
\[ \Val((G,f),\sigma_\Min) = \sup_{\pi\in \outcomes(\sigma_\Min)}
f(\pi), \quad \Val((G,f),\sigma_\Max) = \inf_{\pi\in
  \outcomes(\sigma_\Max)} f(\pi).\]
To characterise the best value that each player can guarantee no
matter what the opponent is doing, we consider the upper value
$\uppervalue$ (the best $\Min$ can hope for) and lower value
$\lowervalue$ (the best $\Max$ can hope for), defined by:
\[\uppervalue(G,f) = \inf_{\sigma_\Min} \Val ((G,f),\sigma_\Min), \quad
\lowervalue(G,f) = \sup_{\sigma_\Max} \Val ((G,f),\sigma_\Max).\]

\begin{proposition}
  In quantitative games, for all the value functions $f_w$ obtained by
  considering the cost functions $f$ used above, upper and lower
  values coincide: we then let
  $\Val(G,f) = \uppervalue(G,f) = \lowervalue(G,f)$ be the value of
  the game.
\end{proposition}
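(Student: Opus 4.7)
The plan is to reduce determinacy of the quantitative game $(G, f_w)$ to Martin's Borel determinacy theorem~\cite{Mar75} applied to an auxiliary qualitative game indexed by a threshold. First I would note the trivial inequality $\lowervalue(G,f) \le \uppervalue(G,f)$: for every strategy $\sigma_\Max$ of $\Max$ and every strategy $\sigma_\Min$ of $\Min$, the unique play $\outcomes(\sigma_\Min,\sigma_\Max)$ belongs both to $\outcomes(\sigma_\Min)$ and to $\outcomes(\sigma_\Max)$, so $\Val((G,f),\sigma_\Max) \le \Val((G,f),\sigma_\Min)$; taking $\sup$ on the left and $\inf$ on the right gives the claim.

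For the converse direction, I would argue by contradiction. Suppose $\lowervalue(G,f) < \uppervalue(G,f)$ and pick a rational $t$ strictly between them. Consider the qualitative game played on $G$ whose winning condition for $\Min$ is the set $W_t = \{\pi \in \outcomes(G) : f_w(\pi) \le t\}$. The space $\outcomes(G)$ is a closed subspace of $V^\omega$ (with $V$ discrete) and is therefore Polish. I would verify, case by case, that $W_t$ is Borel for each of the six cost functions: for $\inffun$, $\supfun$, $\linffun$, $\lsupfun$, and $\mpfun$, $f_w$ is a countable combination of infima, suprema and limits of continuous cylinder functions and is therefore Borel-measurable; for $\discfun{\lambda}$, the series defining $f_w$ converges uniformly, so $f_w$ is even continuous. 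In all cases $W_t = f_w^{-1}((-\infty, t])$ is Borel.

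By Martin's theorem, the qualitative game with objective $W_t$ is determined. So either $\Min$ has a strategy $\sigma^*_\Min$ ensuring that every play in $\outcomes(\sigma^*_\Min)$ lies in $W_t$, i.e.\ satisfies $f_w(\pi) \le t$, or $\Max$ has a strategy $\sigma^*_\Max$ ensuring that every play in $\outcomes(\sigma^*_\Max)$ satisfies $f_w(\pi) > t$. In the first case $\Val((G,f),\sigma^*_\Min) \le t$, giving $\uppervalue(G,f) \le t < \uppervalue(G,f)$; in the second $\Val((G,f),\sigma^*_\Max) \ge t$, giving $\lowervalue(G,f) \ge t > \lowervalue(G,f)$. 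Both are contradictions, hence $\lowervalue(G,f) = \uppervalue(G,f)$.

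The only non-routine step is checking Borel measurability of each $f_w$, which is standard but has to be carried out for each cost function; the rest of the argument is uniform across the six cases and does not require any quantitative determinacy result beyond Martin's theorem. The same scheme is what later justifies determinacy of \QSGs, via their encoding $\sem\game$ into an ordinary quantitative game to which this proposition applies.
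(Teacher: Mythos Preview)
Your argument is correct and follows the same route as the paper: invoke Martin's determinacy after observing that each $f_w$ is Borel-measurable, so that the threshold sets define determined qualitative games. The paper compresses this into a single sentence and cites the Blackwell-games version~\cite{Mar98} rather than~\cite{Mar75}, but the substance---Borel measurability of the six cost functions plus Martin's theorem---is identical; you have simply spelled out the standard threshold-reduction that the paper leaves implicit.
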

\begin{proof} We rely on Martin's determinacy theorem for Blackwell
  games~\cite{Mar98}, since all the cost functions considered are
  Borel measurable.
\end{proof}

\subsection{Encoding of quantitative sabotage games}

Starting from a \QSG $\game = (V,E,B,v_I,\distr_I,f)$, we encode it in
the quantitative two-player game
$\sem\game = ((\VMinSem,\VMaxSem, \allowbreak \edgesSem,
\wSem,\InitSem),f'_\wSem)$ as follows:
\begin{itemize}
\item $\VMinSem = V \times \Delta(\game)$,
  $\VMaxSem= E \times \Delta(\game)$: \Min vertices represent
  configurations of \game (i.e., the vertex of \game currently
  occupied by \runner, together with the current budget distribution),
  and \Max vertices encode the last edge played by \runner in \game
  and again the current budget distribution;
\item
  $\edgesSem = \{ \big ( (v,\distr) , (e,\distr) \big ) \st
  e=(v,v')\in \edges\} \cup \{ \big ( (e,\distr) , (v',\distr') \big )
  \st e=(v,v')\wedge \distr \buRel \distr'\}$;
\item for all $e=(v,v')\in \edges$ and $\distr,\distr'\in \Delta(\game)$ we let
  $\wSem\big((v,\distr), (e,\distr)\big)=\distr(e)$ and
  \begin{itemize}
  \item $\wSem\big((e,\distr),  (v',\distr') \big)=0$ if $f=\discfun{\lambda}$,
  \item $\wSem\big((e,\distr),  (v',\distr') \big)=\distr(e)$ otherwise;
  \end{itemize}
\item $\InitSem = (v_I,\distr_I)$ is the initial configuration;
\item if $f=\discfun{\lambda}$, we let $f'=\discfun{\sqrt{\lambda}}$,
  otherwise $f'=f$.
\end{itemize}

We claim that $\game$ and $\sem\game$ are equivalent, meaning that
they have the same value. The main difference lies in the way costs
are computed. Indeed, consider a pair of consecutive moves from both
players in the original \QSG $\game$, i.e., the traversal of an edge
$e=(v,v')$ by \runner, followed by a budget redistribution
$\distr\buRel\distr'$ by \saboteur. Observe that this pair of moves
incurs a cost of $\delta(e)$ in the original \QSG, but is encoded by
the traversal of \emph{two consecutive edges}
$\big((v,\distr), (e,\distr)\big)$ and
$\big((e,\distr), (v',\distr') \big)$ in $\gameSem$ that have
\emph{both} weight $\delta(e)$ (or weight $\delta(e)$ and then weight
$0$ for the discounted sum case). Observe however that this is not a
problem for the cost functions that we are considering.  Indeed,
$\supfun$, $\inffun$, $\lsupfun$, and $\linffun$ are resistant to
stuttering. The value of the average cost is also consistent, since
both the sum of the visited weights \emph{and} the length of the paths
are doubled in $\gameSem$ with respect to $\game$.  For the discounted
sum, this is taken care of by replacing the original discount factor
$\lambda$ in $\game$ by $\sqrt{\lambda}$ in $\gameSem$.

\section{Proofs of Section~\ref{sec:solving}}

\subsection{\texorpdfstring{Reduction from ABF to $\ESPr$: proof of
    Lemma~\ref{lem:abf-to-extended}}{{Reduction from ABF to ESPr:
      proof of Lemma~\ref{lem:abf-to-extended}}}}
\label{sec:ABF2ESPr}

In this section, we fix an instance of the ABF problem, i.e., a CNF
formula $\Phi$ and an initial configuration
$(\valuation_0, \nplayer_0)$.  We let $N$ be the number of variables
in $\Phi$. We construct an extended \QSG $\game$ such that
\saboteur wins in the extended safety problem over $\game$ if and only
if Prover wins the ABF game. Therefore, \saboteur will act as Prover
while \runner will act as Disprover. As an example of our
construction, we consider the CNF formula of Example~\ref{exa:abf}.
Recall that in extended \QSGs, we allow for the use of safe edges,
i.e., edges where \saboteur cannot put budget, and final vertices,
i.e., vertices where the play ends, with the budget placed on this
vertex taken into account to compute the cost. We present step by step
the vertices contained in $\game$. For every variable $X$, we create
$4$ final vertices
$\Ver(X) = \{\neg X^{(1)}, \neg X^{(2)}, X^{(1)}, X^{(2)}\}$.  We also
create another final vertex called $\alpha$. In the following, we
always assume that edges are safe, unless explicitly stated.

\subsubsection*{Forcing to have at least budget 2 on $\Ver(X)$}
For each variable $X$, and each triplet
$t= \{v_1,v_2,v_3\} \subset \Ver(X)$, we create two vertices,
$t^{(1)}$ and $t^{(2)}$ such that in the graph,
$(t^{(1)}, t^{(2)}), (t^{(2)}, v_i)\in E$ for all $i\in \{1,2,3\}$.
Note that if there is zero budget on the triplet $t$ when \runner
arrives in $t^{(1)}$, then \runner is sure to reach one of the $v_i$
without visiting edges with non-zero budget (and hence win the
game). For all vertices $v$ in the graph (except those in the
initialisation gadget, as we see later), all variables $X$ and all
triplets $t$ as described above, we create an edge $(v, t)$. If at
some point in the game there is a variable $X$ such that there are
less than $2$ vertices in $\Ver(X)$ with a budget on them, then
\runner is sure to win the game. This gadget is depicted in
\figurename~\ref{GadgetGeq2}. In the following, we assume that
\saboteur always place at least 2 units of budget on each
$\Ver(X)$. We also assume that if it is the case, then \runner does
not go on a $t^{(1)}$ vertex (indeed he will be sure to lose the play
if he does so).
\saboteur always places at least 2 units of budget on each $\Ver(X)$. We
also assume that if it is the case, then \runner does not go on a
$t^{(1)}$ vertex (indeed he will be sure to lose the play if he does
so).

The budget in the game is $B=2N+1$. Let $v$ be a vertex that has an
outgoing edge towards $\alpha$. When \runner leaves $v$, in order for
\saboteur not to lose, there must be one unit of budget on $\alpha$
and exactly 2 units of budget on each $\Ver(X)$.

\subsubsection*{Forcing the budget to be well distributed} Now we
present another gadget that allows \runner to force, on some vertices,
that $(\star)$ either $\neg X^{(1)}$ and $\neg X^{(2)}$ have one unit
of budget each, or $X^{(1)}$ and $X^{(2)}$ have one unit of budget
each. To do so, for each pair $p=\{\neg X^{(i)},X^{(j)}\}$, we
construct a vertex $p$ that has two outgoing edges, $(p,\neg X^{(i)})$
and $(p, X^{(j)})$. Let $v$ be a vertex that has outgoing edges toward
each of those $p$ vertices. When \runner leaves $v$, if property
$(\star)$ is not fulfilled, then there is a pair
$p=\{\neg X^{(i)},X^{(j)}\}$ with zero budget on it. By going on $p$,
\runner ensures to reach one of those vertices without budget, hence
to win the game. We let $Check(X)$ be the set of all those vertices
associated with $X$, and we will describe later which vertices have
outgoing edges toward $Check(X)$. This gadget is depicted in
\figurename~\ref{Check(x)}. When \runner leaves a vertex $v$ with an
outgoing edge toward $Check(X)$, we now assume that \saboteur has made
true property $(\star)$ for $X$, so that \runner never goes to
$Check(X)$ (indeed he will be sure to lose if he does so when
$(\star)$ is fulfilled).

Let $v$ be a vertex that is connected to $\alpha$ and to all vertices
of $Check(X)$ for all $X$ (we let
$Check=\bigcup_{X\in \textit{Var}} Check(X)$). When \runner leaves
$v$, in order for \saboteur not to lose, there must be one unit of
budget on $\alpha$, and one unit of budget either on $\neg X^{(1)}$
and $\neg X^{(2)}$, or on $ X^{(1)}$ and $X^{(2)}$, for all $X$. We
call such a configuration a valid one, and remark that there is an
immediate bijection from valid configurations and valuations of the
variables of the CNF formula. We call a \textit{valid vertex} a vertex
connected to $\alpha$ and to $Check$.

\subsubsection*{Initialising the game}\label{initialisation} We add a gadget, at the beginning
of the game, forcing \saboteur to distribute the budget accordingly to
the initial valuation $\valuation_0$ of the ABF game. The gadget works
as follows: \runner crosses $2N+1$ safe edges successively, and then
goes on a vertex $v$ that has edges towards each vertex of the
required initial configuration. \saboteur has the time to put the
required units of budget on this configuration, and is forced to do
so, otherwise \runner would be able to reach a final
vertex. Therefore, we are sure that once this gadget is left to start
playing the game, the configuration is indeed the required one. From
vertex $v$, there is also another safe edge going either to the vertex
$Play$ or to the vertex $set^{(1)}$ (the role of both vertices is
explained later), depending on whether $\nplayer_0$ is Disprover or
Prover, respectively.

\subsubsection*{Structure of the graph} From the CNF formula of
Example~\ref{exa:abf}, we construct the graph depicted in
\figurename~\ref{fig:exa-abfMain}. For the sake of clarity, we omit
the gadgets introduced above. Double bordered vertices represent final
vertices, and double arrows represent safe edges. As stated above,
from all vertices depicted here, gadget of
\figurename~\ref{GadgetGeq2} is used to check that $\Ver(X)$ contains
at least budget 2, for all variables $X$. The subscript in
$\{A,B,C,D,\alpha\}$ on vertices depicts an edge from the vertex to
the corresponding gadget $Check$, or vertex $\alpha$.

\subsubsection*{Saboteur modifies Prover's variables} The two safe
vertices $set^{(1)}$ and $set^{(2)}$ describe Prover's turn to modify
one of its variables. Both vertices have an outgoing edge towards
$\alpha$ ensuring that one pebble is left on it. $set^{(2)}$ is a
valid vertex, and $set^{(1)}$ is connected to $Check(X)$ for all
variables $X$ belonging to Disprover. Finally, there is an edge
$(set^{(1)}, set^{(2)})$ connecting those two vertices.

Let $v$ be a valid vertex with an outgoing edge $set^{(1)}$ and let
$\valuation_1$ be the valuation of variables induced by a valid
configuration at the moment \runner leaves~$v$. If \runner goes to the
vertex $set^{(1)}$ and then to $set^{(2)}$, let $\valuation_2$ be the
valuation induced by the valid configuration at the moment \runner
leaves $set^{(2)}$. We claim that between $\valuation_2$ and
$\valuation_1$, at most one variable of Prover has been
modified. Indeed after \runner has arrived in $set^{(1)}$, \saboteur
cannot remove the budget on $\alpha$, and he cannot take the budget
on some $\Ver(X)$ to put it on another $\Ver(X')$, with $X'\neq X$,
as there would be only budget 1 on $\Ver(X)$ and \runner would
win. Therefore, the only possible move for \saboteur is to
redistribute the budget inside some $\Ver(X)$. Moreover, if $X$
belongs to Disprover after a move, $\Ver(X)$ will not satisfy the
property $(\star)$ and, since $set^{(1)}$ is connected to $Check(X)$,
\runner would win. Therefore, either \saboteur does nothing, or he
redistributes the budget inside some $\Ver(X)$ where $X$ belongs to
Prover. If he has done nothing then after \runner has gone to
$set^{(2)}$, by the same reasoning, and by the necessity that at this
moment the configuration is valid, one can ensure that again \saboteur
does nothing, in which case we would have
$\valuation_2=\valuation_1$. Let us focus on the case where Prover has
performed some redistribution in $\Ver(X)$. Without loss of
generality, assume that when leaving $v$, the budget was placed on
$X^{(1)}$ and $X^{(2)}$, and after leaving $set^{(1)}$ the budget is
on $\neg X^{(1)}$ and $X^{(2)}$. By the same reasoning, we know that
after reaching $set^{(2)}$, \saboteur can only redistribute the budget
inside a $\Ver(X')$ where $X'$ belongs to Prover. Furthermore, if
$X'\neq X$ then, when leaving $set^{(2)}$, $\Ver(X)$ would not satisfy
$(\star)$ and the configuration would not be valid. Therefore
\saboteur can either choose to have the budget on $X^{(1)}$ and
$X^{(2)}$, or on $\neg X^{(1)}$ and $\neg X^{(2)}$, therefore between
$\valuation_2$ and $\valuation_1$ only the valuation of $X$ may have
change.

\subsubsection*{Runner modifies Disprover's variables} From the vertex
$Play$, \runner chooses a variable $X$ of Disprover, and goes either
to $set^{(1)}_{\neg X}$ or to $set^{(1)}_{ X}$: assume without loss of
generality that he goes to $set^{(1)}_{X}$. Those two vertices have
outgoing edges toward $\alpha$ and toward $Check(X')$ for all
$X'\neq X$. Let $\valuation_1$ be the valuation associated with the
valid configuration when \runner leaves $Play$. After arriving in
$set^{(1)}_{X}$, \saboteur can only redistribute the budget inside
$\Ver(X)$. After arriving in $set^{(2)}_{X}$, \saboteur is forced to
reach a valid valuation, therefore if he has modified the budget
distribution in $\Ver(X)$, he must do it again in order for $\Ver(X)$
to satisfy $(\star)$. Furthermore, as $set^{(2)}_{X}$ has outgoing
edges to the two final vertices $X^{(1)}$ and $X^{(2)}$, there must be
a unit of budget on each of those vertices. Therefore, if we let
$\valuation_2$ be the valuation induced by the valid configuration
when \runner leaves $set^{(2)}_{X}$, $\valuation_2$ must be equal to
$\valuation_1$ except possibly for $X$ that must now be true.

\subsubsection*{Verifying a valuation} Before explaining the whole
behaviour of the game, let us describe the verification process. As
$\textit{Verif}$ is a valid vertex, when \runner leaves this vertex, the
configuration is valid: we therefore let $\valuation$ be the valuation
induced by this configuration. We show here that, from the moment
\runner leaves $\textit{Verif}$, \saboteur has a winning strategy if and only
if $\valuation$ satisfies the CNF formula. Let us first describe this
part of the arena.

$\textit{Verif}$ has one outgoing safe edge toward each vertex
$\chi_i$ associated with the eponymous clause. Those vertices are
connected to $Check$. Take a clause
$\chi_i = at_1 \wedge \cdots \wedge at_\ell $. For each strict suffix
of this clause containing at least two atoms, i.e., for each
sub-clause of the form $at_j \wedge \cdots \wedge at_\ell$ with
$1<j<\ell$, create an eponymous vertex. Then $\chi_i$ has a safe edge
toward $at^{(1)}_1$ and a (non safe) edge toward the rest of the
clause, i.e., nothing if $\ell=1$, $at^{(1)}_2$ if $\ell =2$, and the
vertex `$at_2 \wedge \cdots \wedge at_\ell$' if $\ell>2$.  The same
principle applies to the vertex `$at_2 \wedge \cdots \wedge at_\ell$',
etc. For example, take $\chi_1$ in the CNF formula~$\Phi$. The vertex
$\chi_1$ has edges toward $\neg A^{(1)}$ and toward the vertex
`$B \vee \neg C$' which is the rest of the clause. Then the vertex
`$B \vee \neg C$' has an edge toward $B^{(1)}$ and an edge toward
$\neg C^{(1)}$.

Assume first that $\valuation$ satisfies the formula, and let us see
how \saboteur has a winning strategy. When \runner reaches a clause
$\chi_i$, we know that it is true in $\valuation$, i.e., that one of
its atom is true. On the game, this is represented by the fact that
one of the atoms $at_j$ has non-zero budget on the two associated
vertices $at^{(1)}_j$ and $at^{(2)}_j$. For example assume that
\runner goes to $\chi_1$ and that $B$ is true, i.e., there is some
budget on $B^{(1)}$ and $B^{(2)}$. \saboteur will use the budget on
$\alpha$ to guide \runner in direction of this atom. In the example,
when \runner reaches $\chi_1$, \saboteur will put the budget on
$A^{(1)}$, then when \runner will go to `$B\vee \neg C$', \saboteur
will move the same unit of budget on $\neg C^{(1)}$, forcing \runner
to go to $B^{(1)}$. However, as there was already some budget on
$B^{(1)}$, \runner cannot leave `$B\vee \neg C$' without touching
some non-zero budget, and loses the safety game.

On the other hand, assume that $\valuation$ does not satisfy the
formula and let us see how \runner has a winning strategy. As the
valuation does not satisfy the formula, there exists a clause $\chi_i$
that is false. \runner goes to this clause. As it is false, all the
atoms are false, in particular, in the game, for all $at_j$, there is
budget 0 on $at^{(1)}_j$. \runner will have the following
behaviour. If, after reaching $\chi_i$, \saboteur has not put some
budget on $at^{(1)}_1$, then he goes there and wins, otherwise he goes
to the vertex representing the rest of the formula. From there, the
same reasoning applies: if \saboteur has not put some budget on
$at^{(1)}_2$, then \runner goes there and wins, otherwise he reaches
the next sub-clause. At the end, \runner reaches the vertex
`$at^{(1)}_{\ell-1}\vee at^{(1)}_{\ell}$', and whatever \saboteur
does, \runner reaches a final vertex with budget 0.

\subsubsection*{How the game works} When \runner leaves vertex $Play$, the
configuration is valid; once he reaches $set^{(1)}$, the configuration
is valid again, and the difference with the previous one is that the
valuation may have changed for at most one variable.  Once reaching
$Choose$, \saboteur may also have changed the valuation of one of its
variables. When \runner reaches $Choose$, \saboteur can only
redistribute the budget on $\alpha$. One can easily see that he has no
interest in changing the valuation by putting some budget in $\Ver(X)$
for some variable $X$, as at the next step he must put the budget back
on $\alpha$. However, \saboteur can either put the free unit of budget
on the edge $(Choose, Play)$, forcing \runner to go to the
verification part on the game, or put it on the edge
$(Choose, \textit{Verif})$, forcing \runner to remain in the part of
the game where they change the valuation. If \saboteur has a winning
strategy in the ABF game, he will apply it, and once the valuation
satisfies the formula, he will force \runner to go to the verification
part. On the other hand, if the formula is never true, \saboteur is
forced to prevent \runner from going to the verification part
(otherwise \runner would reach a final vertex as seen above), and the
game will last forever, allowing \runner to win.

\begin{example}
  Consider the formula given in Example~\ref{exa:abf}, i.e.,
  $\Phi = \clause_1 \wedge \clause_2 \wedge \clause_3 \wedge
  \clause_4$
  where $\clause_1 = A \vee \neg C$, $\clause_2 = C\vee D$,
  $\clause_3 = C\vee \neg D$ and $\clause_4=B\vee \neg B$.  The
  $\ESPr$ constructed from $\Phi$ is given in
  Figure~\ref{fig:exa-abfMain}.  Notice that besides the variable
  vertices, there is one extra final vertex, $\alpha$. In this
  construction, Saboteur plays the role of Prover, whose variables are
  $C$ and $D$, and Runner the one of Disprover whose variables are $A$
  and $B$.

  For the sake of clarity, edges pointing towards $\alpha$, as well as
  the two gadgets of Figures~\ref{GadgetGeq2} and~\ref{Check(x)} are
  omitted.  Consider that from all vertices but the variable ones, one
  can check condition $(i)$ for all variables, i.e., in order not to
  lose, Saboteur maintain a non-zero budget on at least two vertices
  from $\Ver(x)$ for all variable $x$.  Furthermore, on the bottom
  right corner of nodes are written the variables for which one can
  check condition $(ii)$ and whether there is an outgoing edge
  pointing towards $\alpha$, e.g., when Runner is in vertex
  $\textit{set}^{(1)}$, Saboteur must ensure that $A$ and $B$ satisfy
  condition $(ii)$ and that there is a non-zero budget on $\alpha$. In
  the following, we consider those gadgets as constraints, considering
  that condition~$(i)$ always holds, and for example that if Saboteur
  is in $\textit{set}^{(1)}$ we are sure that $(ii)$ holds in $A$ and
  $B$ and that there is a non-zero budget in $\alpha$.

  If we let $n$ be the number of variables (here $n=4$), let us set
  the budget to $2n+1 = 9$. In this context, each $\Ver(x)$ contains
  $2$ units of budgets, and the remaining unit can be either on
  $\alpha$, on the outgoing edge of $\textit{Choose}$, or on one of
  the variable vertices.
 
  The initialisation gadget ensures that after some preliminary steps,
  Runner reaches vertex $\textit{Play}$, and there is one unit on
  $\alpha$, and for each variable there are exactly two units of
  budget either on $\{\neg x^{(1)}, \neg x^{(2)}\}$ or on
  $\{x^{(1)}, x^{(2)}\}$, depending on the initial configuration of
  the ABF game.
 
  Let us now focus on the upper part of the game. When Runner is on
  vertex $\textit{Play}$, condition $(ii)$ must be satisfied for all
  vertices, and there must be one unit of budget on $\alpha$,
  therefore the budget describes a valuation of the variables, e.g.,
  on $\Ver(A)$ either the two units of budget are on
  $\{\neg x^{(1)}, \neg x^{(2)}\}$ in which case we consider that $A$
  is false, or on $\{x^{(1)}, x^{(2)}\}$ in which case $A$ is
  true. Assume that $A$ is false, and Runner wants to change its
  valuation. Then, he goes to $\textit{set}^{(1)}_A$ where Saboteur
  has the possibility to move one unit of budget in $\Ver(A)$, and
  then he goes to $\textit{set}^{(2)}_A$. In this configuration
  condition $(ii)$ must be satisfied for $A$. Furthermore if the two
  units of budget are still on $\{\neg A^{(1)}, \neg A^{(2)}\}$, then
  Runner wins by going on $A^{(1)}$, thus Saboteur has been force to
  switch the two weights on $\{A^{(1)}, A^{(2)}\}$. Then, a similar
  process allows Saboteur to modify the valuation of one of its
  variables, when Runner goes through $\textit{set}^{(1)}$ and
  $\textit{set}^{(2)}$. Those steps simulate one round of the ABF
  game.
 
  On vertex $\textit{Choose}$, Saboteur may remove the budget on
  $\alpha$ and put it on one of the outgoing edges of Choose, thus he
  can force Runner to go either on $\textit{Play}$ or on
  $\textit{Verif}$. If $\textit{Play}$ is chosen, both players will
  simulate another round of the ABF game. If it is $\textit{Verif}$,
  then Runner goes to the lower part of the game.
 
  In this part, Runner chooses a clause and then Saboteur can move the
  unit of budget that were on $\alpha$. For example, assume that
  Runner chooses $\clause_1$. As there were a unit of budget on
  $A^{(1)}$, Saboteur can take the budget of $\alpha$ to put in on
  $\neg C^{(1)}$ ensuring to win. Observe that the verification part
  of the game ensures that Saboteur wins if, for each clause, at least
  one of the atoms is true. Indeed if it is the case, whatever clause
  is chosen by Runner, Saboteur will be able, as seen above, to
  prevent Runner to play. On the other hand, if there is a clause
  where both atom are false, it means than both outgoing edges point
  towards empty final vertices, therefore whatever Saboteur does on
  the next step, Runner will be able to reach one of them, and thus
  win the game.
\end{example}

\subsection{\texorpdfstring{Reduction from \ESPr to \SPr: proof of
    Lemma~\ref{lem:safe-extended-wlog}}{Reduction from ESPr to SPr:
    proof of Lemma~\ref{lem:safe-extended-wlog}}}

We describe how to transform an extended \QSG into a regular \QSG. The
transformation rids the original sabotage game of its safe edges and
final vertices, and replaces them with corresponding gadgets with the
same properties.
	
Final vertices are replaced by the gadget shown in
\figurename~\ref{finalNodes}.  More formally, all edges incident in a
final vertex are replaced by edges incident on a copy of the
gadget. $A$ is the entry point of the gadget, i.e., any edge pointing
towards the final vertex in the extended \QSG would now lead to
$A$. Vertices $C_1$ and $C_2$ are both connected to $\alpha_i$, for
all $1 \le i \le B + 1$, and the $\alpha_i$'s form a clique of size
$B+1$.  It should be clear that, if \runner reaches one of the
$\alpha_i$, then he can ensure that the value of the play, from then
onwards, is exactly 0. Indeed, as there are $B+1$ outgoing edges, at
least one of them has no budget on it; if \runner crosses this edge,
he reaches another $\alpha_j$ where the same property holds. Thus, one
can easily see that when \runner reaches $A$, he can win if and only
if there is no budget on either one of the edges: $(A,C_1)$,
$(A,C_2)$.

Safe edges can be encoded as follows. Assume that we have a safe edge
$(A,C)$ in the extended \QSG. To encode it in a standard \QSG (with
final vertices, as we have already seen how to encode them), we add
$B+1$ vertices $E_1,\ldots,E_{B+1}$, and $B+1$ final vertices
$F_1,\ldots,F_{B+1}$. We remove the edge $(A,C)$, and add the edges
$(A,E_i)$, $(E_i,F_i)$ and $(E_i,C)$, for all $i\leq B+1$. The gadget
is depicted in \figurename~\ref{safeEdges}. \runner has a strategy to
go from $A$ to $C$ without crossing an edge with non-zero budget, and
forcing \saboteur to move at most one unit of budget inside the
game. That is to say, we have introduced one additional step to get
from $A$ to $C$, but we will see that \saboteur cannot move more than
one unit of budget on edges outside of the gadget, or he loses.
Indeed, when \runner leaves $A$, there must exist $i$ such that there
is no budget on edges $(A,E_i)$, $(E_i,F_i)$, $(E_i,C)$ nor on the
final vertex $F_i$. If \runner goes to $E_i$, \saboteur must take a
unit of budget and put it either on $(E_i,F_i)$ or on $F_i$, otherwise
\runner can reach $F_i$ and win. Now, \runner is able to reach $C$,
and then \saboteur can redistribute the budget as he wants.

\subsection{\texorpdfstring{Reduction from $\ThPr_{lsupfun}(0)$ to
    $\SPr$: proof of Lemma~\ref{lem:safe-to-limsup}}{Reduction from
    the threshold problem for LimSup to SPr: proof of
    Lemma~\ref{lem:safe-to-limsup}}}

Consider an instance $\mathcal{I}$ of the safety problem with
underlying graph $G(\mathcal{I}) = (V,E)$, budget $B$, and a starting
vertex $v_I$.
We build a \QSG $\mathcal G'$ with graph $G'=(V',E')$, initial vertex
$t_1$, budget $B'$, and cost function $\lsupfun$ as follows:
\begin{align*}
  B' &= |E|\,, \\
  V' &= V\cup \{s_i^j \st 1 \leq i \leq B' + 1, 1 \leq j \leq B'\} \cup \{x_m \st B+1 \leq m \leq B'\} \cup \{t_1,t_2\}\\
     &\phantom{{}= V} \cup \{f_k \st 1 \leq k \leq B\} \cup \{e_\ell \st 1 \leq \ell \leq B + 1\}\,, \\
  E' &= E \cup \{(e_i,f_j) \st 1 \leq i \leq B + 1, 1 \leq j \leq B\} \cup \{(f_k,e_k), (f_k,e_{k+1}) \st 1 \leq k \leq B\} \\
     &\phantom{{}= E}     \cup \{(x_\ell,t_m) \st 1 \leq m \leq 2, B+1 \leq \ell \leq B'\} \cup \{(t_m,s_i^{B'}) \st 1 \leq i \leq B' + 1, 1 \leq m \leq 2\} \\
     &\phantom{{}= E}     \cup \{(s_i^j,s_{i'}^{j-1}) \st 1 \leq i,i' \leq B' + 1, 2 \leq j \leq B'\} \cup \{(s_i^1,e_m) \st 1 \leq m \leq 2, 1 \leq i \leq B+1\} \\
     &\phantom{{}= E}     \cup \{(e_m,u) \st 1 \leq m \leq B+1,  (v_I,u) \in E\} \cup \{(u,x_\ell) \st u \in V, B+1 \leq \ell \leq B'\} \\
     & \phantom{{}= E}    \cup \{(e_m,x_\ell) \st 1 \leq m \leq B+1, B+1 \leq \ell \leq B'\}\,.
\end{align*}
Intuitively, the sub-graph of $G'$ defined by the vertices $e_i$ and
$f_j$ form an initial gadget which ensures that \runner can stay out
of $G(\mathcal{I})$ without paying, as long as there is some weight
assigned to edges from $E$. We also add an exit gadget consisting of
the sub-graph of $G'$ defined by the $x_k$ vertices. These allow
\runner to exit from $G(\mathcal{I})$ if \saboteur ``cheats'' by
assigning more weights to edges from $E$ than the original bound $B$.
Both gadgets are linked by a ``safe path'' formed by the vertices
$s_i^j$. Note that we add sufficiently many $s_i^j$ so that, for
\runner, getting from any $s_i^{B'}$ to any $s_j^1$ is always possible
without traversing a weighted edge.

We prove that \runner wins in $\mathcal I$ if and only if
$\Val(\mathcal G') \leq 0$.

Assume first \runner wins $\mathcal I$.  In $\mathcal G'$, he has no
trouble following a path from $t_1$ through the $u_i^j$ until he
arrives on some $u_i^1$ with budget distribution $w_0$ such that
$w_0(u_i^1,e_j) = 0$, for some $1 \le j \le B + 1$, since there are
$B'+1$ vertices at each level of the safe path. On his next turn, he
can then move to such an $e_j$. As long as the budget distribution has
some budget assigned to some edge of $E$, there exists a vertex $e_k$
or $x_{\ell}$ with no budget on either in-edges or out-edges,
respectively. In the first case, \runner can go to such such an $e_k$
via $f_k$ without paying anything. In the second case, \runner can get
to $t_1$ or $t_2$ via $x_\ell$ and repeat the process, all without
paying.
When the budget distribution has no weight assigned to edges of $E$,
\runner can follow his strategy from $\mathcal I$ -- with the
exception that he plays his first move from $e_j$ instead of $v_I$ --
as long as \saboteur keeps at most $B$ budget units on edges of
$E$. When this is no longer the case, say \runner is on a vertex $u$,
with budget distribution $w_1$, that means there are at most $B'-B-1$
budget units on other edges, hence there is a vertex $x_\ell$ such
that $w_1(u,x_\ell) = w_1(x_\ell,t_1) = w_1(x_\ell,t_2) = 0$.  \runner
then moves to $x_\ell$.  On his next turn, he can then move to either
$t_1$ or $t_2$, following an edge with no weight on it.  Then \runner
can restart this strategy.

Assume now, that \saboteur wins $\mathcal I$. From the start of the
game, \runner will have to traverse one $s_i^j$ for all $j$ from $B'$
to $1$.  When \runner is on a vertex $s_i^j$ for $j$ between $B+1$ and
$B'$, \saboteur puts a budget unit on the edge $(x_j,t_1)$ and leaves
it there. Similarly, when \runner is on a vertex $u_i^j$ for $j$
between $1$ and $B$, \saboteur puts a unit of budget back on the edge
$(f_j,e_j)$ and leaves it there.  When \runner finally reaches some
$e_k$, \saboteur passes.  Then, if \runner goes to $f_\ell$ or $x_m$,
\saboteur can assign some budget to $(f_\ell,e_{\ell + 1})$ or
$(x_m,t_2)$ and put it back where it was after \runner's next move,
where he will inevitably cross a weighted edge, then wait until
\runner gets back to some $e_k$.
Alternatively, from $e_k$, \runner can move to a vertex in
$G(\mathcal{I})$. In this case, \saboteur follows
his strategy from $\mathcal I$, using budget units assigned to edges
of the form $(f_k,e_k)$ when needed, until \runner crosses an edge of
$E$ with some weight on it or gets to some $x_\ell$.  In the latter
case, \saboteur can react the same way as if \runner was coming from
$e_k$. In the former case, \saboteur can start putting some weights on
all edges of $E$ until \runner gets to some $x_\ell$.  If \runner
never does, he will pay one at each step, which is enough for
\saboteur.  Otherwise, \runner goes to some $x_\ell$, then to $t_1$ or
$t_2$, where \saboteur can restart his strategy.

\section{Proofs of Section~\ref{sec:static}}

\subsection{\texorpdfstring{Static threshold problem for $\inffun$ and
    $\linffun$ is in \P: proof of
    Lemma~\ref{lem:poly-inf-linf}}{Static threshold problem for Inf
    and LimInf is in PTIME: proof of Lemma~\ref{lem:poly-inf-linf}}}

For $\inffun$, we claim that
$\Valstatic(\game)=\lfloor |\overline E|/B \rfloor$, where
$\overline E$ is the set of edges reachable from $v_I$.  Indeed, for a
given budget distribution $\distr$, \runner simply goes towards the
edge reachable from $v_I$ with the least budget possible; therefore,
\saboteur must place \emph{equal} budget on each such edge. With a
budget $B$, he can ensure $\lfloor |\overline E|/B \rfloor$ on every
edge (some edges may contain a bigger portion of the budget, but some
edges will always have at most $\lfloor |\overline E|/B \rfloor$).
Hence, deciding the static threshold problem for $\inffun$ amounts to
computing the set $\overline E$ (can be done in linear time with a
depth-first-search algorithm), and checking whether
$|\overline E|\leq B \times (T+1)$.

For $\linffun$, we must refine the study by considering strongly
connected components. Precisely, we claim that
$\Valstatic(\game)=\lfloor |\widetilde E|/B \rfloor$, where
$\widetilde E$ is the set of edges reachable from $v_I$ and contained
in a strongly connected component of the graph.  Indeed, for a given
budget distribution $\distr$, \runner simply goes towards a cycle
reachable from $v_I$ containing an edge with the least budget $b$
possible: he will visit infinitely often this edge, ensuring an
inferior limit at most $b$. Such a cycle is included in a strongly
connected component, and reciprocally, every edge of a strongly
connected component is part of a cycle. Hence, \saboteur must secure
\emph{equal} budget on each edge of every strongly connected
components. Then, deciding the static threshold problem for $\linffun$
amounts to computing the set $\widetilde E$ (can be done in linear
time, e.g., with Tarjan's algorithm), and checking whether
$|\widetilde E|\leq B \times (T+1)$.

\subsection{\texorpdfstring{Static threshold problem for $\supfun$,
    $\lsupfun$, $\mpfun$ and $\dsfun$ is \coNP-complete: proof of
    Lemma~\ref{lem:coNP-complete}}{Static threshold problem for Sup,
    LimSup, Avg and DS is coNP-complete: proof of
    Lemma~\ref{lem:coNP-complete}}}\label{app:coNP}

For the membership in $\NP$, we can first guess a budget distribution
$\distr$ (that is of size polynomial), and then compute the value of
the one-player (since player \Max has no choices anymore) quantitative
game $\gameStat{\distr}$, to check if it is greater than~$T$:
computing the value of such a game 
can be done in polynomial time for the four cost functions we
consider (see~\cite{ag11}).

To prove the $\NP$-hardness for cost functions $\lsupfun$ and
$\mpfun$, we give a reduction from the following problem. The \textsc{Feedback
arc set problem} consists in, given as input a directed graph $G =
(\vertices,\edges)$ and a threshold $k \leq |\edges|$, determining whether there
is a set $E'$ of $k$ edges of $G$ such that $(\vertices,\edges \setminus E')$ is
acyclic.
Karp showed in \cite{Kar72} that the feedback arc set problem is
$\NP$-complete.

We now use the feedback arc set problem to prove the results of
\coNP-hardness of the static threshold problem. Let us consider an
instance of the feedback arc set problem, given by a directed graph
$G = (\vertices,\edges)$ and a natural integer $k\le |\edges|$. We
suppose, without loss of generality, the existence of a vertex $v_I$,
without any in-going edges, and linked with an edge to every other
vertex: since $v_I$ is not included in any cycle, the set $E'$ of the
output of the problem has no interest at containing any of the edges
added in this way. 

We then construct a \QSG $\game = (\vertices, \edges,k,v_I,f)$ with
$f\in\{\lsupfun,\mpfun\}$. It is not difficult to show that
$\Valstatic(\game)>0$ if and only if there exists a set $E'$ of $k$
edges of $G$ such that $(\vertices,\edges\setminus E')$ is
acyclic. Indeed, $\Valstatic(\game)>0$ implies that there exists a
distribution $\distr\in\Delta(\game)$ such that for all strategies
$\rho$ of \runner, $f(\overline{\pi^\distr_{\rho,\iota}})>0$. Noticing
that every vertex is reachable from the initial vertex $v_I$, and
considering memoryless strategies of \runner (such that
$\pi^\distr_{\rho,\iota}$ ends with a simple cycle of the graph), we
show that every cycle contains at least one edge with a non-zero
budget. The set $E'=\{e\in E\mid \distr(e)>0\}$ is then a valid output
for the feedback arc set problem. For the reciprocal implication, we
simply assign a budget $1$ to each vertex of the set $E'$.

The result for $\supfun$ and $\dsfun$ is then obtained by a slight
modification of the previous proof. Let
$\underline\game = (\underline V,\underline E,k,v_I,\supfun)$ be the
\QSG obtained from $\game$ by transforming every edge $(v_I,v)$ into a
safe edge (see Lemma~\ref{lem:safe-extended-wlog}). Without loss of
generality, we can now assume that $\distr$ never assigns budget to
the edges $\underline E \setminus E$. We also note that $v_I$ has no
in-going edges so that every play in $\underline\game$ traverses a
safe edge at most once. We claim that $\Valstatic(\underline\game)>0$
if and only if there exists a set $E'$ of $k$ edges of $G$ such that
$(\vertices,\edges\setminus E')$ is acyclic. Indeed, if
$\Valstatic(\underline\game) > 0$, considering
$E'=\{e\in E\mid \distr(e)>0\}$, it is easy to show that
$(\vertices,\edges\setminus E')$ is acyclic: if not, \runner may
simply jump from $v_I$, with a safe edge, to one of the vertices of a
cycle of $(\vertices,\edges\setminus E')$, and then loop in this cycle
forever, without visiting any edge with non-zero budget. For the
reciprocal implication, again, it suffices to assign a budget $1$ to
each vertex of $E'$. The result for $\discfun{\lambda}$ follows from
the same reduction together with Lemma~\ref{lem:sup-to-ds}.

\section{Towards more expressive sabotage
  games}\label{sec:more-expressive}

In this section we increase the expressiveness of the definition of
sabotage games, and show that the threshold problem for these new
games are still in \EXP. The lower bound is immediate since they are
extensions of previous problems shown \EXP-hard in the rest of the
article.

One can see a sabotage game as a system in which a controller tries to
evolve while avoiding as much as possible the weights put by
Saboteur. The vertices of the graph represent configurations of the
system, edges represent the actions, and the budget of the Saboteur
may represent several problems that can occur during the
execution. For example, it may describe a number of failures that can
happen at the same time, or in a much quantitative way, it may
represent how much some elements of the systems are overload, and then
how much it would cost, in terms of time or energy, to use them.

We propose to look at sabotage as a particular semantics of systems.
Based on the observation of Appendix~\ref{sab2games}, remember that
one can define the semantics of a \QSG
$\game = (V,E,B,v_I,\distr_I,f)$ as a quantitative two-player game
$\sem\game$. If we split the model (the graph $G=(V,E)$ with initial
vertex $v_I$), from the sabotage parameters (budget $B>0$, initial
distribution $\distr_I\in\Delta(E,B)$, and cost function $f$), we can
define:
\[\sem{G}_{B,\distr_I,f} = \sem{(V,E,B,v_I,\distr_I,f)}\,.\]
We have seen that the value of the \QSG $(V,E,B,v_I,\distr_I,f)$ is
identical to the value of the quantitative two-player game
$\sem{G}_{B,\distr_I,f}$.

From a model point of view, graphs---which can be viewed as one-player
games with trivial winning conditions---are quite limited. In more
realistic models, we may be interested as modelling systems with
uncontrollable actions (i.e., as two-player games), and where the
controller has a specific Boolean goal to achieve, instead of simply
visiting the graph \emph{ad vitam \ae{}ternam}. A more realistic goal
is usually expressed via LTL formulas, that can be modelled into
qualitative games with parity winning conditions, as we show in the
following.

\subsection{Qualitative two-player games}
As a complement of the quantitative two-player games defined in
Appendix~\ref{sec:qgames-sem}, we now focus on \emph{qualitative}
two-player games games. Consider a weighted arena
$G=(V_\Min,V_\Max,E,w,v_I)$ as before. In the qualitative setting, we
are no longer interested in associating a value to each play (in
particular, the weight function $w$ is of no use here), but simply
stating whether a play is winning or not for a player. Formally, a
\emph{winning condition} is a subset of $V^\omega$ containing the set
of winning plays. A qualitative game is a pair $(G,C)$ consisting of
an arena $G$ and a winning condition $C$. A play $\pi$ is declared
winning for $\Min$ (respectively, for $\Max$) if $\pi\in C$
(respectively, $\pi\not\in C$). A strategy $\sigma_p$ of player $p$ is
winning for $p$ if all plays $\pi\in \outcomes(\sigma_p)$ are winning;
a play/strategy is losing for player $p$ otherwise. We say that player
$p$ wins (respectively, loses) the game if he has (respectively, does
not have) a winning strategy. Here are some usual winning conditions
considered widely in the literature:
\begin{itemize}
\item for all $F\in V$ or $F\in E$, $\textit{Reach}(F)$ is the set of
  plays that contain an occurrence of $F$.
\item for all $F\in V$ or $F\in E$, $\textit{Safe}(F)$ is the set of
  plays that do not contain any occurrence of $F$.
\item for all $F\in V$ or $F\in E$, $\textit{B\"uchi}(F)$ is the set
  of plays that contain infinitely many occurrences of $F$.
\item for all $F\in V$ or $F\in E$, $\textit{coB\"uchi}(F)$ is the set
  of plays that contain only finitely many (possibly none) occurrences
  of $F$.
\item for all $\Col\colon V \to \N$ (such mapping is called a
  \emph{colouring function}), $\textit{Parity}(\Col)$ is the set of
  plays $v_0v_1\cdots$ such that the greatest colour appearing
  infinitely often in the sequence $\Col(v_0), \Col(v_1),\ldots$ is
  even. Given a colouring function $\Col$, we let $|\Col|$ be the
  number of different colours of the vertices, i.e.,
  $|\Col| = | \{ \Col(v)\mid v\in V\}|$.
\item for all value function $f$, and $T\in \R$,
  $\textit{Threshold}^\leq(f,T)$ is the set of plays $\pi$ such that
  $f(\pi) \leq T$.
\end{itemize}

\begin{proposition}
  Qualitative two-player games with all winning conditions considered
  above are determined, i.e., one player is winning if and only if his
  opponent is losing.
\end{proposition}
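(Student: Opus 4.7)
The plan is to invoke Martin's Borel determinacy theorem~\cite{Mar75}, which asserts that every two-player perfect-information game whose winning condition is a Borel subset of $V^\omega$ (equipped with the product of discrete topologies) is determined. Since $V$ is finite and both players observe the history perfectly, it suffices to check that each of the winning conditions listed in the proposition gives rise to a Borel set of plays.

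I would handle the qualitative conditions first. For $F \subseteq V$ (or $F \subseteq E$, treating edges via pairs of consecutive vertices), $\textit{Reach}(F)$ is a countable union of cylinders determined by finite prefixes, hence open; $\textit{Safe}(F)$ is its complement, hence closed. $\textit{B\"uchi}(F) = \bigcap_{n\ge 0}\bigcup_{k\ge n} \{\pi \st \pi_k\in F\}$ is $G_\delta$, and $\textit{coB\"uchi}(F)$ is its complement, hence $F_\sigma$. For a colouring $\Col\colon V\to \N$, $\textit{Parity}(\Col)$ can be written as a finite Boolean combination of $\textit{B\"uchi}$ and $\textit{coB\"uchi}$ conditions (one per even colour, stating that some even colour occurs infinitely often while no larger odd colour does), and is therefore Borel. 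Finally, for each cost function $f\in\{\inffun,\supfun,\linffun,\lsupfun,\mpfun,\discfun{\lambda}\}$ considered in the paper, $f\colon \R^\omega \to \overline\R$ is a pointwise limit (or pointwise infimum/supremum) of continuous functions of finitely many coordinates, hence Borel measurable; composing with the measurable map $\pi\mapsto (w(\pi_0\pi_1), w(\pi_1\pi_2),\dots)$ gives a Borel measurable function on $V^\omega$, so the threshold set $\textit{Threshold}^{\le}(f,T) = f^{-1}((-\infty,T])$ is Borel.

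Once each winning condition is certified Borel, Martin's theorem yields, for every qualitative game $(G,C)$ with $C$ Borel, the existence of a winning strategy for exactly one of the two players, which is precisely the statement of determinacy claimed in the proposition: player \Min wins iff player \Max has no winning strategy.

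The only mildly delicate step is the explicit Borel classification of the cost-function-based conditions (in particular $\mpfun$ and $\discfun{\lambda}$, which depend on the entire infinite sequence); but once one writes $\mpfun$ as a $\liminf$ of continuous finite averages and $\discfun{\lambda}$ as a uniformly convergent series of continuous projections, measurability is immediate, and no deeper descriptive set theory is needed beyond Martin's theorem itself. No player-specific structure (colouring, parity, threshold values) plays any role beyond ensuring Borelness, which keeps the argument uniform across all listed winning conditions.
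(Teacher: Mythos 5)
Your proposal is correct and follows essentially the same route as the paper: the paper's proof is a one-line appeal to Martin's Borel determinacy theorem~\cite{Mar75} on the grounds that all the listed objectives are Borel. You simply make explicit the Borel classification (open/closed/$G_\delta$/$F_\sigma$/Boolean combinations, and measurability of the cost functions for the threshold conditions) that the paper leaves implicit.
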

\begin{proof}
  Martin's determinacy theorem~\cite{Mar75} applies here since all the
  above mentioned objectives are Borel sets.
\end{proof}

\subsection{Sabotage in  parity games}
In order to apply a sabotage semantics to qualitative game, where
$\Min$ wants to satisfy a condition while minimising a cost, one must
study some mixture between qualitative and quantitative aspects.  We
see how one can combine winning conditions and value functions, as
introduced in \cite{CHJ05}. Intuitively, in a weighted arena
$G=(V_\Min,V_\Max,E,w,v_I)$ with a winning condition $C$ and a value
function $f$, $\Min$ could want to satisfy $C$ while minimising
$f$. We formalise this by building a new value function, denoted by
$C\wedge f$, and defined by $C\wedge f (\pi) = +\infty$ if
$\pi\not\in C$, and $C\wedge f (\pi) = f(\pi)$ otherwise. The
quantitative two-player game $(G,C\wedge f)$ now contains the
combination of both objectives.

We may finally introduce a sabotage semantics for parity games.
Instead of deciding whether a player has a winning strategy, which
would be a standard semantics, we decide whether he has a winning
strategy that guarantees (or simply avoids in the case of a threshold
$0$) a certain threshold over the quantity of penalties when the game
is subject to failures.

Formally, given a two-player parity game
$G= ( (V_\Min,V_\Max, E, v_I), \Parity(\Col) )$, a budget $B>0$, an
initial distribution $\distr_I\in\Delta(E,B)$, and a cost function
$f\in\{\inffun,\linffun,\supfun,\lsupfun,\mpfun\}$, the
\emph{$B,\distr_I,f$-sabotage semantics of $G$} is the quantitative
game
\[\sem{G}_{B,\distr_I,f}= ((V'_\Min, V'_\Max, E', w,\InitSem), \Parity(\Col') \wedge f_w
)\,,\] where :
\begin{itemize}
\item $V'_\Min = V_\Min \times \Delta(E,B)$, and
  $V'_\Max = ( V_\Max \times \Delta(E,B) ) \uplus (E \times
  \Delta(E,B))$:
  with respect to the one-player case of
  Appendix~\ref{sec:qgames-sem}, we add some vertices to player \Max
  that has now in charge both the moves of the environment
  (uncontrollable actions), and redistributions of \saboteur;
\item
  $E' = \{ \big ( (v,\delta),(e,\delta) \big ) \mid e=(v,v')\in E\land
  \delta \in \Delta(E,B)\} \cup \{ \big ( (e,\delta), (v', \delta')
  \big )\mid e=(v,v')\in E\land \delta \buRel \delta'\}$;
\item
  $w \big( (v,\delta),(e, \delta)\big ) = w \big ( (e,\delta),(v',
  \delta')\big ) = \delta(e)$;
\item $\InitSem=(v_I,\distr_I)$ is the initial configuration;
\item
  $\Col' \big ( (v,v'), \delta \big ) = \Col' ( v',\delta ) =
  \Col(v')$.
\end{itemize}
To simplify our study, we do not consider the discounted-sum in this
section. 
The threshold problem, describing the cost that player $\Min$ can
ensure, is then defined as previously.

\begin{definition}[Threshold problem for cost function $f$]~\\
  \textsc{Input:} A parity game $(G,\Parity(\Col))$, a budget $B$, an
  initial distribution $\distr_I\in\Delta(E,B)$, and a threshold $T$,\\
  \textsc{Output:} Is there a strategy $\sigma$ of $\Min$ such that
  $\Val(\sem{G}_{B,\distr_I,f},\sigma)\leq T$?
\end{definition}

We are able to show that, even with the extension, the threshold
problem stays in \EXP.
\begin{theorem}\label{ComplexityParitySabotage}
  The threshold problem for cost functions $\inffun$, $\linffun$,
  $\supfun$, $\lsupfun$ and $\mpfun$ is in $\EXP$.
\end{theorem}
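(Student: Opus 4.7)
The plan is to reduce the threshold problem on the sabotage parity game to a standard quantitative two-player parity game on the exponential-size encoding $\sem{G}_{B,\distr_I,f}$, and then invoke known pseudo-polynomial algorithms on that larger game.

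First, I would construct the arena $\sem{G}_{B,\distr_I,f}$ exactly as described at the end of the excerpt. The same analysis carried out in Section~\ref{sec:solving} for $\sem\game$ applies here: $|\Delta(E,B)|\leq(B+1)^{|E|}$ is singly exponential in the input size (since $B$ is given in binary), so the arena has singly exponentially many vertices and edges, and the edge weights are bounded by $B$, which is also singly exponential. Crucially, the priority function $\Col'$ inherits its range from $\Col$, so the number of priorities $d=|\Col'|$ remains polynomial in the input.

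Next, I handle each cost function in turn. For $f=\mpfun$, the combined objective $\Parity(\Col')\wedge f_w$ is exactly the mean-payoff parity objective of Chatterjee, Henzinger and Jurdzinski~\cite{CHJ05}, who provide an algorithm running in time polynomial in the number of vertices, the maximum absolute weight and $n^{d}$ where $n$ is the vertex count and $d$ the number of priorities. Since here $n$ and the weights are both singly exponential while $d$ stays polynomial, the overall running time is singly exponential. For $f\in\{\supfun,\lsupfun,\inffun,\linffun\}$, the threshold predicate $f_w(\pi)\leq T$ is itself $\omega$-regular and expressible with at most three priorities: a safety (respectively co-Büchi) condition for $\supfun$ (respectively $\lsupfun$), and a reachability (respectively Büchi) condition for $\inffun$ (respectively $\linffun$), obtained by marking each edge of $\sem{G}_{B,\distr_I,f}$ according to whether its weight exceeds $T$. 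A standard product construction then combines this condition with $\Parity(\Col')$ into a single pure parity game on an arena of the same asymptotic size and with at most $d+2$ priorities. Solving this by Zielonka's algorithm~\cite{Z98} takes time polynomial in the (exponential) vertex count and exponential only in the (polynomial) number of priorities, hence singly exponential overall.

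The only technical subtlety worth checking is that the product encodings interact correctly with the $+\infty$ convention of $C\wedge f$: any play that violates $\Parity(\Col')$ must be excluded from ever satisfying the threshold, and conversely any play that already satisfies $\Parity(\Col')$ should be compared to $T$ only via $f_w$. Both are naturally enforced by the product parity objective, so no new algorithmic idea is needed beyond those of~\cite{CHJ05,Z98} and the encoding already developed for the non-parity case. The main obstacle is therefore administrative rather than conceptual: carefully bookkeeping the three sources of exponential blow-up (binary budget, distribution space, product construction) while ensuring that each remains singly exponential when combined.
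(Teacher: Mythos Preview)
Your proposal is correct and follows essentially the same route as the paper: build the exponential-size semantics game $\sem{G}_{B,\distr_I,f}$, invoke~\cite{CHJ05} for the $\mpfun$ case, and for $\inffun/\supfun/\linffun/\lsupfun$ rewrite the threshold condition as a reachability/safety/B\"uchi/co-B\"uchi objective to be conjoined with the parity condition. The only cosmetic difference is that for the B\"uchi and co-B\"uchi cases the paper appeals to the generalized-parity result of~\cite{CHP06} rather than the direct product into a single parity game you sketch; both yield the same $\EXP$ bound, and the paper packages the whole argument as a separate proposition bounding the complexity of $\Parity\wedge f_w$ games before instantiating it on $\sem{G}_{B,\distr_I,f}$.
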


To prove this theorem, we first establish a crude (but sufficient)
upper bound on the complexity of solving quantitative games obtained
by combining parity winning conditions and the previous cost
functions.

\begin{proposition}\label{SolvingQuantPar}
  There exists three polynomials $p_1,p_2,p_3$ such that we can decide
  the threshold problem of any quantitative game
  $(G,\Parity(\Col)\wedge f_w)$ with
  $f\in \{\inffun,\linffun,\supfun, \lsupfun,\mpfun\}$ with a
  complexity in $O\big(p_1(|w|)\cdot p_2(|V|)^{p_3(|\Col|)}\big)$.
\end{proposition}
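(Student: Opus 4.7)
The plan is to treat each cost function $f \in \{\inffun, \linffun, \supfun, \lsupfun, \mpfun\}$ separately and, for each, reduce the threshold problem for $(G, \Parity(\Col) \wedge f_w)$ to either a plain parity game or a mean-payoff parity game on an arena of size polynomial in $|V|$ with $O(|\Col|)$ priorities and weights bounded by $|w|$. The building block will be the classical fact (Zielonka's algorithm~\cite{Z98}) that parity games with $n$ vertices and $d$ priorities can be solved in $n^{O(d)}$, which already has the shape $p_2(|V|)^{p_3(|\Col|)}$ required by the statement.

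For the four ``Boolean-type'' cost functions, the threshold constraint can be expressed as a standard $\omega$-regular objective whose conjunction with parity is itself a parity condition (up to a constant additive blow-up in the number of priorities and a linear blow-up of the arena). Concretely, $\supfun(\pi) \le T$ is a safety objective on the set $F = \{e \in E \mid w(e) > T\}$, handled by deleting $F$ from the arena and solving parity on what remains; $\lsupfun(\pi) \le T$ is a coB\"uchi condition on $F$; $\inffun(\pi) \le T$ is a reachability objective on $F' = \{e \in E \mid w(e) \le T\}$; and $\linffun(\pi) \le T$ is a B\"uchi objective on $F'$. In each case a standard product construction (or a direct re-colouring of vertices) yields an equivalent parity game whose size is $O(|V| + |E|)$ and whose priority count is in $O(|\Col|)$, so that Zielonka's algorithm already gives the desired complexity, with $p_1 \equiv 1$.

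The only genuinely quantitative case is $\mpfun$, and here the threshold problem for $(G, \Parity(\Col) \wedge \mpfun_w)$ is precisely the \emph{mean-payoff parity game} studied by Chatterjee, Henzinger and Jurdzi\'nski~\cite{CHJ05}, who provide an algorithm whose running time is pseudo-polynomial, of the form $|V|^{O(|\Col|)} \cdot p_1(|w|)$ for a suitable polynomial $p_1$. Invoking their result directly yields the bound in the proposition for this cost function, and this bound dominates those obtained for the other four cost functions. The main technical obstacle is thus isolated in the mean-payoff case, where the algorithmic machinery of~\cite{CHJ05} is essential; the other reductions are elementary folklore constructions.

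Once Proposition~\ref{SolvingQuantPar} is established, Theorem~\ref{ComplexityParitySabotage} follows by applying it to the quantitative game $\sem{G}_{B, \distr_I, f}$ defined at the beginning of this section: its vertex set has size at most exponential in $|V|$ (because $|\Delta(E,B)| \le (B+1)^{|E|}$), its colouring function $\Col'$ still uses $|\Col|$ priorities, and its weights are bounded by $B$ (which, being encoded in binary, is at most exponential in the input). Substituting these bounds into $p_1(|w|) \cdot p_2(|V|)^{p_3(|\Col|)}$ produces a single exponential in the size of the input, yielding the claimed $\EXP$ upper bound.
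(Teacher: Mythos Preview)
Your proposal is correct and follows essentially the same route as the paper: invoke~\cite{CHJ05} directly for $\mpfun$, and for the other four cost functions rewrite the threshold constraint as a Reach/Safe/B\"uchi/coB\"uchi objective on $\{e\mid w(e)\leq T\}$ and solve its conjunction with $\Parity(\Col)$ qualitatively. The one place where the paper is more careful is the B\"uchi/coB\"uchi step: rather than asserting a product of size $O(|V|+|E|)$ with $O(|\Col|)$ colours (which is not quite right for $\Parity\cap\textit{B\"uchi}$, where a plain recolouring fails and the natural memory construction---record the maximal priority seen since the last B\"uchi visit---gives $O(|V|\cdot|\Col|)$ vertices), the paper views B\"uchi and coB\"uchi as two-colour parity conditions and appeals to the generalised-parity algorithm of~\cite{CHP06}, obtaining $O((|V|+2)^{5(|\Col|+2)^2})$; either way the bound has the required shape $p_2(|V|)^{p_3(|\Col|)}$, so your conclusion stands.
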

\begin{proof} We start with the case $f=\mpfun$. In~\cite{Z98,ag11} it
  has been shown that one can decide who wins in a qualitative game
  with a parity condition with a complexity in $O(|V|^{2+|\Col|})$.
  In~\cite{ZwiPat96}, it has been shown that one can compute the value
  of a quantitative game with an average cost function with a
  complexity in $O(|w|\cdot|V|^5)$. The combination has been studied
  thoroughly in~\cite{CHJ05}. There, it has been shown that if one can
  solve average cost games in $c_1(|V|,|w|)$ and parity games in
  $c_2(|V|,|\Col|)$, then one can solve games
  $(G,\Parity(\Col)\wedge \mpfun_w)$ with a complexity in
  $O(|V|^{|\Col|}(|V|^2+c_1(|V|,|w|)+c_2(|V|,|\Col|))$. By combining
  this result with the two above, we obtain a complexity in
  $O((|w|+1) |V|^{5+|\Col|})$.

  We then turn to the case
  $f\in\{ \inffun,\supfun,\linffun,\lsupfun\}$. Our proof goes by
  encoding $f$ into a qualitative winning condition, and then using
  classical results of algorithmic game theory. Observe that deciding
  the threshold problem in the game $(G,\Parity(\Col)\wedge f_w)$
  amounts to solving the following problem:
  \begin{center}
    \begin{minipage}{0.9\linewidth}
      \textsc{Input:} A weighted arena $G=(V_\Min,V_\Max,E,w,v_I)$, a
      colouring function $\Col$, a threshold $T\in \mathbf Q$\\
      \textsc{Output:} Does $\Min$ have a winning strategy in
      $(G,\Parity(\Col)\cap \textit{Threshold}^\leq(f_w,T))$?
    \end{minipage}
  \end{center}
  
  The crucial remark is that, if we let
  $F=\{ e\in E \mid w(e) \leq T\}$, we can rewrite the threshold sets
  for all payoff functions as follows:
  \begin{align*}
    &\textit{Threshold}^\leq(\inffun_w,T) = \textit{Reach}(F)\,, 
    && \textit{Threshold}^\leq(\linffun_w,T) = \textit{B\"uchi}(F)\,,\\
    &\textit{Threshold}^\leq(\supfun_w,T) = \textit{Safe}(E\setminus
      F)\,, 
    && \textit{Threshold}^\leq(\lsupfun_w,T) = \textit{coB\"uchi}(E
       \setminus F)\,.
  \end{align*}

  Notice that $F$ is a subset of edges, and not vertices. However, it
  is easy to transform the problem into an equivalent problem where
  $F$ is indeed a subset of edges. 
  Informally, it suffices to enrich the vertex set by letting
  $V'= V\times\{0,1\}$, letting $(v_I,0)$ the initial vertex instead
  of $v_I$, and replacing each edge $(v,v')\in E$ by the set of edges
  in $E'$:
  \begin{itemize}
  \item $\{((v,0),(v',0)), ((v,1),(v',0))\}$ if $(v,v')\not\in F$;
  \item $\{((v,0),(v',1)), ((v,1),(v',1))\}$ if $(v,v')\in F$.
  \end{itemize}
  Then, letting $F'= V\times \{1\}$ and
  $\Col'(v)=\Col'(v,0) = \Col'(v,1) = \Col(v)$, allows us to keep
  track, in the vertices, of whether the last seen edge is in $F$ or
  not.

  Now to conclude the proof, we describe two polynomials $p_1$ and
  $p_2$ such that deciding if \Min can win qualitative problem with a
  winning condition obtained by the intersection of a parity condition
  and another one from
  $\{\textit{Reach}(F),\textit{B\"uchi}(F),\textit{Safe}(F),\textit{coB\"uchi}(F)\}$
  with $F\subseteq V$ can be done with a complexity in
  $O(p_1(V)^{p_2(|\Col|)})$. We let $G=(\VMin,\VMax,E,v_I)$ the arena on
  which we play (the weight function is of no use anymore).

  For $\textit{Reach}(F)$ (respectively, $\textit{Safe}(F)$), one can
  construct in polynomial time a parity game
  $(G'=(\VMin',\VMax',\allowbreak E',v_I), \Parity(\Col'))$ such that
  $V\subseteq V'$, $|\Col'|\leq |\Col|+1$, and \Min wins in
  $(G', \Parity(\Col'))$ if and only if \Min wins in
  $(G,\Parity(\Col)\cap \textit{Reach}(F) )$ (respectively,
  $(G,\Parity(\Col)\cap \textit{Safe}(F))$). For $\textit{Safe}(F)$,
  we remove for each vertex $v\in F$ every outgoing edge in $F$, and
  add a self loop, colouring them with an odd colour. Hence, if a play
  reaches such a vertex the play is losing, and otherwise, it is
  winning if and only if the greatest colour seen infinitely often is
  even. For $\textit{Reach}(F)$, we create two copies of the game. In
  the first copy, every colour is odd, and for every vertex $v$ in $F$
  the outgoing edges are modified to go to the same target but in the
  second copy. In the second copy, nothing is changed. The play start
  in the first copy. In order to win, $\Min$ must go to the second
  copy (otherwise the colour will always be odd), i.e., must reach a
  vertex in $F$ and then the greatest colour seen infinitely often
  must be even. As we have seen above parity games can be solved
  $O(|V|^{2+|\Col'|})$ which concludes the proof for these cases.

  For $\textit{B\"uchi}(F)$ and $\textit{coB\"uchi}(F)$, there exist
  two colouring functions $\Col'$ and $\Col''$ such that:
  $|\Col'|=|\Col''|=2$, $\textit{B\"uchi}(F) = \Parity(\Col')$, and
  $\textit{coB\"uchi}(F) = \Parity(\Col'')$. Indeed, for $v\in F$,
  simply consider $\Col'(v)=2$ and $\Col''(v)=1$, and for $v\not\in F$
  consider $\Col'(v)=1$ and $\Col''(v)=0$. Therefore solving a game
  with a winning condition of the form
  $\Parity(\Col)\cap\textit{B\"uchi}(F)$ or
  $\Parity(\Col)\cap\textit{(co)B\"uchi}(F)$ can be turned into
  solving a game with a winning condition of the form
  $\Parity(\Col)\cap \Parity(\Col')$ with $|\Col'|\leq 2$. Such games
  have been studied in~\cite{CHP06}. They have shown that they can be
  solved with a complexity in $O(|V|+2)^{5(|\Col|+2)^2})$, which
  concludes the proof of the proposition. 
\end{proof}

We can finally establish the complexity of solving sabotage parity games.

\begin{proof}[Proof of Theorem~\ref{ComplexityParitySabotage}]
  From a parity game $(G,\Parity(\Col))$, a budget $B$, an initial
  distribution $\distr_I$, and a threshold $T$, one can construct
  $\sem{G}_{B,\distr_I,f}=((V'_\Min, V'_\Max, E',
  w,\InitSem), \Parity(\Col') \wedge f_w )$
  in exponential time. Proposition~\ref{SolvingQuantPar} shows that we
  can decide who wins from $(v_I,\distr_I)$ in this game with a
  complexity in $O(p_1(|w|)\cdot p_2(|V'|)^{p_3(|\Col'|)})$.

  We have $|w|=B$ and $|\Col'|=|\Col|$. Furthermore
  $|V'| \leq |E| \times |\Delta(B,E)|$.  As
  $|\Delta (B,E)| \leq B^{|E|}$ and $|E|\leq |V|^2$, we have
  $|V'| \leq |V|^{2}\times B^{|V|^2}$. Since $B$ is given in binary,
  we can suppose that $B$ is at most exponential in the size of the
  input of the problem, which, in summary, shows that we can solve the
  threshold problem in exponential time. 
\end{proof}

\subsection{Sabotage semantics on LTL games}
The linear temporal logic (LTL) is a logic whose formulas describe
properties of infinite sequences of predicate. More formally, given a
game arena $G$, a mapping $\textit{Pred}$ from vertices to a set of
predicate $P$ and an LTL formula $\phi$, the winning condition
$\phi(\textit{Pred})$ is the set of plays $v_0 v_1\cdots$ such that
the sequence $\textit{Pred}(v_0)\textit{Pred}(v_1) \cdots$ satisfies
$\phi$.

Solving LTL-games, with their standard semantics, is already
2-$\EXP$-complete~\cite{PR89}.  The 2-$\EXP$ membership can be
obtained by turning an LTL formula $\phi$ into a parity automaton
whose size is doubly exponential in the size of $\phi$, and solving
the parity game obtained by taking the product of the game arena with
the automaton.

When applying a sabotage semantics to an LTL game $\game$, we obtain a
game $\sem\game$ of size exponential in the initial arena, and whose
value function is a combination of a cost function and the LTL
formula.  By applying the same method as above, using the parity
automaton associated with the formula and taking the product of the
automaton with $\sem\game$, we obtain a game whose size is doubly
exponential in the size of $\game$, and whose value function is a
combination of a cost function and a parity objective. Applying the
above result, one can show that this game can be solved in 2-$\EXP$
with respect to the size of $\game$.

As the standard semantics is equivalent to a sabotage semantics with
budget $0$, the problem remains 2-$\EXP$-hard, and thus
2-$\EXP$-complete.

\end{document}